\documentclass{l4dc2025}

\usepackage{amsfonts, bbm, empheq, enumitem, float, mathrsfs}

\numberwithin{equation}{section} 
\numberwithin{theorem}{section}

\let\set\relax %
\DeclarePairedDelimiter\set\{\}
\newcommand\E{\mathbb{E}} 
\newcommand\R{\mathbb{R}}

\newcommand\1{\mathbbm{1}}
\newcommand\N{\mathbbm{N}}
\renewcommand*{\P}{\mathbb{P}}
\def\ddefloop#1{\ifx\ddefloop#1\else\ddef{#1}\expandafter\ddefloop\fi}

\def\ddef#1{\expandafter\def\csname #1hat\endcsname{\ensuremath{\widehat{\csname #1\endcsname}}}}
\ddefloop {alpha}{beta}{gamma}{delta}{epsilon}{varepsilon}{zeta}{eta}{theta}{vartheta}{iota}{kappa}{lambda}{mu}{nu}{xi}{pi}{varpi}{rho}{varrho}{sigma}{varsigma}{tau}{upsilon}{phi}{varphi}{chi}{psi}{omega}{Gamma}{Delta}{Theta}{Lambda}{Xi}{Pi}{Sigma}{varSigma}{Upsilon}{Phi}{Psi}{Omega}{ell}\ddefloop

\def\ddef#1{\expandafter\def\csname #1scr\endcsname{\ensuremath{\mathcal{#1}}}}
\ddefloop ABCDEFGHIJKLMNOPQRSTUVWXYZ\ddefloop

\def\ddef#1{\expandafter\def\csname #1cal\endcsname{\ensuremath{\mathscr{#1}}}}
\ddefloop ABCDEFGHIJKLMNOPQRSTUVWXYZ\ddefloop

\def\ddef#1{\expandafter\def\csname #1hat\endcsname{\ensuremath{\widehat{#1}}}}
\ddefloop ABCDEFGHIJKLMNOPQRSTUVWXYZ\ddefloop

\def\ddef#1{\expandafter\def\csname #1hat\endcsname{\ensuremath{\widehat{#1}}}}
\ddefloop abcdefghijklmnopqrsuvwxyz\ddefloop

\def\ddef#1{\expandafter\def\csname #1bar\endcsname{\ensuremath{\bar{#1}}}}
\ddefloop ABCDEFGHIJKLMNOPQRSTUVWXYZ\ddefloop

\def\ddef#1{\expandafter\def\csname #1bar\endcsname{\ensuremath{\overline{#1}}}}
\ddefloop ABCDEFGHIJKLMNOPQRSTUVWXYZ\ddefloop

\def\ddef#1{\expandafter\def\csname #1bar\endcsname{\ensuremath{\bar{#1}}}}
\ddefloop abcdefghijklmnopqrstuvwxyz\ddefloop

\def\ddef#1{\expandafter\def\csname #1bar\endcsname{\ensuremath{\bar{\csname #1\endcsname}}}}
\ddefloop {alpha}{beta}{gamma}{delta}{epsilon}{varepsilon}{zeta}{eta}{theta}{vartheta}{iota}{kappa}{lambda}{mu}{nu}{xi}{pi}{varpi}{rho}{varrho}{sigma}{varsigma}{tau}{upsilon}{phi}{varphi}{chi}{psi}{omega}{Gamma}{Delta}{Theta}{Lambda}{Xi}{Pi}{Sigma}{varSigma}{Upsilon}{Phi}{Psi}{Omega}{ell}\ddefloop

\def\ddef#1{\expandafter\def\csname #1bb\endcsname{\ensuremath{\mathbb{#1}}}}
\ddefloop ABCDEFGHIJKLMNOPQRSTUVWXYZ\ddefloop

\def\ddef#1{\expandafter\def\csname #1tilde\endcsname{\ensuremath{\widetilde{#1}}}}
\ddefloop ABCDEFGHIJKLMNOPQRSTUVWXYZ\ddefloop

\def\ddef#1{\expandafter\def\csname #1tilde\endcsname{\ensuremath{\tilde{#1}}}}
\ddefloop abcdefghijklmnopqrstuvwxyz\ddefloop

\def\ddef#1{\expandafter\def\csname #1tilde\endcsname{\ensuremath{\tilde{\csname #1\endcsname}}}}
\ddefloop {alpha}{beta}{gamma}{delta}{epsilon}{varepsilon}{zeta}{eta}{theta}{vartheta}{iota}{kappa}{lambda}{mu}{nu}{xi}{pi}{varpi}{rho}{varrho}{sigma}{varsigma}{tau}{upsilon}{phi}{varphi}{chi}{psi}{omega}{Gamma}{Delta}{Theta}{Lambda}{Xi}{Pi}{Sigma}{varSigma}{Upsilon}{Phi}{Psi}{Omega}{ell}\ddefloop

\newcommand{\xpln}[1]{, \ \text{#1}} %

\DeclareMathOperator*{\amax}{arg\,max}

\DeclarePairedDelimiter\absBase{\lvert}{\rvert}
\newcommand\abs[1]{\absBase*{#1}}

\DeclarePairedDelimiterXPP{\nucnormBase}[1]{}\lVert\rVert{_\mathrm{nuc}}{#1} %

\DeclarePairedDelimiterXPP{\opnormBase}[1]{}\lVert\rVert{_\mathrm{2}}{#1} %
\newcommand\opnorm[1]{\opnormBase*{#1}}
\DeclarePairedDelimiterXPP{\FnormBase}[1]{}\lVert\rVert{_\mathrm{F}}{#1} %
\newcommand\Fnorm[1]{\FnormBase*{#1}}
\DeclarePairedDelimiterXPP{\enormBase}[1]{}\lVert\rVert{_2}{#1} %
\newcommand\enorm[1]{\enormBase*{#1}}

\newcommand\matb[1]{\begin{bmatrix} #1 \end{bmatrix}}

\DeclareMathOperator{\Unif}{Unif}

\DeclareMathOperator{\Tr}{Tr}

\newcommand{\pseudo}[1]{ {#1}^\dagger}

\DeclarePairedDelimiter\lrpBase{(}{)}
\newcommand\lrp[1]{\lrpBase*{#1}}
\DeclarePairedDelimiter\lrbBase{[}{]}
\newcommand\lrb[1]{\lrbBase*{#1}}
\DeclarePairedDelimiter\brkBase{\langle}{\rangle} %
\newcommand\brk[1]{\brkBase*{#1}}
\DeclarePairedDelimiterXPP\ExpBase[1]{\E}{[}{]}{}{#1} %
\newcommand\Exp[1]{\ExpBase*{#1}}

\DeclarePairedDelimiter\floor{\lfloor}{\rfloor}

\let\Pr\relax
\DeclarePairedDelimiterXPP\PrBase[1]{\P}{[}{]}{}{#1}
\newcommand\Pr[1]{\PrBase*{#1}}
\newcommand{\lsim}{\lesssim}
\newcommand{\gsim}{\gtrsim}
\newcommand{\eps}{\varepsilon}
\newcommand{\by}{\times}

\newcommand{\iid}{\overset{iid}{\sim}}

\newcommand\ot{\otimes}
\DeclareMathOperator*{\poly}{poly}
\renewcommand\u[1]{^{(#1)}}
\newcommand\notex[1]{}
\newcommand\pmin{p_{\min}}

\DeclarePairedDelimiterXPP{\subGnorm}[1]{}\lVert\rVert{_{\psi_2}}{#1} %
\DeclarePairedDelimiterXPP{\indBase}[1]{\1}{(}{)}{}{#1}
\newcommand\ind[1]{\indBase*{#1}}

\newcommand\Rstart{{R_{\mathrm{start}}}}
\newcommand\Riter{{R_{\mathrm{iter}}}}

\newcommand\clean{M^W_3}
\newcommand\first{\Mhat^W_3}
\newcommand\firstV{\Mhat^V_3}

\newcommand\mixed{M^{\What}_3}
\newcommand\tiltil{\Mtilde^{\Wtilde}_3}
\newcommand\mixedtil{M^{\Wtilde}_3}

\DeclareMathOperator{\subG}{subG}

\newcommand\bigG{\mathscr{G}}

\newcommand\uL{^{(L)}}
\newcommand\gscrL{\gscr^{(L)}}
\newcommand\gL{g\uL}

\usepackage{autonum} %

\newcommand{\td}[1]{}
\newcommand{\note}[1]{}
\newcommand{\tdx}[1]{}

\usepackage[scr=esstix]{mathalpha} %
\newcommand\gscr{\mathscr{g}}

\title[Learning Mixtures of Linear Systems]{Finite Sample Analysis of Tensor Decomposition for Learning Mixtures of Linear Systems}
\usepackage{times}

\coltauthor{\Name{Maryann Rui} \Email{mrui@mit.edu}\and
\Name{Munther A. Dahleh} \Email{dahleh@mit.edu}\\
\addr{Laboratory for Information and Decision Systems, Massachusetts Institute of Technology, Cambridge, MA 02139, USA}
}

\begin{document}
\maketitle
\begin{abstract}
We study the problem of learning mixtures of linear dynamical systems (MLDS) from input-output data. The mixture setting allows us to leverage observations from related dynamical systems to improve the estimation of individual models. Building on spectral methods for mixtures of linear regressions, we propose a moment-based estimator that uses tensor decomposition to estimate the impulse response parameters of the mixture models. The estimator improves upon existing tensor decomposition approaches for MLDS by utilizing the entire length of the observed trajectories. We provide sample complexity bounds for estimating MLDS in the presence of noise, in terms of both the number of trajectories $N$ and the trajectory length $T$, and demonstrate the performance of the estimator through simulations.
\end{abstract}

\begin{keywords}
system identification, mixture model, tensor decomposition
\end{keywords}

\section{Introduction}
In many domains of learning time series, such as in healthcare, social sciences, and biological sciences \citep{ernst2005clustering}, there are often a large number of data sources (e.g., patients, systems, cells), but a limited amount of data from each individual source. Without additional assumptions, it may be impossible to identify individual models for each data source. However, when the data is actually generated from a few underlying models, we can leverage the collective observations to learn these models, which can then be used to improve estimates of individual systems. The setting of mixture models, in particular, allows for tractability in learning multiple models from data. 

In this paper, we propose and study a moment-based estimator that uses tensor decomposition to learn mixtures of linear dynamical systems (MLDS) from input-output data. 
Compared to existing methods, which we detail in the following related work section, the estimator allows us to utilize the full length $T$ of the observed trajectories. We also provide explicit sample complexity bounds for estimating stable MLDS in the presence of process and observation noise, showing how the error depends on system parameters and how increasing both the number of trajectories, $N$, and the individual trajectory length, $T$, can be leveraged to improve estimation. 

In the remainder of this section, we review related work. In Section \ref{sec:setup}, we formalize our MLDS model. In Section \ref{sec:method}, we introduce the moment-based MLDS estimator and an estimator for mixtures of linear regression (MLR), on which the MLDS estimator is built. In Section \ref{sec:analysis}, we provide sample complexity bounds for the estimator in Proposition \ref{prop:MLDS-result}, where as a key step we derive finite sample error bounds for learning mixtures of linear regressions in the presence of independent noise and bounded perturbations. Finally, in Section \ref{sec:simulations}, we demonstrate the performance of the tensor decomposition approach to MLDS through simulations. The Appendix contains proofs and auxiliary results. 

\subsection{Related Work}\label{sec:related-work}
Our work lies at the intersection of spectral methods for mixtures of linear regression and system identification for partially observed linear systems. The most relevant work is \cite{bakshi2023tensor}, which also sits at this intersection, and which inspired us to derive an alternate moment estimator with explicit sample complexity guarantees for the MLDS problem.
While recent work has also studied other forms of shared structure between multiple linear dynamical systems, such as a shared low dimensional representation of the transition matrix \citep{modi2021joint, zhang2023multi}, these are largely restricted to fully observed systems. Thus, we choose to focus our review of related work on \textit{mixture} models of static and dynamical linear systems.

\vspace{-0.3em}
\paragraph{Mixtures of linear regression.}
In mixtures of linear regressions, data of the form $\set{(x_i, y_i)}_{i \in [N]}$ is observed, with a generating model given by $y_i = \brk{x_i, \beta_i}$, where the parameter $\beta_i$ is sampled from a given distribution over the $K$ mixture components $\set{\beta_k}_{k \in [K]}$. 
The goal is to learn the $K$ mixture components and their respective mixture weights. Approaches to solving MLR can generally be grouped into those based on tensor decomposition \citep{anandkumar2014tensor}, alternating minimization \citep{yi2014alternating}, and gradient methods \citep{li2018learning}, or a combination of these. Both \cite{zhong2016mixed} and \cite{yi2016solving} %
apply tensor decomposition on sixth-order moments to initialize iterative algorithms based on gradient descent and alternating minimization, respectively. 
While they provide sample complexity guarantees for MLR in the \textit{noiseless} setting, we extend their estimator and analyses to the setting of noisy observations of linear system trajectories. 

\vspace{-0.3em}
\paragraph{Mixtures of linear dynamical systems.} 
\cite{chen2022learning} also study learning mixtures of dynamical systems, though restricted to the fully-observed setting. Most relevant to our work is \cite{bakshi2023tensor}, which introduces a moment-based estimator that uses tensor decomposition to prove that under minimal assumptions, mixtures of linear dynamical systems (MLDS) can be learned with polynomial sample and computational complexity. 
However, they do not provide explicit sample complexity bounds and their algorithm only uses a fixed number of samples from each observed trajectory, forfeiting possibly useful information in longer trajectories. In this work, we provide a different moment-based estimator that utilizes the entire length of observed trajectories and derive explicit finite-sample error bounds for mixtures of stable systems with a sharper $\poly(\ln(1/\delta))$ dependence (versus $\poly(1/\delta)$ in \cite{bakshi2023tensor}), where bounds are given with high probability $1-\delta$, and include the effects of process and measurement noise. A detailed comparison of the estimators is given in Section \ref{sec:bakshi-comparison}.

\vspace{-0.3em}
\paragraph{Finite sample bounds for linear system identification.} 
There is a large body of work on the identification of partially observed linear systems from input-output data, with recent works providing finite-sample error bounds for learning from a single trajectory, or rollout. 
A standard approach is to estimate Markov parameters of the system and then use the Ho-Kalman, or eigensystem realization, algorithm \citep{ho1966effective} to obtain a state space realization of the system. \cite{sarkar2021finite} and \cite{oymak2021revisiting} estimate Markov parameters from a single trajectory of strictly stable systems using an ordinary least squares estimator. \cite{bakshi2023new} derive a moment-based estimator for the Markov parameters, though the estimator coefficients must be computed via a separate convex program. %
Estimating single partially observed systems from \textit{multiple} rollouts has also recently been studied. \cite{zheng2020non} provide error bounds for an OLS estimator on $N$ independent length $T$ trajectories, for both stable and unstable systems. However, the error in estimating the first $T$ Markov parameters grows superlinearly in the trajectory length $T$, which is a suboptimal trend for strictly stable systems. 

\section{Setup}\label{sec:setup}
\subsection{Notation} 
For any natural number $N \in \N$, we define the set $[N] \coloneq \set{1,2, \dots,  N}$. For a $d_1 \by d_2$ matrix $A$, we denote its trace $\Tr(A)$, transpose $A'$, Moore-Penrose pseudoinverse $\pseudo{A}$, Frobenius norm $\Fnorm{A}$, and operator (spectral) norm $\opnorm{A}$. For $i \in [\min(d_1, d_2)]$, $\sigma_i(A)$ is the $i$-th largest singular value of $A$.
The identity matrix in $\R^{d\by d}$ is denoted $I_d$. 
A real-valued random variable $X$ is subgaussian with variance proxy $\sigma_x^2$ if $\Pr{\abs{X} \geq t} \leq 2\exp(-t^2/(2\sigma_x^2))$ for $t>0$. If in addition, $X$ is zero mean, we write $X \sim \subG(0, \sigma^2)$. Similarly, $X$ is subexponential with parameter $\lambda$ if $\Pr{\abs{X} \geq t} \leq 2 \exp(-t/\lambda)$. A random vector $X$ is subgaussian if for all fixed vectors $v \in \R^n$, $\brk{X,v}$ is subgaussian \citep{vershynin2018high}.
We use $c$ to denote a universal positive constant, which may vary from line to line. For real-valued functions $a, b$, the inequality $a \lsim b$ implies $a \leq cb$ for some $c$. 
Unless otherwise specified, all random variables are defined on the same probability space. 
\paragraph{Tensors.} 
A $K$-th order tensor in a Euclidean space is an element of the tensor product of $K$ Euclidean spaces. The tensor product, or outer product, of $K$ vectors $\set{v_k \in \R^{d_k}}_{k \in [K]}$ is denoted $v_1 \ot v_2 \ot \cdots \ot v_K$ and is a rank 1 $K$-th order tensor with $(i_1, i_2, \dots, i_K)$-th entry equal to $\prod_{k=1}^K v_k(i_k)$. 
For a vector $v \in \R^d$, $v^{\ot K} = v \ot v \ot \cdots \ot v$ ($K$ times) is its $K$-th tensor power.
In general, the rank of a tensor $M$ is the smallest number of rank-one tensors such that $M$ can be expressed as their sum. 
A third-order $d_1 \by d_2 \by d_3$ tensor $M$ of rank $r$ may thus be written as $M=\sum_{i=1}^r a_i \ot b_i \ot c_i$ for some $a_i \in \R^{d_1}, b_i \in \R^{d_2}, c_i \in \R^{d_3}$. Viewing such a tensor $M$ as a multilinear map, we have the mapping for matrices $A\in \R^{d_1 \by l_1}, B\in \R^{d_2 \by l_2}$ and $C\in \R^{d_3 \by l_3}$, $M(A, B, C) = \sum_{i=1}^r A'a_i \ot B'b_i \ot C' c_i \in \R^{l_1} \ot \R^{l_2} \ot \R^{l_3}$. 

A symmetric third-order tensor $M$ is invariant under permutations of its arguments $(A, B, C)$. 
Its operator norm is defined as $\opnorm{M} = \sup_{a\in \Sscr^{d-1}} \abs{M(a,a,a)}$. For simplicity of notation, given a $d\by K$ matrix $W$, we sometimes use the shorthand $M^W\coloneq M(W, W, W)$.
See \cite{kolda2009tensor} for an introductory reference on tensors and tensor decomposition.

\subsection{Model}\label{sec:model}
\paragraph{Mixture model.} 
A partially-observed, strictly causal, linear-time invariant (LTI) system can be represented in terms of its impulse response $g = (g(1),\ g(2),\, \dots)$, which captures the input-output mapping of the system. Assuming for simplicity $m$-dimensional inputs and single-dimensional outputs, the $j$th impulse response, or Markov, parameter, $g(j)$ is an $m\by 1$ vector, for $j \in \N$. Given an input trajectory $\set{u_t\in \R^m}_{ t \in \N}$, the output of the system at each time $t\in \N$ is given by
\begin{align}\label{eq:new-linear-output}
y_t = \sum_{j=1}^t \brk{g(j),u_{t-j} + w\u{1}_{t-j}} + w\u{2}_t,%
\end{align}
where $w\u{1}_t\in \R^m$ and $w\u{2}_t \in \R$ represent process and measurement noise, respectively, at time $t$. We assume zero inputs $u_t = 0$ for $t<0$ and zero feedthrough (i.e., $y_t$ does not depend on $u_t$).

Consider a mixture of $K\geq 2$ LTI models given by $\bigG = \set{(\gscr_k, p_k)}_{k \in [K]}$, where the model with impulse response sequence $\gscr_k$ has associated probability $p_k>0$, with $\sum_{k=1}^K p_k = 1$. 
We observe $N$ input-output trajectories of length $T$ in the data set $\Dscr = \set{(u_{i,t-1}, y_{i,t}) \mid i \in [N], t \in [T]}$, which are generated from the mixture model in the following way: For each trajectory $i \in [N]$, a system model $g_i = \gscr_{k_i}$ is drawn from $\bigG$, where the index $k_i=k$ is drawn with probability $p_{k}$, for $k \in [K]$. A trajectory is then rolled out with randomly generated inputs $\set{u_{i,t-1}}_{t \in [T]}$ and corresponding outputs $\set{y_{i,t}}_{t \in [T]}$ generated according to \eqref{eq:new-linear-output}. 

\begin{remark}
Partially-observed LTI systems corresponding to \eqref{eq:new-linear-output} are often represented by the following input-state-output dynamics with a state variable $x_t\in \R^n$, where $n$ is the minimal order of the system: 
\begin{align}\label{eq:st-sp-model}
x_{t+1} &= A x_{t} + B (u_{t} + w^{(1)}_{t}), \quad 
y_{t} = C x_{t} + w^{(2)}_{t}.
\end{align}
$A \in \R^{n \by n}$ is the state transition matrix, $B \in \R^{n \by m }$ the control matrix, and $C \in \R^{1\by n}$ the measurement matrix. %
While the parameters $(C, A, B)$ representing the system are only identifiable up to a similarity transformation, they correspond to the representation-independent Markov parameters by %
$g(t) = CA^{t-1}B$ for $t\geq 1$. %
Because the crux of most time-domain system identification methods, including ours, lies in estimating Markov parameters, we focus on the impulse-response representation and provide pointers to state-space estimation when relevant.
\end{remark}
\paragraph{Objective.} Given an input-output data set $\Dscr$ of length-$T$ trajectories from $N$ systems, we aim to estimate the generating mixture $\bigG$ comprising the component models $\gscr_k$ and their weights $p_k$. To do so, it suffices to learn just a finite number of Markov parameters to identify the infinite impulse response sequence. If an LTI system given by $\gscr_k$ has finite order bounded by $n>0$, the sequence $\gscr_k$ is completely determined by its first $2n+1$ elements \citep{gragg1983partial}. Thus, it suffices to learn the first $L\geq 2n+1$ Markov parameters of each of the $K$ models in mixture. Further, even if $L < 2n+1$, the first $L$ Markov parameters can still be very informative of the system behavior.
To this end, for a fixed $L$ such that $1 \leq L \leq T$, let us define the truncated impulse response vector $\gL_i = \lrb{g_i(1)',\ \dots ,\ g_i(L)' }' \in \R^{Lm}$ for the system generating the $i$th trajectory, $i \in [N]$. We focus on learning the first $L$ Markov parameters and weights $\set{(\gscrL_k, p_k)}_{k \in [K]}$ of the mixture. %

\subsection{Assumptions}\label{sec:Assumptions}
\paragraph{Dynamics and distributional assumptions.} 
We assume that each of the $K$ models in the mixture are strictly stable. 
Under this assumption, define the finite quantity $\Gamma(\gscr_k) \coloneq 1+\sum_{t=1}^{\infty} \enorm{g_k(t)}^2$ capturing the energy of each system $k \in [K]$, %
and $\Gamma_{\max}\coloneq  \max_{k \in [K]} \Gamma(\gscr_k) < \infty$. 
Furthermore, let $\rho>0$ and $C_{\rho} >0$ be such that for every $t \in \N$, $\max_{k \in [K]} \enorm{g_k(t)} \leq C_\rho \rho^t$. For example, we can take any $\rho<1$ greater than the largest spectral radius of the $K$ models, by Gelfand's formula \citep{kozyakin2009accuracy}. 

For each trajectory $i \in [N]$, for $t\geq 0$, we assume that the inputs $u_{i,t}$ are i.i.d. zero-mean isotropic Gaussian random vectors in $\R^m$ with variance $\sigma_u^2 I_m$, and that the process noise $w^{(1)}_{t}\in \R^m$ and measurement noise $w^{(2)}_{t}\in \R$ are independent zero-mean subgaussian random vectors with variance proxies $\sigma_{w^{(1)}}^2$, and $\sigma_{w^{(2)}}^2$, respectively. Let $\sigma_w \coloneq \max(\sigma_{w^{(1)}}, \sigma_{w^{(2)}})$.

\paragraph{Mixture assumptions.} Let $\pmin := \min_{k \in [K]} p_k >0$ be a lower bound on the mixture weights. 
We also assume a non-degeneracy condition on the vectors of Markov parameters $\gscrL_k$. %
Let $M_2 \coloneq \sum_{k=1}^K p_k \gscrL_k \ot \gscrL_k$ be the weighted sum of outer products of the mixture components. Then we assume that $\sigma_K(M_2) > 0$. With abuse of notation (as it will be clear from context), we let $\sigma_K$ denote $\sigma_K(M_2)$. 
Note that we do not assume a minimum separating distance between pairs of mixture components, %
but that the non-degeneracy condition does require the number of components $K \leq d$, which is a reasonable setting in many practical applications. 

\section{Method}\label{sec:method}
Recall that we aim to estimate the parameters $\set{(\gscrL_k, p_k)}_{k \in [K]}$ of the mixture model $\bigG$, which are sufficient to identify $\bigG$ and to yield minimal state-space realizations of the models when $L \geq 2n+1$. 
We first identify our problem of estimating Markov parameters $\gL_i = \gscr_{k_i}\uL$ in MLDS with elements of a linear regression model, but with additional noise and correlated perturbations. 
To do so, we express $y_{it}$ in the form of a linear regression with coefficients $\gL_i$. Define the parameter vector $f\uL_{i}:= \big[ 1, g^{(L)\prime}_i\big]' \in \R^{1+Ln}$, the vector of concatenated inputs from $t-L$ to $t-1$, $\ubar_{i,t}:=\big[u_{i,t-1}',\ u_{i,t-2}',\ \dots,\ u_{i,t-L}']' \in \R^{Lm},$ and the vector of concatenated noise variables $\wbar_{i,t} = \big[w^{(2)}_{i,t},\ (w^{(1)}_{i,t-1})',\ \cdots,\ (w^{(1)}_{i,t-L})'\big]' \in \R^{1+Ln}$. Then the output $y_{it}$ can be written as
\begin{align}\label{eq:linear-output-2}
y_{it} = \brk{\gL_i, \ubar_{it}} + \brk{f\uL_i, \wbar_{it}} + \xi_{it},
\end{align}
where we collect the remainder due to the length $L$ truncation of the impulse response in the term
\begin{align} \label{eq:xi}
\xi_{it} = \sum_{j=L+1}^t \brk{g_i(j), u_{i,t-j} + w_{i,t-j}^{(1)}}.
\end{align}

Since the covariates $\set{\ubar_{i,t}}$ in \eqref{eq:linear-output-2} are vectors of lagged inputs, covariates that are close in time (e.g., $\ubar_{i,t}$ and $\ubar_{i,t+1}$) have overlapping entries and are thus dependent.
In order to work with independent covariates across observations, which simplifies the later analysis, we simply take every $L$-th sample starting at time index $L$. Assume without loss of generality that $L$ divides $T$ (otherwise discard at most $L-1$ samples at the end of the trajectory), and let $\Jscr = \set{L,2L, 3L, ..., T}$ be an index set of size $T/L$. Then the vectors of lagged inputs and noise terms indexed by $\Jscr$, $\set{\ubar_{i,t},\wbar_{it} \mid t\in \Jscr, i \in [N]}$, are mutually independent random vectors. %
We can thus view the MLDS data set $\set{(\ubar_{it}, y_{it}) \mid t \in \Jscr, i \in [N]}$ as being sampled from a mixture of linear regressions with noise and perturbations as formulated in Definition \ref{def:MLR}, with an effective sample size of $NT/L$, mapping each index $(i,t) \in [N] \by \Jscr$ to a linear index $j \in [NT/L]$.

\IncMargin{1em}%
\begin{algorithm2e}[bthp] %
\SetAlgoLined  %
\SetKwFor{For}{for}{:}{} 
\LinesNumbered
\DontPrintSemicolon

\KwIn{%
\begin{minipage}[t]{\linewidth}
    $\set{(u_{i,t-1}, y_{i,t}) \mid t \in [T], i \in [N]}$ --- Input-output trajectories \\
    $\Nscr_2 \cup \Nscr_3 = [N]$ --- Index set partition  for estimating moments \\
    $L$ --- Number of Markov parameters to estimate ($L\leq T$)\\
    $K$ --- Number of mixture components
  \end{minipage}
}
\KwOut{$\set{(\phat_k, \widehat{\gscr}_k\uL)}_{\mid k \in [K]}$ --- Markov parameters and weights of mixture
}

$\Jscr \gets \set{L, 2L, ..., \floor{T/L}}$ \;
\For{$i \in [N], t \in \Jscr$}{
	$\ubar_{i,t} \gets \matb{u_{i,t-1}'& \cdots & u_{i,t-L}'}'$ \tcp*{Stack inputs}
}
$\set{(\phat_k, \widehat{\gscr}_k\uL)}_{k \in [K]} \gets $Algorithm \ref{alg:MLR-estimator}$(\set{(\ubar_{i,t}, y_{i,t}) \mid i \in [N], t \in \Jscr}, \Nscr_2\by \Jscr, \Nscr_3\by \Jscr)$ \tcp*{Mixture of Linear Regressions}
\caption{Mixtures of Linear Dynamical Systems Estimator}\label{alg:MLDS-estimator}
\end{algorithm2e}
\DecMargin{1em} %

\begin{definition}[Mixture of Linear Regression]\label{def:MLR}
Data $\set{(x_i, \ytilde_i) \in \R^d \by \R}_{i \in [N]}$ is generated by a mixture of linear regressions with noise and perturbations if the output $\ytilde_i$ can be expressed as 
\begin{align}
\ytilde_i = \brk{x_i, \beta_{k_i}} + \eta_i + \xi_i,
\end{align}
where the covariates $x_i \iid \Nscr(0, I_d)$ are independent zero-mean isotropic gaussians, the term $\eta_i\sim \subG(0, \sigma_{\eta}^2)$ represents independent subgaussian noise, and the term $\xi_i \sim \subG(0, \sigma_{\xi}^2)$ represents an additional subgaussian perturbation which may be correlated with the covariates and noise terms $\set{x_j, \eta_j}_{i \in [N]}$. 
The latent variable $k_i$ indicates the mixture component with coefficient $\beta_{k_i} \in \set{\beta_k}_{k \in [K]}$ that the $i$-th observation belongs to, where $\Pr{k_i=k} = p_k$ for $k \in [K]$. 
\end{definition}

Note that the noise term $\big\langle f_i\uL, \wbar_{it}\big \rangle$ in \eqref{eq:linear-output-2} is zero-mean subgaussian with variance proxy $\sigma_{w}^2 \big\lVert f_i\uL\big \rVert^2 \leq \sigma_w^2 \Gamma_{\max}$.
The perturbations $\xi_{it}$ are not necessarily independent of the covariates or noise, but they are zero-mean subgaussian and thus can be bounded with high probability. Indeed, when the linear models in the mixture are strictly stable, the effect of past inputs on the present output decreases exponentially in $L$, the rate of which is captured by $\rho$. Thus, if $L$ is large enough we can treat the contributions of the remaining Markov parameters and past inputs in $\xi_{i, t}$ as bounded noise.

We complete the mapping of the MLDS problem to the MLR problem in Definition \ref{def:MLR} by assigning the covariates $x_j \gets  \ubar_{i,t}/\sigma_u = \matb{u_{i,t-1}' & \cdots & u_{i,t-L}'}'/\sigma_u \in \R^{Lm}$, outputs $\ytilde_j \gets y_{i,t}$ coefficients $\beta_j \gets \sigma_u \gL_i \in \R^{Lm}$, independent zero-mean subgaussian noise $\eta_j \gets \big\langle f_i\uL, \wbar_{it}\big\rangle$, and subgaussian perturbations $\xi_j \gets \xi_{it}$ as defined in \eqref{eq:xi}, again with the mapping of indices $(i,t) \mapsto j \in [NT/L]$. Algorithm \ref{alg:MLDS-estimator} constructs the MLR problem in this way, and then uses Algorithm \ref{alg:MLR-estimator} as the key subroutine to obtain Markov parameter estimates of the mixture components. From there, the Ho-Kalman algorithm can be used to obtain state-space realizations for the mixture.  

\subsection{Mixtures of Linear Regression with Noise and Perturbations}
In this section we detail the tensor decomposition approach for solving MLR (Definition \ref{def:MLR}), which is the workhorse of Algorithm \ref{alg:MLDS-estimator} for solving MLDS. 

\paragraph{Motivation for tensor decomposition.} 
While a matrix, or second-order tensor, $M_2$ of rank $K$ can be expressed as a sum of $K$ rank-1 matrices, e.g., $M_2 = \sum_{k=1}^K a_k \ot b_k$, this decomposition is not unique. On the other hand, under mild assumptions, a third-order tensor $M_3$ of rank $K$ does have a unique decomposition as a sum of $K$ rank-1 tensors (up to scaling and ordering of factors). 
In the case of a symmetric tensor $M_3 = \sum_{k=1}^K p_k \beta_k^{\ot 3}$, a sufficient condition for uniqueness of the decomposition is when $\set{\beta_k}_{k \in [K]}$ are linearly independent. Then the set of summands $p_k \beta_k^{\ot 3}$ is unique, though the scaling between $p_k$ and $\beta_k$ needs to be resolved separately. If  $\set{(p_k, \beta_k)}_{k \in [K]}$ represent the parameters of a mixture, then knowing $M_3$ would allow us to recover the mixture model through tensor decomposition. Additionally, if we have a noisy estimate of $M_3$, results on the robustness of tensor decomposition for non-degenerate tensors \citep{anandkumar2014tensor} assure us that the estimated components are not too far from the true components.

\paragraph{Estimating MLR.} We now extend the moment-based tensor decomposition approach to estimating mixtures of linear regressions that was presented in \citep{yi2016solving,zhong2016mixed} and given in Algorithm \ref{alg:MLR-estimator}. While these works provide estimation error bounds in the \textit{noiseless} case, in Section \ref{sec:analysis} we provide performance guarantees under both i.i.d. noise $\eta_i$ and bounded perturbations $\xi_i$, which may be correlated with other variables. To begin our analysis of the MLR algorithm, we examine the moments estimated by Algorithm \ref{alg:MLR-estimator} on the noisy, perturbed linear regression data: 
\begin{align}\label{eq:M23tilde-MLR}
\Mtilde_2 &= \frac{1}{2N_2} \sum_{i\in\Nscr_2} \ytilde_i^2 \lrp{x_i\ot x_i - I_d},\ \text{ and }\
\Mtilde_3 = \frac{1}{6N_3} \sum_{i\in \Nscr_3} \ytilde_i^3 \lrp{x_i^{\ot 3} - \Escr(x_i)},
\end{align}
where $\Escr(x_i) = \sum_{j = 1}^d x_i \ot e_j \ot e_j + e_j \ot x_i \ot e_j  + e_j \ot e_j \ot x_i$, with $e_j$ the $j$-th standard basis vector in $\R^{d}$.
Here, $\Nscr_2 \cup \Nscr_3$ is a partition of the set of $N$ trajectories into two disjoint sets of size $N_2$ and $N_3$ respectively, which enables us to obtain independent estimates of the matrix $M_2$ from $\Nscr_2$ and of the third order tensor $M_3$ from $\Nscr_3$. 

Let $y_i:= \ytilde_i - \xi_i$ be ``cleaned'' observations. If $\set{(y_i, x_i)}_{i \in [N]}$ were observed, we would be solving a mixture of linear regressions with i.i.d. noise $\eta_i$ and no perturbations $\xi_i$: $y_i = \brk{\beta_{k_i}, x_i} + \eta_i.$
We define the moments estimated with unperturbed $y_i$: 
\begin{align}\label{eq:M23hat-MLR}
\Mhat_2 &= \frac{1}{2N_2} \sum_{i\in \Nscr_2} y_i^2 (x_i \ot x_i - I_d)\ \text{ and }\
\Mhat_3 = \frac{1}{6N_3} \sum_{i\in \Nscr_3} y_i^3 \lrp{x_i^{\ot 3} - \Escr(x_i)}.
\end{align}
It can be verified by multiple applications of Stein's identity \citep{janzamin2014score} using that if $x_i$ is isotropic gaussian and uncorrelated with the zero-mean noise $\eta_i$, then $\Mhat_2$ and $\Mhat_3$ are unbiased estimators of the two mixtures of moment tensors:
\begin{align}
\E[\Mhat_2] &= M_2\coloneq \sum_{k = 1}^K p_k \beta_k \ot \beta_k\ \text{ and }\
\E[\Mhat_3] = M_3\coloneq \sum_{k = 1}^K p_k \beta_k \ot \beta_k \ot \beta_k.
\end{align}
Empirical estimates $\Mtilde_2$ and $\Mtilde_3$ differ from the unbiased estimators $\Mhat_2$ and $\Mhat_3$ only by factors involving the perturbations $\xi_i$. When $\xi_i$ have bounded norm, as it is in \eqref{eq:xi}, %
then $\Mtilde_2$ and $\Mtilde_3$ provide good estimates of the target moments $M_2$ and $M_3$. 

\paragraph{Whitening factors.} 
Although it is possible to run tensor decomposition on the original estimates of $M_3$, a useful intermediate step is to whiten the set of $d$-dimensional tensor factors $\set{\beta_k}_{k \in [K]}$ by projecting them onto the $K$-dimensional subspace of $\R^d$ spanned by the factors themselves, to yield a set of orthonormal $K$-dimensional vectors $\set{W'\beta_k}_{k \in [K]}$. Here, $W\in \R^{d \by K}$ is a whitening matrix derived from the singular value decomposition (SVD) of the moment matrix $M_2 = \sum_{k=1}^K p_k \beta_k \ot \beta_k$. This whitening step is a form of dimensionality reduction; estimating and decomposing the whitened third-order tensor $\clean = \sum_{k=1}^K p_k (W' \beta_k)^{\ot 3} \in (\R^K)^{\ot 3}$ has lower computational and statistical demands than for the original $M_3$. Additionally, since the transformed factors $W'\beta_k$ have unit norm, it is possible to disentangle the scaling between $p_k$ and $\beta_k$ through the dewhitening step (c.f., Lines \ref{line:dewhiten1}-\ref{line:dewhiten2} in Algorithm \ref{alg:MLR-estimator}). Finally, if the number of mixture components $K$ were unknown, $K$ could also be estimated the SVD of empirical estimates of $M_2$. 

In Algorithm \ref{alg:MLR-estimator}, the estimated whitening matrix $\Wtilde \in \R^{d\by K}$ is obtained from the SVD of the estimate $\Mtilde_2$, such that $\Wtilde'\Mtilde_2 \Wtilde = I_K$. %
The whitened third order tensor $\tiltil$, which estimates $M_3^W$, then has orthonormal $K$-dimensional components, making it amenable to the standard robust tensor power iteration method (\cite{anandkumar2014tensor}) for orthogonal tensor decomposition. In the last step, the output $\set{(\tilde{w}_k, \tilde{\beta}_k)}_{k \in [K]}$ of the decomposition is dewhitened using $\Wtilde$ to return estimates $\phat_k$ and $\betahat_k$ of the original mixture weights and coefficients, for each component $k \in [K]$.

\IncMargin{1em}%
\begin{algorithm2e}[bhtp] %
\caption{Mixture of Linear Regressions Estimator}\label{alg:MLR-estimator}
\SetAlgoLined
\SetKwFor{For}{for}{}{} 
\LinesNumbered
\DontPrintSemicolon
\KwIn{%
\begin{minipage}[t]{0.98\linewidth}
    $\set{(x_i, y_i) \in \R^d\by \R \mid i \in [N]}$ --- Regression data \\
    $\Nscr_2 \cup \Nscr_3 = [N]$ --- Index set partition for estimating moments \\
    $K$ --- Number of mixture components
  \end{minipage}
}
\KwOut{$\set{(\phat_k, \betahat_k)}_{k \in [K]}$ --- Estimated mixture parameters and weights}

\textbf{Whitening:} \\
$\Mtilde_2 \gets \frac{1}{2N_2}\sum_{i \in \Nscr_2} \lrb{y_i^2 (x_i^{\ot 2} - I_{d})}$ \tcp*[r]{2nd order tensor}
$U\Sigma U^T \gets \text{SVD}(\Mtilde_2, K)$ \tcp*[r]{Rank-$K$ approximation}
$\Wtilde \gets U\Sigma^{-1/2}$ \tcp*[r]{Whitening matrix}

\textbf{Tensor estimation and decomposition:} \\
$\Mtilde^{\Wtilde}_3 \gets \frac{1}{6N_3}\sum_{i \in \Nscr_3} \left[ y_{i}^3 (\Wtilde'x_i)^{\ot 3} - \Escr(x_i)(\Wtilde, \Wtilde, \Wtilde)\right]$ \tcp*[r]{3rd order tensor}

$\set{(\ptilde_k, \betatilde_k)\in \R\by \R^{K} \mid k \in [K]} \gets$ Orthogonal Tensor Decomposition$\left(\frac{1}{6}\Mtilde^{\Wtilde}_3, K\right)$ \\
\For{$k \in [K]$ \textnormal{\textbf{:}} \label{line:dewhiten1}}
{%
    $\phat_k \gets 1/\ptilde_k^2, \quad \betahat_k \gets \ptilde_k \pseudo{(\Wtilde')}\betatilde_k$ \tcp*[f]{Dewhiten} \label{line:dewhiten2}
}

\end{algorithm2e}
\DecMargin{1em} %

\section{Analysis}\label{sec:analysis}
Proposition \ref{prop:MLDS-result} provides our main result of finite sample error bounds for learning MLDS via Algorithm \ref{alg:MLDS-estimator}. 
Using the mixtures of linear regression subroutine, we essentially run tensor decomposition on a whitened (orthonormal) version of the third-order tensor
\begin{align}
\frac{L}{6NT} \sum_{i\in [N]} \sum_{t \in\Jscr} \bigg(y_{i,t}^3 \ubar_{i,t}^{\ot 3} -  y_{i,t}^3 \sum_{k\in[mL]} (e_k \ot \ubar_{i,t} \ot e_k + \ubar_{i,t} \ot e_k \ot e_k + e_k \ot e_k \ot \ubar_{i,t})\bigg),%
\end{align}
which estimates $\sum_{k=1}^K p_k (\gscr_k)^{\ot 3}$. Here, $e_k$ is the $k$-th standard basis vector in $\R^{mL}$. 
\begin{proposition}\label{prop:MLDS-result}
Let data $\Dscr = \set{(u_{i,t-1}, y_{i,t}) \mid i \in [N], t \in [T]}$ be generated from a mixture of linear dynamical systems with parameters $ \set{(p_k, \gscr_k)}_{k \in [K]}$, and let $L$ be an integer such that $1 \leq L \leq T$. 
Let $\set{(\phat_k, \widehat{\gscr}\uL_k)}_{k \in [K]}$ be the estimated mixture parameters obtained from running Algorithm \ref{alg:MLDS-estimator} on the data $\Dscr$.
Let $\sigma_y^2\coloneq (\sigma_u^2+ \sigma_w^2)\Gamma_{\max}$. For any $\eps>0, \delta \in (0,1)$, when
\begin{align}
N_2 T &\gsim \frac{\sigma_y^4L\Gamma_{\max}^3}{\eps^2\pmin^2} \cdot \lrp{\sigma_K^5  \ln^4 \lrp{\frac{N_2 T\cdot 9^{Lm}}{\delta L}}\ln\lrp{\frac{9^{Lm}}{\delta}} +  \frac{\delta}{9^{Lm}}\cdot \frac{\sigma_y^4\Gamma_{\max}^3}{\eps^2\pmin^2}}, \\
N_3 T &\gsim \frac{ \sigma_y^6 L}{\eps^2\pmin^2\sigma_K^3} \cdot \lrp{ \ln^6 \lrp{\frac{33^K\cdot N_3 T}{\delta L}}\ln\lrp{\frac{33^K}{\delta}}+\frac{\delta}{33^K}\cdot \frac{ \sigma_y^6}{\eps^2\pmin^2\sigma_K^3}}, \\
\eps &\lsim  \frac{\sigma_y^3}{\sigma_K^{3/2} \pmin},\ \text{and} \\
L &\geq \ln \lrp{\frac{\sigma_y^4 \Gamma_{\max}}{\eps^2 \pmin^2 \sigma_K^3}\cdot \frac{ C_{\rho} \rho}{1-\rho} \lrb{\Gamma_{\max}^{3/2} \ln^2\lrp{\frac{9^{Lm} N_2}{\delta}} + \sigma_y \ln^3 \lrp{\frac{33^K N_3}{\delta}}}}\cdot \frac{1}{\ln(1/\rho)},
\end{align}
it holds that with probability at least $1-\delta$, %
there exists a permutation $\pi: [K] \to [K]$ such that
\begin{align}\label{eq:MLDS-err-bnd}
\opnorm{\widehat{\gscr}\uL_{\pi(k)} - \gscrL_k} \leq \eps \cdot \frac{\sigma_K^{1/2}}{\pmin^{3/2}} , \quad \abs{\phat_{\pi(k)} - p_k} < \eps p_k^{3/2}, \quad \text{for } k \in [K].
\end{align}
\end{proposition}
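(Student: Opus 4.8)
The plan is to prove the MLDS bound by reducing it to a mixture-of-linear-regressions (MLR) guarantee and then undoing the reduction. First I would invoke the identification set up above \eqref{eq:xi}: subsampling the indices $\Jscr = \set{L, 2L, \dots, T}$ makes the lagged-input covariates $\ubar_{i,t}/\sigma_u$ mutually independent across the effective sample of size $NT/L$, and with coefficients $\beta_k = \sigma_u \gscrL_k$, independent noise $\eta = \brk{f\uL_i, \wbar_{it}} \sim \subG(0, \sigma_w^2\Gamma_{\max})$, and correlated perturbation $\xi_{it}$ as in \eqref{eq:xi}, this is exactly the MLR instance of Definition \ref{def:MLR}. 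It therefore suffices to prove a finite-sample guarantee for Algorithm \ref{alg:MLR-estimator} and then translate the scalings ($\beta_k = \sigma_u\gscrL_k$, effective sample size $NT/L$, and $\sigma_K$, $\pmin$) back to the MLDS parameters. The partition $\Nscr_2 \cup \Nscr_3 = [N]$ makes $\Mtilde_2$ (hence $\Wtilde$) independent of the $\Nscr_3$ samples, so one may condition on $\Wtilde$ when concentrating the whitened tensor.

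The technical core is an operator-norm control of the empirical moments. I would write $\Mtilde_2 - M_2 = (\Mtilde_2 - \Mhat_2) + (\Mhat_2 - M_2)$, and likewise for the third-order tensor, so the error splits into a perturbation part (depending only on the $\xi_i$) and a statistical part (the deviation of the unbiased estimators $\Mhat_2, \Mhat_3$ from $M_2, M_3$). For the statistical part, the summands $y_i^2(x_i^{\ot 2} - I_d)$ and $y_i^3(x_i^{\ot 3} - \Escr(x_i))$ are degree-four and degree-six polynomials in (sub)gaussian variables, hence heavy-tailed; I would handle them by truncating each summand at a level growing polylogarithmically in the sample size and failure probability, bounding the truncated part by a matrix/tensor Bernstein inequality taken over an $\eps$-net of $\Sscr^{Lm-1}$ (net size $\sim 9^{Lm}$) for $\Mtilde_2$ and of $\Sscr^{K-1}$ (net size $\sim 33^K$) for the whitened tensor, and controlling the residual tail in expectation. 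This is exactly where the $\ln^4(\cdot)$ and $\ln^6(\cdot)$ factors and the additive $\delta/9^{Lm}$, $\delta/33^K$ terms originate, and it is the main obstacle: getting a sharp $\poly(\ln(1/\delta))$ (rather than $\poly(1/\delta)$) dependence requires balancing the truncation level against the net size and tail mass. For the perturbation part I would use that, under strict stability, $\xi_{it}$ is zero-mean subgaussian with variance proxy controlled by $C_\rho\rho/(1-\rho)$ and decaying like $\rho^L$, so choosing $L$ at least the displayed logarithmic threshold makes the perturbation contribution to $\opnorm{\Mtilde_2 - M_2}$ and $\opnorm{\Mtilde_3 - M_3}$ dominated by the target accuracy.

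Next I would propagate these moment errors through whitening, decomposition, and dewhitening. A Weyl/Davis--Kahan perturbation argument on the SVD of $\Mtilde_2$ bounds the whitening-matrix error (up to an orthogonal rotation) by $\opnorm{\Mtilde_2 - M_2}$ amplified by factors of $\sigma_K^{-1/2}$, since $\Wtilde = U\Sigma^{-1/2}$ has spectral norm of order $\sigma_K^{-1/2}$; combining this with the tensor error yields a bound on $\opnorm{\tiltil - \clean}$, the deviation of the estimated whitened tensor from the orthogonally decomposable target $\clean = \sum_k p_k (W'\beta_k)^{\ot 3}$. I would then invoke the robustness guarantee for the robust tensor power method \citep{anandkumar2014tensor}: since $\set{W'\beta_k}$ are orthonormal, the recovered eigenpairs $(\ptilde_k, \betatilde_k)$ satisfy, up to a permutation, $\enorm{\betatilde_k - W'\beta_k} \lesssim \opnorm{\tiltil - \clean}$ and $\abs{\ptilde_k - p_k^{-1/2}}\lesssim \opnorm{\tiltil - \clean}$.

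Finally I would dewhiten and undo the MLR-to-MLDS scaling. Applying $\betahat_k = \ptilde_k \pseudo{(\Wtilde')}\betatilde_k$ and tracking the error through $\pseudo{(\Wtilde')}$ (again amplified by the whitening conditioning, producing the $\sigma_K^{1/2}/\pmin^{3/2}$ and $p_k^{3/2}$ factors), together with $\beta_k = \sigma_u\gscrL_k$ to pass from $\betahat_k$ to $\widehat{\gscr}\uL_k$, gives the stated per-component bounds on $\opnorm{\widehat{\gscr}\uL_{\pi(k)} - \gscrL_k}$ and $\abs{\phat_{\pi(k)} - p_k}$ in \eqref{eq:MLDS-err-bnd}. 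Collecting the requirement that each of the statistical, perturbation, whitening, and decomposition errors be $\lesssim \eps$ (with the small-$\eps$ regime $\eps \lsim \sigma_y^3/(\sigma_K^{3/2}\pmin)$ ensuring the perturbation bounds stay in their valid range) then yields exactly the displayed lower bounds on $N_2 T$, $N_3 T$, and $L$, completing the proof.
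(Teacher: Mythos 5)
Your outline follows the same route as the paper's own proof: Proposition \ref{prop:MLDS-result} is proved there exactly by the reduction you describe, packaging the MLR analysis as Theorem \ref{thm:perturbed-MLR} (whose proof uses your decomposition $\Mtilde_2 - M_2 = (\Mtilde_2-\Mhat_2)+(\Mhat_2-M_2)$, truncation-plus-Bernstein concentration over nets of size $9^{Lm}$ and $33^K$, the whitening perturbation lemmas, the robust tensor power method, and dewhitening) and then mapping parameters: $d = Lm$, effective sample size $NT/L$, $b^2 = \sigma_u^2\Gamma_{\max}$, $\sigma_\eta^2 \leq \sigma_w^2\Gamma_{\max}$, $\sigma_\xi^2 \lsim (\sigma_u^2+\sigma_w^2)C_\rho\rho^L/(1-\rho)$, and $\opnorm{M_3}\lsim\Gamma_{\max}^{3/2}$ into the stated conditions on $N_2T$, $N_3T$, $\eps$, and $L$.

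One step of your sketch is wrong as written, and it is worth fixing because the $\pmin^{-3/2}$ factor in \eqref{eq:MLDS-err-bnd} hinges on it. You assert that $\set{W'\beta_k}_{k\in[K]}$ are orthonormal and that the power method recovers them with error $\enorm{\betatilde_k - W'\beta_k}\lsim \opnorm{\tiltil-\clean}$. They are not orthonormal: whitening gives $W'M_2W = I_K$, i.e.\ $\sum_{k} \lrp{\sqrt{p_k}W'\beta_k}\lrp{\sqrt{p_k}W'\beta_k}' = I_K$, so the orthonormal factors are $\sqrt{p_k}\,W'\beta_k$, while $\enorm{W'\beta_k} = 1/\sqrt{p_k}$. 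Correspondingly $\clean = \sum_{k} p_k^{-1/2}\lrp{\sqrt{p_k}W'\beta_k}^{\ot 3}$, so Lemma \ref{lem:tensor-robustness} applies with eigenvectors $\sqrt{p_k}W'\beta_k$ and eigenvalues $p_k^{-1/2}$; your own eigenvalue claim $\abs{\ptilde_k - p_k^{-1/2}}\lsim\opnorm{\tiltil-\clean}$ presupposes exactly this scaling, so your two claims in that step are mutually inconsistent. The issue is not cosmetic: Algorithm \ref{alg:tensor-power-method} outputs unit vectors, so it cannot converge to $W'\beta_k$, and the dewhitening map $\phat_k = 1/\ptilde_k^2$, $\betahat_k = \ptilde_k\pseudo{(\Wtilde')}\betatilde_k$ recovers $(p_k,\beta_k)$ only because $\ptilde_k\approx p_k^{-1/2}$ supplies the factor missing from $\sqrt{p_k}W'\beta_k$. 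Tracking these $\sqrt{p_k}$ scalings through the dewhitening, together with the $p_k$-dependent amplification in the eigenvector bound and $\opnorm{\pseudo{(\Wtilde')}}\lsim\sigma_K^{1/2}$, is precisely what yields $\eps\,\sigma_K^{1/2}/\pmin^{3/2}$ and $\eps\,p_k^{3/2}$; with the rescaling corrected, the rest of your argument goes through as in the paper.
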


The proof of Proposition \ref{prop:MLDS-result} is found in Section \ref{sec:MLDS-proof}. %
It proceeds by first bounding estimation error for MLR with noise and perturbations (see Theorem~\ref{thm:perturbed-MLR}), and then translates those bounds to estimation error in Markov parameters for MLDS. In more detail, we bound first the deviation of $\Mtilde_2$ from $M_2$, then the estimation error of the whitening matrix $\Wtilde$ derived from $\Mtilde_2$, and finally the estimation error of $\tiltil$ from $\clean$. In each step, we use various concentration results and control the effect of the perturbations $\xi_i$.
Note that $\Mtilde_2$ and $\Mtilde_3$ involve 4th and 6th order moments of the regressor random variables which are effectively $d$- and $K$-dimensional, respectively, leading to polynomial dependence on the dimensions and variance parameters.

Next, results on the robustness of the tensor power method \citep{anandkumar2014tensor} are applied to transfer bounds on $\opnorm{\tiltil - \clean}$ to the orthonormalized components of the tensor $\tiltil$, i.e., the whitened projections of the mixture components and their corresponding mixture weights $\set{(\ptilde_k, \betatilde_k)}_{k \in [K]}$.
The estimation error of the whitened mixture components is propagated through a dewhitening step, yielding estimates for $\set{(\phat_k, \betahat_k)}_{k \in [K]}$.
We obtain Proposition \ref{prop:MLDS-result} by adapting this analysis to estimating $\set{(\phat_k, \widehat{\gscr}\uL_k)}_{k \in [K]}$ from input-output data.

\paragraph{Interpretation of results.}
Let us rewrite above sample complexity results in Proposition \ref{prop:MLDS-result} in terms of upper bounds on estimation error, for simplicity setting $N_2 = N_3$, ignoring log factors, and keeping only the dependence on $N, T, L$ and $\rho$. Then we have that given $N, T$, and $L$, the estimation error $\eps$ of the $L$ impulse response parameters of the mixture components roughly scales as 
\begin{align}
    \eps \gsim \frac{L^3}{\sqrt{NT}} + L\rho^{L/2}.
\end{align}
The first term of the error decreases as $1/\sqrt{NT}$ which is to be expected from solving a linear regression with a sample size of $NT$. However, to circumvent the dependency structure in the covariates $\ubar_{it}$, we take every $L$th sample of each trajectory, cutting the effective sample size to $NT/L$. Furthermore, as we increase $L$, the dimension $Lm$ of the estimated parameters increases, which enters polynomially into the estimation error bound. The second term of the error, $L\rho^{L/2}$, is due to the tail of the truncated impulse response sequence, corresponding to the perturbations $\xi_i$ in the MLR model (Definition \ref{def:MLR}), and decreases exponentially with $L$ for stable systems. By growing $L$ at a rate of $(NT)^{1/(6+a)}$ as $NT \to \infty$, for $a>0$, the two components of the estimation error asymptotically decrease to zero.

A particularly interesting property of the tensor decomposition approach to mixtures of linear systems, and which arises in other cases of learning multiple models with latent structure, is the tradeoff between $N$ and $T$ in finite sample error bounds. The setting of observing just a few trajectories, but where each trajectory is long (small $N$, large $T$), may yield the same estimation error levels as the setting of observing many short trajectories (large $N$, small $T$) from the mixture. The flexibility in sample complexity from assuming and learning a latent structure can prove useful in a wide range of data sets with varying compositions of individual versus collective sample sizes.

\section{Simulations}\label{sec:simulations}
\begin{figure}[htbp]
    \floatconts
    {fig:bigfig1}
    {\caption{\vspace{-1em}Results for estimating the first $L=7$ Markov parameters of $K=3$ mixture components. (a) Average parameter estimation error vs. $N$ for various $T$. Standard errors for 15 trials shown. (b) Level sets of estimation error as a function of $N$ and $T$.}}
    {%
    \subfigure[]{
        \includegraphics[width=0.48\textwidth]{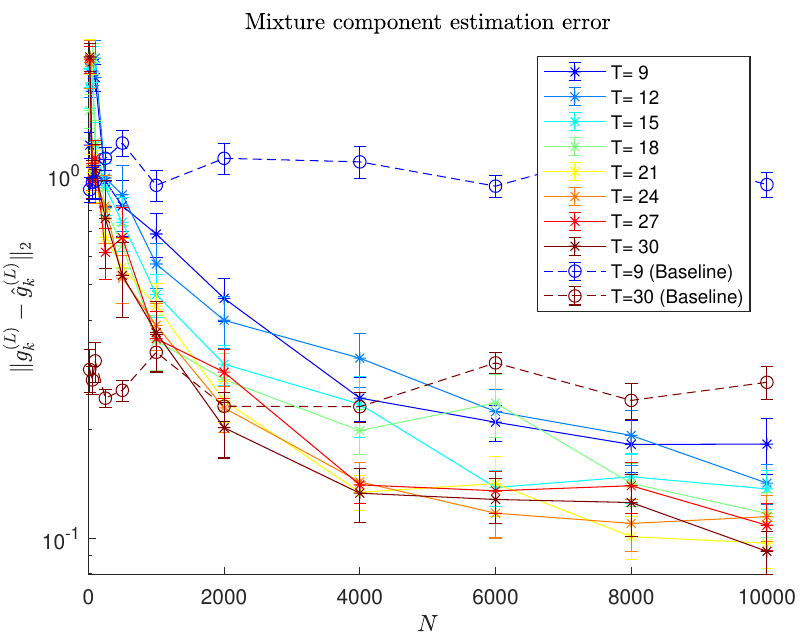}
    }
    \subfigure[]{%
        \includegraphics[width=0.48\textwidth]{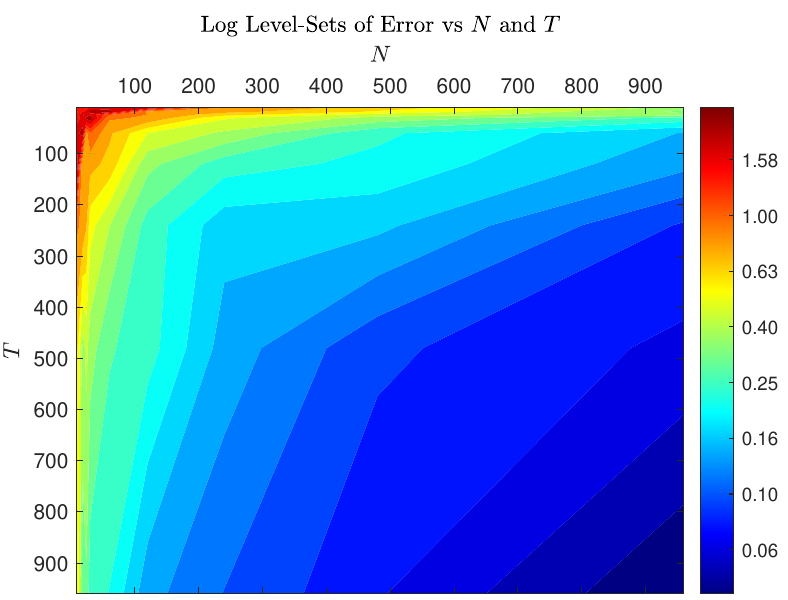}
    }%
    }
\end{figure}

We evaluate the performance of Algorithm \ref{alg:MLDS-estimator} in estimating mixtures of linear systems through a series of simulations. 
In each trial, $K=3$ single-input single-output linear models of order $n=3$ were generated, with spectral radii varying between 0.6 and 0.9. $N$ unlabeled trajectories of length $T$ were sampled from the resulting mixture of $K$ models, for $T\in [9,960]$ and $N\in [9,10,000]$. The first $L=7$ Markov parameters of each mixture component were estimated.

Figure \ref{fig:bigfig1} plots the estimation error $(1/K)\sum_{k=1}^K \big\lVert\gscr_k\uL - g_k\uL\big\rVert_2$ for Algorithm \ref{alg:MLDS-estimator}, both (a) as a function of $N$ for various $T$, and (b) as level sets on the $(N,T)$ plane. Additionally, in Figure \ref{fig:bigfig1}(a), we plot for comparison the error of the ``baseline estimator'' which estimates Markov parameters individually for each observed trajectory using ordinary least squares \citep{oymak2021revisiting}. The error for the baseline estimator is calculated as $(1/N) \sum_{i=1}^N \big\lVert\gscr_{k_i}\uL - \hat{g}_i\uL \big\rVert_2$. 
Although the tensor approach initially has higher error in the small $N$ regime, likely due to the use of higher-order moments, it is able to leverage shared structure across $N$ trajectories to achieve lower estimation error for larger $N$ versus the baseline estimates. %
This effect is particularly apparent for smaller $T$, which is a common regime in practical applications. 

Figure \ref{fig:bigfig1}(b) further shows how the performance of the MLDS estimator improves with both $N$ and $T$. Empirically, we find that the tensor decomposition approach is quite sensitive to the conditioning of the matrix $M_2$ of Markov parameters, which is related to the degree of non-degeneracy of the mixture parameters. %
In particular, the norm of the whitening matrix $W$ depends on the smallest singular value of the estimated $M_2$, which affects the downstream estimation of the whitened third-order tensor $M_3^W$ and the accuracy of the final dewhitened estimates.
For future work, it would be interesting to combine the MLDS estimator with iterative mixture estimation methods, which may improve the accuracy and sample complexity of the approach.

\bibliography{dynamics}

\newpage
\appendix
\section{APPENDIX}
\subsection{Mixtures of Linear Regression Recovery Results}\label{sec:thm-perturbed-MLR}
In this section, we present $(\eps, \delta)$-PAC learnability conditions for the problem of learning mixtures of perturbed linear regressions (as described in Definition \ref{def:MLR}. For the following result, we assume there exists a constant $b>0$ such that $\enorm{\beta_k} \leq b$ for all $k \in [K]$, and we define $\sigma_y^2 \coloneq b^2 + \sigma_{\eta}^2$. To simplify error bound expressions, we assume without loss of generality that $\sigma_{\xi} \leq 1$ (otherwise we can normalize data and update the value of $b$). Finally, we assume that the minimum mixture weight is lower bounded, i.e., $\pmin = \min_{k \in [K]} p_k > 0$ and that $\sigma_K(M_2) > 0$, mirroring the mixture assumptions in Section \ref{sec:Assumptions} for the MLDR model. 
\begin{theorem}\label{thm:perturbed-MLR}
Let $\set{(\betahat_k, \phat_k)\mid k \in [K]}$ be estimates obtained from running Algorithm \ref{alg:MLR-estimator} given data $\set{(x_i, \ytilde_i) \in \R^d \by \R}_{i \in [N]}$ generated from the MLR model in Definition \ref{def:MLR}.
Let $\sigma_K:= \min(\sigma_K(M_2),1)$.
For any $\eps>0, \delta \in (0,1)$, suppose the hyperparameters $(\Riter, \Rstart)$ in the subroutine Algorithm \ref{alg:tensor-power-method} satisfy \eqref{eq:alg-param-values}.
When the following conditions are satisfied: 
\begin{align}
N_2 &\gsim \max \set*{\frac{\sigma_y^4\sigma_K^5 \opnorm{M_3}^2}{\eps^2\pmin^2} \ln^4 \lrp{\frac{N_2\cdot 9^{d}}{\delta}}\ln\lrp{\frac{9^{d}}{\delta}}, \frac{\delta}{9^{d}}\cdot \frac{\sigma_y^8\opnorm{M_3}^4}{\sigma_K^{10}\eps^4\pmin^4}}, \\
N_3 &\gsim \max \set*{ \frac{ \sigma_y^6}{\eps^2\pmin^2\sigma_K^3} \ln^6 \lrp{\frac{33^K\cdot N_3}{\delta}}\ln\lrp{\frac{33^K}{\delta}},\frac{\delta}{33^K}\cdot \frac{\sigma_y^{12}}{\eps^4\pmin^4\sigma_K^6}}, \\
\sigma_\xi &\lsim \frac{\eps\pmin \sigma_K^3}{\sigma_y} \min\set*{\frac{1}{\opnorm{M_3} \ln \lrp{\frac{9^{d} N_2}{\delta}}\ln\lrp{\frac{N_2}{\delta}}} , \frac{1}{\sigma_y \ln^{3/2} \lrp{\frac{33^K N_3}{\delta}}\ln^{3/2} \lrp{\frac{N_3}{\delta}}}}\\
\eps &< \frac{\sigma_y^3}{1.55\sigma_K^{3/2} \pmin},
\end{align}%
then $\Pr{\opnorm{\tiltil - \clean} \lsim \eps} \geq 1-\delta$, and 
there exists a permutation $\pi: [K] \to [K]$ such that for all $k \in [K]$, 
\begin{align}
\opnorm{\betahat_{\pi(k)} - \beta_k} \leq \eps \cdot \frac{\sigma_K^{1/2}}{\pmin^{3/2}} , \quad \abs{\phat_{\pi(k)} - p_k} < \eps p_k^{3/2}
\end{align}
with probability at least $1-\delta$.
\end{theorem}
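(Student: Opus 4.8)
The plan is to track the estimation error through the pipeline of Algorithm \ref{alg:MLR-estimator}: from the second-order moment estimate $\Mtilde_2$, to the whitening matrix $\Wtilde$, to the whitened third-order tensor $\tiltil$, through the robust tensor power method, and finally through the dewhitening step. The central quantity to control is $\opnorm{\tiltil - \clean}$, since once this is small the robustness guarantees for orthogonally decomposable tensors yield accurate whitened components, which dewhitening then converts into the claimed bounds on $\betahat_k$ and $\phat_k$. Accordingly, establishing $\Pr{\opnorm{\tiltil - \clean} \lsim \eps} \geq 1-\delta$ is the first of the two stated conclusions and the engine for the second.

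First I would bound $\opnorm{\Mtilde_2 - M_2}$. Writing $\Mtilde_2 = \Mhat_2 + (\Mtilde_2 - \Mhat_2)$ with $\Mhat_2$ the clean estimator from \eqref{eq:M23hat-MLR}, I would use $\E[\Mhat_2] = M_2$ and apply a matrix concentration bound to the i.i.d.\ sum defining $\Mhat_2$. The summands $y_i^2(x_i^{\ot 2} - I_d)$ are degree-four polynomials in subgaussian quantities and hence have heavy (sub-Weibull) tails, so I would truncate at a level scaling like $\poly(\ln(N_2/\delta))$ and apply a truncated matrix Bernstein inequality; this is precisely what produces the $\poly(\ln(1/\delta))$ dependence rather than a $\poly(1/\delta)$ rate. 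The residual $\Mtilde_2 - \Mhat_2$ collects the contributions of the perturbations and is controlled directly using $\xi_i \sim \subG(0,\sigma_\xi^2)$ together with the stated upper bound on $\sigma_\xi$, giving a high-probability bound on $\opnorm{\Mtilde_2 - M_2}$ in terms of $N_2$ and $\sigma_\xi$.

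Next I would convert the $M_2$ error into whitening error: since $\Wtilde$ is built from the rank-$K$ SVD of $\Mtilde_2$ and $W$ from that of $M_2$, I would apply Weyl and Davis--Kahan perturbation bounds to the singular values and singular subspace to obtain a bound on $\opnorm{\Wtilde - WR}$ for some orthogonal $R$, where the inverse square-root scaling of the whitening surfaces the dependence on $\sigma_K$. I would then decompose $\opnorm{\tiltil - \clean}$ into three pieces: the error from whitening the true tensor with $\Wtilde$ instead of $W$ (controlled by the preceding bound and $\opnorm{M_3}$), the sampling error of the clean whitened estimate around $M_3^W$ (a tensor concentration argument analogous to the $M_2$ step, now involving sixth-order moments of subgaussians, again handled by truncation), and the perturbation error from the $\xi_i$ in $\tiltil$ relative to the clean tensor. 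Summing these and imposing the stated lower bound on $N_3$ and upper bound on $\sigma_\xi$ yields $\opnorm{\tiltil - \clean} \lsim \eps$.

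Finally I would invoke the robustness analysis of the robust tensor power method \citep{anandkumar2014tensor}: since $\clean$ has orthonormal components with weights $p_k$ and the hyperparameters $(\Riter,\Rstart)$ satisfy \eqref{eq:alg-param-values}, the method returns $(\ptilde_k,\betatilde_k)$ whose errors are controlled by $\opnorm{\tiltil - \clean}$ and the separation of the weights through $\pmin$. Propagating these through the dewhitening formulas $\phat_k = 1/\ptilde_k^2$ and $\betahat_k = \ptilde_k \pseudo{(\Wtilde')}\betatilde_k$, and combining with the whitening bound, delivers the final errors, with the $\pseudo{(\Wtilde')}$ factor producing the $\sigma_K$ and $\pmin$ dependence in the stated bounds. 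The main obstacle I anticipate is the third-order tensor concentration: controlling sixth-order moments of products of subgaussian covariates sharply enough to retain logarithmic rather than polynomial dependence on $1/\delta$, while simultaneously keeping the amplification by $\sigma_K$ and $\pmin$ under control as the error passes through whitening and dewhitening.
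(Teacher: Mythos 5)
Your proposal is correct and follows essentially the same route as the paper's proof: the same triangle-inequality decompositions for $\Mtilde_2$ and for $\tiltil$ (a perturbation term controlled via the $\sigma_\xi$ condition, a truncation-plus-Bernstein concentration term giving the $\poly(\ln(1/\delta))$ rates, and a whitening-propagation term), followed by the robustness guarantee of the tensor power method of \cite{anandkumar2014tensor} and the identical dewhitening propagation through $\phat_k = 1/\ptilde_k^2$ and $\betahat_k = \ptilde_k \pseudo{(\Wtilde')}\betatilde_k$. The only cosmetic differences are that you invoke Weyl/Davis--Kahan bounds (up to an orthogonal alignment $R$, which is harmless since any rotation of $W$ is an equally valid whitening matrix and cancels in the dewhitening step) where the paper uses the packaged whitening perturbation result of \cite{yi2016solving} (Lemma \ref{lem:whiten-perturb}), and a truncated matrix Bernstein inequality where the paper reduces to scalar Bernstein via a sphere covering; these are interchangeable technical devices within the same argument.
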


\begin{proof}
\setlength{\parindent}{15pt}
We first bound the estimation error of $\Mtilde_2$ for $M_2$, then propagate this error through to the whitening matrix $\Wtilde$ and then to the estimation error of $\tiltil$ for $\clean$. We then apply a standard robustness result for orthogonal tensor decomposition to obtain estimation error bounds for the individual components and weights of the mixture. 

\vspace{1em}
\noindent \textbf{Estimating M2.}
Recall the definitions of $\Mtilde_2$ and $\Mhat_2$ in \eqref{eq:M23tilde-MLR} and \eqref{eq:M23hat-MLR}, respectively. 
By the triangle inequality, we decompose the $M_2$ estimation error as 
\begin{align}\label{eq:triangle-perturb-M2}
\opnorm{\Mtilde_2 - M_2} \leq \opnorm{\Mtilde_2 - \Mhat_2} + \opnorm{\Mhat_2 - M_2}.
\end{align}
Let $\eps_2 = \eps \sigma_K^{5/2} /\opnorm{M_3}$.
By Lemma \ref{lem:M2-perturb}, when 
\begin{align}\label{eq:pf-xi-M2-cond}
\sigma_{\xi} \lsim \frac{\eps_2}{\sigma_y \ln \lrp{\frac{N_2}{\delta}} \ln\lrp{\frac{9^d N_2}{\delta}}}, 
\end{align}
we have that $\opnorm{\Mtilde_2 - \Mhat_2} \leq \eps_2$ with probability at least $1-\delta$. Essentially, if $\xi_i$ is small enough with high probability, then estimating the moment from the perturbed observations $\ytilde_i$ is not too far from estimating the moment based on the unperturbed (but noisy) samples $y_i$. 

Next, by Corollary \ref{cor:M2-conc}, when $N_2$ satisfies condition \eqref{eq:N2-condition} in Corollary \ref{cor:M2-conc} with $(\eps_2, \delta)$, when $\eps_2 < \sigma_y^2/1.51$ (to simplify the expressions)
\begin{align}\label{eq:pf-N2-cond-copy}
N_2 \gsim \max \set*{\frac{\sigma_y^4}{\eps_2^2} \ln^4 \lrp{\frac{N_2\cdot 9^d}{\delta}}\ln\lrp{\frac{\cdot 9^d}{\delta}}, \frac{\delta}{9^d}\lrp{\frac{\sigma_y^8}{\eps_2^4}}} 
\end{align}
with probability at least $1-\delta$, $\opnorm{M_2 - \Mhat_2} \leq \eps_2$. 
In total, under conditions \eqref{eq:pf-xi-cond} and \eqref{eq:pf-N2-cond-copy}, $\opnorm{\Mtilde_2 - \Mhat_2} \leq \eps_2$.

Let us now impose the condition that
\begin{align}\label{eq:pf-eps3-cond}
\eps < \frac{\opnorm{M_3}}{3\sigma_K^{3/2}},
\end{align}
so that $\eps_2 < \sigma_K/3$. Then by Lemma \ref{lem:whiten-perturb}, $\opnorm{\Wtilde} \leq 2\opnorm{W}\leq 2\sigma_K^{-1/2} $ and by Corollary \ref{cor:whitened}, 
\begin{align}\label{eq:pf-whitened}
\opnorm{\mixedtil - \clean} &\lsim  \frac{\opnorm{\Mtilde_2-M_2}}{\sigma_K(M_2)^{5/2}} \opnorm{M_3} \lsim \eps.
\end{align}
Note that we can bound $\opnorm{M_3}$ by $b^3$, and also that $\sigma_y^3\geq b^3 \geq  \opnorm{M_3}$.
Additionally, we note that Lemma \ref{lem:whiten-perturb} implies that $\opnorm{\pseudo{\Wtilde}} \leq 2\opnorm{\pseudo{W}}=2\sigma_K^{1/2}$ and 
\begin{align}
    \opnorm{\pseudo{W} - \pseudo{\Wtilde}} &\leq 2\opnorm{\pseudo{W}}\opnorm{M_2 - \Mtilde_2}/\sigma_K(M_2) \\
    & = 2\sigma_K^{-1/2}\eps_2 \\
    & = 2\eps \sigma_K^2/\opnorm{M_3}.
\end{align}
These bounds will be used in the dewhitening part of the analysis. 

\vspace{1em}
\noindent \textbf{Estimating $M_3$.}
Again by the triangle inequality, we decompose the estimation error of the third order whitened tensor $\clean$: 
\begin{align}
\opnorm{\tiltil - \clean}&\leq \opnorm{\tiltil-\Mhat_3^{\Wtilde}} + \opnorm{\Mhat_3^{\Wtilde} - \mixedtil} + \opnorm{\mixedtil - \clean}.
\end{align}
The first term on the right hand side of the inequality captures the effects of the perturbations $\xi_i$ on the estimate, the second term captures the standard empirical moment estimation of a third order tensor, and the third term captures the effects of using an estimated whitening matrix rather than the true one. 

By Lemma \ref{lem:tensor-factor}, the first term can be decomposed as $\opnorm{\tiltil - \Mhat_3^{\Wtilde}} \leq \opnorm{\Mtilde_3 - \Mhat_3}\opnorm{\Wtilde}^3$. Combining this with Lemma \ref{lem:M3-perturb}, with $V = \Wtilde$, to control the effect of the perturbations $\xi_i$ on the $M_3$ estimate, it holds that when 
\begin{align}\label{eq:pf-xi-M3-cond}
\sigma_\xi \leq \frac{\eps}{\sigma_y^2 \opnorm{\Wtilde}^6 \ln^{3/2} \lrp{\frac{N_3 33^K}{\delta}}\ln^{3/2} \lrp{\frac{N_3}{\delta}}},
\end{align} 
we have $\Pr{\opnorm{\tiltil - \Mhat_3^{\Wtilde}} \geq \eps} \leq 1-\delta$.

The second term in the inequality is bounded by concentration results for the empirical third order moment $\Mhat_3$ evaluated on the empirical whitening matrix $\What$. When $N_3$ satisfies \eqref{eq:N3-condition} in Corollary \ref{cor:tensor-conc} with $(\eps, \delta)$ and $V = \Wtilde$, then $\opnorm{\Mhat_3^{\Wtilde} - \mixedtil} \leq \eps$ with probability at least $1-\delta$.
Applying the bound $\opnorm{\Wtilde} \leq 2\sigma_K^{-1/2}$ from the $M_2$ analysis, the condiiton on $N_3$ becomes: 
\begin{align}\label{eq:pf-N3-cond}
N_3 \gsim \max \set*{ \frac{ \sigma_y^6}{\eps^2 \sigma_K^3} \ln^6 \lrp{\frac{33^K\cdot N}{\delta}}\ln\lrp{\frac{\cdot 33^K}{\delta}},\frac{\delta}{33^K}\frac{\sigma_y^{12}}{\eps^4\sigma_K^6}}.
\end{align}
where we additionally impose the benign condition that $\eps < \sigma_y^3/(1.55\sigma_K^{3/2})$ to simplify this condition. 
The third term, capturing the effect of the whitening matrix estimation error, is controlled in \eqref{eq:pf-whitened}. 

Thus, under the combined conditions of \eqref{eq:pf-xi-M2-cond}, \eqref{eq:pf-N2-cond-copy}, \eqref{eq:pf-eps3-cond}, \eqref{eq:pf-xi-M3-cond}, \eqref{eq:pf-N3-cond} and the assumptions on $\eps$, which can be simplified to $\eps < \opnorm{M_3}\sigma_K^{-3/2}$, we have that 
\begin{align}
\Pr{\opnorm{\tiltil - \clean} \lsim \eps} \geq 1-\delta.
\end{align}

\noindent \textbf{Tensor Decomposition.}
We now propagate this tensor estimation bound through robustness results for orthogonal tensor decomposition and through dewhitening the tensor components to obtain bounds on estimating the individual components and weights of the mixture of linear regressions.
Since we can decompose the true whitened tensor as 
\begin{align}
\clean = \sum_{k=1}^N \frac{1}{p_k} (\sqrt{p_k}W'\beta_k)^{\ot 3}
\end{align}
where it can be shown \citep{anandkumar2014tensor} that $\set{\sqrt{p_k}W'\beta_k}_{k \in [K]}$ are orthonormal, running the orthogonal tensor decomposition in Algorithm \ref{alg:tensor-power-method} gives us estimates $\set{(\betatilde_k, \ptilde_k)}_{k \in [K]}$ of $\set{(\sqrt{p_k}W'\beta_k, 1/p_k)}_{k \in [K]}$. 
By Lemma \ref{lem:tensor-robustness}, whenever
\begin{align}\label{eq:pf-eps-tensor}
\eps \lsim \pmin/K,
\end{align}  
and given $\delta$, for $(\Riter, \Rstart)$ in Algorithm \ref{alg:tensor-power-method} satisfying
\begin{align}
\Riter \lsim \ln(K) + \ln\ln\lrp{\frac{1}{\eps}}, \quad \Rstart \gsim \poly(K)\ln(1/\delta), 
\end{align}
with probability at least $1-\delta$, there exists a permutation $\pi:K \to K$ such that for all $k \in [K]$, 
\begin{align}\label{eq:pf-tensor-robust}
\enorm{\betatilde_{\pi(k)} - \sqrt{p_k}W'\beta_k} \lsim \eps/p_k , \quad 
\abs{\ptilde_{\pi(k)} -\frac{1}{\sqrt{p_k}}} \lsim \eps.
\end{align} 

\noindent \textbf{Dewhitening.} 
We now dewhiten the output of the orthogonal tensor decomposition to produce estimates. For simplicity let us assume the permutation $\pi$ in \eqref{eq:pf-tensor-robust} is the identity. Then let us define the dewhitened mixture component estimates as
\begin{align}
\phat_k = \frac{1}{\ptilde_k^2}, \quad \betahat_k = \ptilde_k \pseudo{(\What')} \betatilde_k.
\end{align}
To propagate the estimation error through the dewhitening process, note that 
\begin{align}
\enorm{\betahat_k - \beta_k} &= \opnorm{\ptilde_k\pseudo{(\Wtilde')}\lrp{\betatilde_k - \sqrt{p_kW'\beta_k}} + \lrp{\ptilde_k \pseudo{(\Wtilde')} - \frac{1}{\sqrt{p_k}}\pseudo{(W')}}\lrp{\sqrt{p_k}W'\beta_k}} \\
&\leq \abs{\ptilde_k}\opnorm{\pseudo{(\Wtilde')}}\opnorm{\betatilde_k - \sqrt{p_k}W'\beta_k} \\
&\quad + \lrp{\abs{\ptilde_k - \frac{1}{\sqrt{p_k}}}\opnorm{\pseudo{(\Wtilde')}} + \frac{1}{\sqrt{p_k}}\opnorm{\pseudo{(\Wtilde')} - \pseudo{(W')}}}\cdot \enorm{\sqrt{p_k}W'\beta_k} \\
&\lsim \lrp{\eps + \frac{1}{\sqrt{\pmin}}}\sigma_K^{1/2} \frac{\eps}{\pmin} + \lrp{\eps \sigma_K^{1/2} + \frac{\eps \sigma_K^2}{\opnorm{M_3} \sqrt{\pmin}}}\cdot 1 \\
&\lsim \eps \lrp{\frac{\sigma_K^{1/2}}{\pmin^{3/2}} + \sigma_K^{1/2}  + \frac{\sigma_K^2}{\opnorm{M_3}\pmin^{1/2}}}\\
&\lsim \eps \lrp{\frac{\sigma_K^{1/2}}{\pmin^{3/2}} + \frac{\sigma_K^2}{\opnorm{M_3}\pmin^{1/2}}}
\end{align}
where we used that $\set{\sqrt{p_k}W'\beta_k}_{k \in [K]}$ are orthonormal, the bounds on $\pseudo{\Wtilde}$ and $\pseudo{\Wtilde} - \pseudo{W}$ from the $M_2$ analysis, and the tensor decomposition recovery bounds in \eqref{eq:pf-tensor-robust}. 

Next we bound the estimation error $\phat_k - p_k$. Let $a := \sqrt{\phat_k}$ and $b := \sqrt{p_k}$. Then \eqref{eq:pf-tensor-robust} is of the following form 
\begin{align}
& \abs{\frac{1}{a} - \frac{1}{b}} \lsim \eps \implies \abs{b-a} \lsim ab \eps, 
\end{align}
which also implies that $a \lsim\frac{b}{1-\eps b}$ when $\eps \lsim 1/b$.
Note that 
\begin{align}
\abs{\phat_k - p_k} &= \abs{a^2 - b^2} \\
&= \abs{a-b}(a+b) \\
&\leq \eps b^3 \lrp{1 + \frac{1}{1-\eps b}}\frac{1}{1-\eps b}
\end{align}
Since \eqref{eq:pf-eps-tensor} implies that $\eps \lsim 1/K^2 \leq 1/4$ when $K \geq 2$, we can bound $\frac{1}{1-\eps b}$ by a constant, and so we have that 
\begin{align}
\abs{\phat_k - p_k} &\lsim \eps p_k^{3/2}.
\end{align}
\end{proof}

\subsection{Comparison to \texorpdfstring{\citet{bakshi2023tensor}}{Bakshi et al.\ (2023b)}}\label{sec:bakshi-comparison}
In this section, we further elaborate on the relationship between the present work and \cite{bakshi2023tensor} that was discussed in Section \ref{sec:related-work} on related work.
\cite{bakshi2023tensor} estimates the tensor with $(i,j,l)$-th block component $\sum_{k \in [K]} p_k \lrp{C_k A_k^{i} B_k} \ot \lrp{C_k A_k^{j} B_k} \ot \lrp{C_k A_k^{l} B_k}$ using the tensor whose $(i,j,l)$-th component is 
\begin{align}\label{eq:bakshi-est}
\That(a, b, c) =\frac{1}{N} \sum_{i = 1}^N  y_{i, t_0+a+b+c+2} \ot u_{i, t_0+a+b+2} \ot y_{i, t_0+a+b+1} \ot u_{i, t_0+a+1} \ot y_{i, t_0+a} \ot u_{i, t_0},
\end{align}
with $t_0 = 0$.
For a parameter $s$ related to the observability and controllability of the systems, \cite{bakshi2023tensor} estimates the first $2s+1$ Markov parameters $\set{D_k, C_kA_k^i B_k \mid i = 0, ..., 2s, k \in [K]}$ and associated mixture weights $\set{p_k}_{k \in [K]}$ by decomposing the $mp(2s+1) \by mp(2s+1) \by mp(2s+1)$ tensor $\That$ constructed by flattening each of the component Markov parameter estimates into a vector. Notably their estimator only uses the first $6s$ samples from each trajectory, and estimation guarantees use concentration in the number of independent trajectory samples $N$ drawn from the mixture, and does not include any concentration in $T$, the length of each trajectory. 

Meanwhile our estimator is a whitened version of the following third order tensor, which more closely resembles the moment estimators used in standard mixtures of linear regression: 
\begin{align}
\Mtilde_3 = \frac{L}{NT} \sum_{i=1}^N \sum_{t \in\Jscr} \lrp{y_{i,t}^3 \ubar_{i,t}^{\ot 3} -  y_{i,t}^3 \Escr_k(\ubar_{i,t})},
\end{align}
but still estimates a third order tensor with the $(i,j,l)$-th entry as the mixture of the product of the $i$-th, $j$-th, and $l$-th Markov parameters of the systems.
While we assumed for simplicity that the scalar outputs $y_{i,t}$, we can easily extend our approach to multi-dimensional outputs $(p>1)$ by considering each dimension of the measured outputs separately and running the analogous moment estimator for each dimension. 
The form of our estimator allows us to use samples from the whole length of the trajectory $T$ as well as observations $N$, and to obtain estimation error upper bounds with concentration in both $N$ and $T$. 

Furthermore, \cite{bakshi2023tensor} relies on Jennrich's algorithm for tensor decomposition, which, while it can be applied to non-orthogonal tensors, does not come with explicit robustness guarantees. Rather, it is only shown that the error is polynomial in various dimensional parameters. In contrast, we use the tensor power iteration for tensor decomposition of the orthonormalized third order tensors, which comes with explicit robustness guarantees \cite{anandkumar2014tensor}, and is also practically efficient with fast convergence rates. 
\subsection{Proof of Proposition \ref{prop:MLDS-result}}\label{sec:MLDS-proof}
\begin{proof}
We apply Theorem \ref{thm:perturbed-MLR} for mixtures of linear regression with the mapping detailed in Section \ref{sec:analysis}. We bound the norm of the regression coefficients: 
\begin{align}
\enorm{\beta_i}^2 &= \sigma_u^2 \enorm{\gL_i}^2 = \sigma_u^2 \sum_{t=0}^{L-1} \enorm{g_i(t)}^2 \leq  \sigma_u^2 \Gamma_{\max} =:b^2.
\end{align}
Next, we note that $\eta_{i,t} = \brk{f_i\uL, \wbar_{it}}$ is subgaussian with variance proxy $\sigma_w^2 \opnorm{f_i\uL}^2 \leq \sigma_w^2 \Gamma_{\max}$.
Finally, the error term $\xi_{i,t}$ due to the truncated impulse response is also subgaussian with variance proxy 
\begin{align}
\sum_{j=L+1}^t (\sigma_u^2 + \sigma_w^2) \enorm{g(k)}^2 & \leq (\sigma_u^2 + \sigma_w^2) \lrp{C_\rho \rho^L \sum_{k=1}^{\infty} \rho^k}^2 \\
&\leq (\sigma_u^2 + \sigma_w^2)C_\rho \cdot \rho^L \cdot \frac{1}{1-\rho},
\end{align}
where we used the exponential decay rate bound $\rho$ as defined in the model assumptions. 
Then for any $R>0$, when 
\begin{align}\label{eq:pf-L-xi-cond}
L \geq \ln\lrp{\frac{C_{\rho}\rho (\sigma_u^2 + \sigma_w^2)}{R^2(1-\rho)}}/\ln\lrp{1/\rho},
\end{align}
we have $\sigma_\xi^2 \leq R^2$. 
Finally, plugging these components into Theorem \ref{thm:perturbed-MLR} with ambient dimension $d = Lm$, effective sample size $NT/L$, and 
\begin{align}
\sigma_y^2 = b^2 + \sigma_{\eta}^2 = (\sigma_u^2+\sigma_w^2) \Gamma_{\max}
\end{align}
we have that for $\eps > 0, \delta \in (0,1)$, when the following conditions hold: 
\begin{align}
\frac{N_2 T}{L} &\gsim \frac{\sigma_y^4\opnorm{M_3}^2}{\eps^2\pmin^2} \lrp{\sigma_K^5  \ln^4 \lrp{\frac{N_2 T\cdot 9^{Lm}}{\delta L}}\ln\lrp{\frac{9^{Lm}}{\delta}} +  \frac{\delta}{9^{Lm}}\cdot \frac{\sigma_y^4\opnorm{M_3}^2}{\eps^2\pmin^2}}, \\
\frac{N_3 T}{L} &\gsim \frac{ \sigma_y^6}{\eps^2\pmin^2\sigma_K^3}\lrp{ \ln^6 \lrp{\frac{33^K\cdot N_3T}{\delta L}}\ln\lrp{\frac{33^K}{\delta}}+\frac{\delta}{33^K}\cdot \frac{ \sigma_y^6}{\eps^2\pmin^2\sigma_K^3}}, \\
\sigma_\xi &\lsim \frac{\eps\pmin \sigma_K^3}{\sigma_y} \lrp{\frac{1}{\opnorm{M_3} \ln \lrp{\frac{9^{Lm} N_2}{\delta}}\ln\lrp{\frac{N_2}{\delta}}} \land \frac{1}{\sigma_y \ln^{3/2} \lrp{\frac{33^K N_3}{\delta}}\ln^{3/2} \lrp{\frac{N_3}{\delta}}}} \label{eq:pf-xi-cond}\\
\eps &\lsim  \frac{\sigma_y^3}{\sigma_K^{3/2} \pmin},
\end{align}
the error bounds in \eqref{eq:MLDS-err-bnd} hold with probability at least $1-\delta$, for $\delta > 0$. As the final step, we substitute in the right hand side of \eqref{eq:pf-xi-cond} as $R^2$ in \eqref{eq:pf-L-xi-cond} to obtain the condition on $L$ given the spectral radius bound $\rho$ on all components of the mixture, for the perturbations $\xi_{it}$ to be sufficiently bounded: 
\begin{align}
L \ln(1/\rho) \geq \ln \lrp{\frac{\sigma_y^4 \Gamma_{\max} C_{\rho} \rho}{\eps^2 \pmin^2 \sigma_K^3 (1-\rho)} \lrb{\Gamma_{\max}^{3/2} \ln^2\lrp{\frac{9^{Lm} N_2}{\delta}} + \sigma_y \ln^3 \lrp{\frac{33^K N_3}{\delta}}}},
\end{align}
where we used that $\opnorm{M_3} \leq \max_{k \in [K]} \enorm{\gscrL_k} \Gamma_{\max}^{3/2}$.
\end{proof}

\subsection{Lemmas for Estimating \texorpdfstring{$M_2$ and $W$}{M2 and W}}
In this section we present the statements and proofs of Lemma \ref{lem:M2-perturb}, which controls the effect of the perturbation $\xi$ in the estimation of the second order tensor $\Mhat_2$, Proposition \ref{prop:M2-conc-1}, which concentrates the estimate $\Mhat_2$ around its mean $M_2$, and Corollary \ref{cor:M2-conc} which provides sample complexity bounds for estimating $M_2$ by $\Mhat_2$. Lemma \ref{lem:single-covariance-tail} is an auxiliary lemma used to derive the concentration result. 
\begin{lemma} \label{lem:M2-perturb}
Let $\Mtilde_2$ and $\Mhat_2$ be as given in \eqref{eq:M23tilde-MLR} and \eqref{eq:M23hat-MLR}, respectively. For any $\delta \in (0,1)$, 
\begin{align}
\opnorm{\Mtilde_2 - \Mhat_2} \lsim \sigma_\xi (\sigma_y + \sigma_\xi)\ln\lrp{\frac{N}{\delta}}\ln \lrp{\frac{9^d N}{\delta}} 
\end{align}
with probability at least $1-\delta$. 
\end{lemma}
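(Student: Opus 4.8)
The plan is to expand the difference, reduce the operator norm to a supremum over a finite net of the sphere, and then bound the per-point contribution by two deliberately crude worst-case estimates that together reproduce the claimed logarithmic factors.

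First I would write $\ytilde_i = y_i + \xi_i$, so that $\ytilde_i^2 - y_i^2 = 2 y_i \xi_i + \xi_i^2 =: c_i$, and hence
\[
\Mtilde_2 - \Mhat_2 = \frac{1}{2N_2}\sum_{i \in \Nscr_2} c_i\,\lrp{x_i x_i' - I_d}.
\]
This is a symmetric matrix, so for a $1/4$-net $\Vscr$ of the unit sphere $S^{d-1}$, which may be taken with $\abs{\Vscr} \le 9^d$, the standard discretization bound gives $\opnorm{\Mtilde_2 - \Mhat_2} \le 2\sup_{v \in \Vscr}\abs{v'(\Mtilde_2 - \Mhat_2)v}$. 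For fixed $v$ with $\enorm{v}=1$ one has $v'(x_i x_i' - I_d)v = \brk{x_i, v}^2 - 1$, so by the triangle inequality and $\sum_i a_i \le N_2 \max_i a_i$,
\[
\abs{v'(\Mtilde_2 - \Mhat_2)v} \le \frac{1}{2N_2}\sum_{i \in \Nscr_2}\abs{c_i}\lrp{\brk{x_i,v}^2 + 1} \le \frac{1}{2}\lrp{\max_{i}\abs{c_i}}\lrp{\max_{i,\,v \in \Vscr}\brk{x_i,v}^2 + 1}.
\]

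The key point is that $\xi_i$ may be correlated with $x_i$ and with $y_i$, so I deliberately avoid any concentration that would require independence of the summands; instead I control the two maxima separately. For the coefficients, since $\xi_i \sim \subG(0,\sigma_\xi^2)$ and $y_i = \brk{x_i,\beta_{k_i}} + \eta_i \sim \subG(0,\sigma_y^2)$ (the variance proxies adding because $\enorm{\beta_{k_i}} \le b$ and $\eta_i$ is independent of $x_i$), a union bound over the $\le 2N$ subgaussian tails gives, with probability at least $1-\delta/2$, $\max_i \abs{\xi_i} \lsim \sigma_\xi \sqrt{\ln(N/\delta)}$ and $\max_i \abs{y_i} \lsim \sigma_y\sqrt{\ln(N/\delta)}$, whence $\max_i\abs{c_i} \le 2(\max_i\abs{y_i})(\max_i\abs{\xi_i}) + (\max_i \abs{\xi_i})^2 \lsim \sigma_\xi(\sigma_y + \sigma_\xi)\ln(N/\delta)$. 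For the quadratic factor, each $\brk{x_i,v}$ with $v \in \Vscr$ fixed is a standard Gaussian, and a union bound over the $\le 9^d N$ pairs $(i,v)$ gives, with probability at least $1-\delta/2$, $\max_{i,\,v \in \Vscr}\brk{x_i,v}^2 \lsim \ln(9^d N/\delta)$.

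Combining the two events by a final union bound and multiplying the two estimates yields $\opnorm{\Mtilde_2 - \Mhat_2} \lsim \sigma_\xi(\sigma_y+\sigma_\xi)\ln(N/\delta)\ln(9^d N/\delta)$ with probability at least $1-\delta$, as claimed. I do not expect a genuine obstacle here: the argument is intentionally lossy (bounding an average by a maximum discards the $1/\sqrt{N_2}$ concentration gain and inflates the true $O(\sigma_\xi(\sigma_y+\sigma_\xi))$ scaling by the two logarithmic factors), but this crude bound already matches the target, and its looseness is harmless because the $9^d$ dimension dependence reappears in the downstream concentration results anyway. The only things to get right are the net cardinality and discretization constant, and the bookkeeping ensuring that the two union bounds — over $2N$ subgaussian tails for the coefficients and over $9^d N$ pairs for the Gaussian squares — produce precisely the factors $\ln(N/\delta)$ and $\ln(9^d N/\delta)$.
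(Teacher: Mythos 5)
Your proof is correct and takes essentially the same route as the paper: decompose $\ytilde_i^2 - y_i^2 = 2y_i\xi_i + \xi_i^2$, deliberately forgo concentration of the sum (since $\xi_i$ may be correlated with $x_i$), and bound everything by worst-case subgaussian tails over the $N$ samples together with a $9^d$-point net union bound for the Gaussian quadratic terms, yielding exactly the two logarithmic factors. The only difference is cosmetic ordering --- the paper applies the triangle inequality first and bounds each $\opnorm{x_i \ot x_i - I_d}$ via its Lemma \ref{lem:single-covariance-tail}, whereas you discretize the full difference matrix first and then maximize per point --- but the union-bound structure and resulting bound are identical.
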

\begin{proof}
\begin{align}
\opnorm{\Mtilde_2 - \Mhat_2}&= \frac{1}{2N} \opnorm{\sum_{i=1}^N (\ytilde_i^2 - y_i^2 )\lrp{x_i\ot x_i - I_d}}\\
&\leq  \frac{1}{2N} \sum_{i=1}^N \abs{2\xi_i y_i + \xi_i^2}\opnorm{x_i\ot x_i - I_d} \\
&\leq\frac{1}{2N} \sum_{i=1}^N \lrp{4\sigma_y \sigma_\xi + 2\sigma_\xi^2} \ln \lrp{\frac{2}{\delta}} \lrp{2C\ln \lrp{\frac{2\cdot 9^d}{\delta}}} \\
&\lsim (\sigma_y \sigma_\xi + \sigma_\xi^2)\ln\lrp{\frac{2}{\delta}}\lrp{d + \ln \lrp{\frac{2}{\delta}}}
\end{align}
with probability at least $1-3N\delta$, using tail inequalities for subgaussian random variables $y_i$ and $\xi_i$ and a standard covariance concentration inequality Lemma \ref{lem:single-covariance-tail} with a union bound over all $i \in [N]$.
\end{proof}

\begin{proposition}\label{prop:M2-conc-1}
For any $\eps>0, t>1$,
\begin{align}\label{eq:intermed2-M2}
&\Pr{\opnorm{\Mhat_2 - M_2} \gtrsim  \eps+ \sigma_y^2 t^2\exp(-t^2/4)}\\
 & \quad \quad \leq 9^d \lrp{4 N \exp (-t^2/2) + 2 \exp \lrp{-\frac{N\eps^2}{8\sigma_y^4t^8 +(4/3) \sigma_y^2 t^4 \eps}}},
\end{align}
where $\sigma_y^2 = b^2 + \sigma_{\eta}^2$.
\end{proposition}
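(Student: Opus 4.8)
The plan is to reduce the operator-norm deviation to a scalar concentration problem over a finite net, tame the heavy tails of the summands by truncation, and then combine a Bernstein bound on the truncated part with separate control of the truncation bias and of the probability that truncation is active.

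First I would discretize the operator norm. Since $\Mhat_2 - M_2$ is symmetric, taking a $1/4$-net $\Nscr$ of the sphere $\Sscr^{d-1}$ with $\abs{\Nscr}\leq 9^d$ gives $\opnorm{\Mhat_2 - M_2}\leq 2\max_{v\in\Nscr}\abs{v'(\Mhat_2 - M_2)v}$. Fixing a unit vector $v$ and writing $Z_i := \tfrac12\, y_i^2(\brk{x_i,v}^2 - 1)$, we have $v'\Mhat_2 v = \tfrac1N\sum_i Z_i$ while $v'M_2 v = \E Z_i$ (this is the Stein-identity computation that renders $\Mhat_2$ unbiased). So it suffices to control $\abs{\tfrac1N\sum_i(Z_i - \E Z_i)}$ for fixed $v$ and then union bound over $\Nscr$, which is where the prefactor $9^d$ enters.

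The obstacle is that $Z_i$ is the product of $y_i^2$, the square of a $\subG(0,\sigma_y^2)$ variable, and the shifted chi-square $\brk{x_i,v}^2 - 1$; the product is only sub-Weibull and is too heavy-tailed for a direct Bernstein or sub-exponential argument. I would therefore truncate at level $t$: set $G_i := \set*{\abs{y_i}\leq t\sigma_y}\cap\set*{\abs{\brk{x_i,v}}\leq t}$ and $Z_i^{\mathrm{tr}} := Z_i\ind{G_i}$, so that on $\bigcap_i G_i$ the empirical average is unchanged. Three contributions then remain. The truncation-activation probability is handled by the subgaussian tails of $y_i$ (with proxy $\sigma_y^2 = b^2 + \sigma_\eta^2$) and of $\brk{x_i,v}\sim\Nscr(0,1)$, giving $\Pr{G_i^c}\leq 4\exp(-t^2/2)$; a union bound over the $N$ samples and over $\Nscr$ produces the term $9^d\cdot 4N\exp(-t^2/2)$. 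On $G_i$ the centered truncated summand is bounded by $\abs{Z_i^{\mathrm{tr}}}\lsim \sigma_y^2 t^4$ with variance at most the square of this, $\lsim \sigma_y^4 t^8$, so Bernstein's inequality yields exactly $2\exp\lrp{-\frac{N\eps^2}{8\sigma_y^4 t^8 + \frac{4}{3}\sigma_y^2 t^4\eps}}$, again carried through the net union bound.

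The delicate piece is the deterministic truncation bias $\abs{\E Z_i - \E Z_i^{\mathrm{tr}}} = \abs{\E[Z_i\ind{G_i^c}]}$, which I would bound by Cauchy--Schwarz as $\sqrt{\E[Z_i^2]}\,\sqrt{\Pr{G_i^c}}$. A fourth-moment computation for the squared-subgaussian times chi-square factors gives $\E[Z_i^2]\lsim \sigma_y^4$, and $\sqrt{\Pr{G_i^c}}\lsim \exp(-t^2/4)$ supplies the $\exp(-t^2/4)$ decay; tracking the polynomial prefactor from the tail bound yields the $t^2$, so the bias is $\lsim \sigma_y^2 t^2\exp(-t^2/4)$. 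This bias is precisely the deterministic offset appearing alongside $\eps$ inside the probability on the left-hand side: absorbing it into the threshold and collecting the two probabilistic terms under the net union bound gives the stated inequality, with the factor-$2$ from the net and the various moment constants absorbed into the $\gtrsim$. The main effort is bookkeeping --- making the truncation uniform over the net and pinning down the moment and tail constants so that the exponents $t^4,\,t^8$ (from the a.s.\ bound and its square) and $t^2\exp(-t^2/4)$ (from the bias) come out as written.
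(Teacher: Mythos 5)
Your proposal is correct, and its skeleton is the same as the paper's proof: a $1/4$-net of cardinality $9^d$ to discretize the operator norm, truncation of both $y_i$ and $\brk{x_i,v}$ at levels $t\sigma_y$ and $t$, a Bernstein bound on the bounded truncated summands (your bound $\abs{Z_i^{\mathrm{tr}}}\leq \sigma_y^2 t^4$ gives exactly the stated exponent $N\eps^2/(8\sigma_y^4 t^8 + (4/3)\sigma_y^2 t^4\eps)$), a union bound producing the $9^d\cdot 4N\exp(-t^2/2)$ term, and a deterministic truncation-bias term absorbed into the threshold. The one place where you genuinely depart from the paper is the bias term. The paper devotes a separate auxiliary lemma to it (Lemma \ref{lem:truncation-expectation-M2}, which in turn relies on the truncated-moment bounds of Corollary \ref{cor:subg-trunc-moment}): it splits the complement event into the three sub-events $\Escr_{i,y}^c\cap\Escr_{i,w}^c$, $\Escr_{i,y}\cap\Escr_{i,w}^c$, $\Escr_{i,y}^c\cap\Escr_{i,w}$ and applies Cauchy--Schwarz with explicit truncated fourth moments in each case, arriving at $\sigma_y^2 t^2\exp(-t^2/4)$. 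Your single application of Cauchy--Schwarz,
\begin{align}
\abs{\E[Z_i\ind{G_i^c}]} \leq \sqrt{\E[Z_i^2]}\sqrt{\Pr{G_i^c}} \lsim \sigma_y^2\exp(-t^2/4),
\end{align}
replaces that entire machinery, since $\E[Z_i^2]\lsim\sigma_y^4$ follows from one more Cauchy--Schwarz on $\E[y_i^4(\brk{x_i,v}^2-1)^2]$ and requires no truncated-moment calculus. Note that your bound is in fact slightly \emph{sharper}: the $t^2$ factor you append "from the tail bound" is not actually produced by your argument, and it is also not needed --- since $t>1$ and the proposition only claims an upper bound, $\sigma_y^2\exp(-t^2/4)\leq\sigma_y^2 t^2\exp(-t^2/4)$ suffices. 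The paper's finer case analysis buys explicit constants and polynomial prefactors, but your shortcut would work equally well (and with the same simplification) for the third-order analogue in Proposition \ref{prop:tensor-conc1}, where the paper again invokes a case-by-case truncated-moment lemma.
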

\begin{proof}
Corollary 4.2.13 of \cite{vershynin2018high} there exists a $1/4$-covering $\Cscr$ of $\Sscr^{d-1}$ in the Euclidean norm such that $\abs{\Cscr} \leq 9^{d}$. We bound the operator norm of the difference $\Mhat_2 - M_2$ over the $1/4$-covering (e.g., by Exercise 4.4.3 of \cite{vershynin2018high})
\begin{align}\label{eq:covering-M2}
\begin{split}
\opnorm{\Mhat_2 - \E[\Mhat_2]} &= \sup_{v \in \Sscr^{d-1}} \abs{\lrp{\Mhat_2 - \E[\Mhat_2]}(v,v)} \\
&\leq 4 \sup_{v \in \Cscr} \abs{\lrp{\Mhat_2 - \E[\Mhat_2]}(v,v)}.
\end{split}
\end{align}
Next,
\begin{align}
\Mhat_2(v,v) = \frac{1}{N} \sum_{i=1}^N \frac{1}{2}y_i^2 \lrp{\brk{v, x_i}^2-1} = \frac{1}{N} \sum_{i=1}^N Y_i
\end{align}
where we define $Y_i:= y_i^2 \lrp{\brk{v,x_i}^2 -1}/2$, and $\E[Y_i] = M_2(v,v) = \sum_{k=1}^K p_k \brk{v, \beta_k}^2$. 

Note that $y_i$ is a subgaussian random variable with variance proxy $\sigma_y^2:= b^2 + \sigma_\eta^2$ where recall $b = \max_{k \in [K]} \enorm{\beta_k}$. Define $w_i:= \brk{v, x_i}$, which is subgaussian with variance proxy $\sigma_w^2 := \sigma_x^2 = 1$. 

Fix $t>1$, let $t_y:= \sigma_y t$ and $t_w:= \sigma_w t=t$, and define the events $\Escr_{i,y}:= \set{\abs{y_i} \leq t_y}$, $\Escr_{i,w} := \set{\abs{w_i} \leq t_w}$, and $\Escr_i := \Escr_{i,y} \cap \Escr_{i,w}$. By standard subgaussian tail bounds, we have that the probability of $\Escr_{i,w}$ and of $\Escr_{i,y}$ are each upper bounded by $2\exp(-t^2/2)$, so that $\Pr{\Escr_i} \leq 4 \exp(-t^2/2)$. Finally, define $Z_i := Y_i \ind{\Escr_i}$. 

By the triangle inequality,
\begin{align}\label{eq:alt-triangle-M2}
\abs{\frac{1}{N} \sum_{i=1}^N \lrp{Y_i - \E[Y_i]}} &\leq 
\abs{\frac{1}{N} \sum_{i=1}^N \lrp{Y_i - Z_i}} + \abs{\frac{1}{N} \sum_{i=1}^N \lrp{Z_i - \E[Z_i]}} + \abs{\frac{1}{N} \sum_{i=1}^N \lrp{\E[Z_i] - \E[Y_i]}}
\end{align}
We bound each of the three summands on the right hand side of \eqref{eq:alt-triangle-M2} separately.

First, by construction of $Z_i$, we have that 
\begin{align}
\abs{\frac{1}{N} \sum_{i=1}^N (Y_i - Z_i)}&\leq \frac{1}{N}\sum_{i=1}^N \abs{Y_i}\ind{\Escr_i}.
\end{align}
This expression is nonzero with probability at most $\Pr{\cup_{i \in [N]} \Escr_i} \leq N\Pr{\Escr_i} \leq 4N\exp(-t^2/2)$ by a union bound.

Next, note that 
\begin{align}
\abs{Z_i} &= \abs{\frac{1}{6} y_i^3\lrp{w_i^3 - 3w_i \enorm{Wa}^2}} \ind{y_i \leq t_y} \ind{w_i \leq t_w} \\
&\leq \frac{1}{6} t_y^3 \lrp{t_w^3 + 3t_w \opnorm{W}^2}.
\end{align}
With $\sigma_w = \opnorm{W}$,
and assuming $t>1$, we have $\abs{Z_i} \leq \frac{1}{6} \sigma_y^3 \sigma_w^3 t^3(t^3 + 3t) \leq \frac{2}{3} \sigma_y^3 \sigma_w^3 t^6=:B(t)/2$. 

Then $\abs{Z_i - \E[Z_i]} \leq B(t)$, and we can crudely bound $\sum_{i=1}^N \E[(Z_i - \E[Z_i])^2]$ by $NB(t)^2$.
Then, noting the independence of $Z_i$ across $i \in [N]$, we apply the Bernstein inequality for independent bounded random variables \citep[Theorem 2.8.4]{vershynin2018high}, to get that for any $\eps>0$, 
\begin{align}
\Pr{\abs{\frac{1}{N}\sum_{i=1}^N (Z_i - \E[Z_i])} > \eps} \leq 2\exp \lrp{-\frac{N\eps^2}{2B(t)^2 + (2/3)B(t)\eps}}.
\end{align}
Plugging in $B(t)= (4/3) \sigma_y^3 \sigma_w^3 t^6$ gives us
\begin{align}
\Pr{\abs{\frac{1}{N}\sum_{i=1}^N (Z_i - \E[Z_i])} > \eps} \leq 2\exp \lrp{-\frac{9N\eps^2}{32\sigma_y^6 \sigma_w^6 t^{12} + 8 \sigma_y^3 \sigma_w^3 t^6 \eps}}.
\end{align}

Finally, we bound the difference in means of the $Y_i$ and its truncated version $Z_i$, to get
\begin{align}
\abs{\E[Z_i] - \E[Y_i]} \lsim \sigma_y^2 (\sigma_w \lor 1) t^2 \exp(-t^2/4).
\end{align}
The proof is more involved so the step is presented in Lemma \ref{lem:truncation-expectation-M3}.

Altogether, we get that for all $\eps \geq 0, t \geq 1$, with $\sigma_w = 1$, 
\begin{align}
&\Pr{\abs{\frac{1}{N} \sum_{i=1}^N Y_i - \E[Y_i]} \geq \eps+C\sigma_y^2 t^2\exp(-t^2/4)}\\
&\quad \leq 4 N \exp (-t^2/2) +2 \exp \lrp{-\frac{N\eps^2}{8\sigma_y^4t^8 +(4/3) \sigma_y^2 t^4 \eps}}.
\end{align}
Recalling that $\Mhat_2(v,v) = \frac{1}{N} \sum_{i=1}^N Y_i$ for a fixed $v \in \Cscr$, we now apply a union bound over all $v \in \Cscr$ and use \eqref{eq:covering-M2} to conclude that
\begin{align}
& \Pr{\opnorm{\Mhat_2 - M_2} \geq 4 \lrp{\eps+\sigma_y^2 t^2\exp(-t^2/4)}}\\
 & \quad \quad \leq 9^d \lrp{4 N \exp (-t^2/2) + 2 \exp \lrp{-\frac{N\eps^2}{8\sigma_y^4t^8 +(4/3) \sigma_y^2 t^4 \eps}}}.
\end{align}
\end{proof}

\begin{corollary}\label{cor:M2-conc}
Given $\eps>0$ and $\delta \in (0,1)$ when
\begin{align}\label{eq:N2-condition}
N \gsim \max \set*{\frac{\sigma_y^4}{\eps^2} \ln^4 \lrp{\frac{4N\cdot 9^d}{\delta}}\ln\lrp{\frac{2\cdot 9^d}{\delta}}, \frac{\delta}{9^d}\lrp{\frac{\sigma_y^8}{\eps^4} \lor \exp\lrp{\frac{\eps^{1/2}}{\sigma_y}}}} 
\end{align}
then 
\begin{align}
\Pr{\opnorm{\Mhat_2 - M_2} \geq \eps} \leq \delta.
\end{align}
When $\eps < \sigma_y^2/1.51$, it suffices to have
\begin{align}\label{eq:N2-condition-simplified}
N \gsim \max \set*{\frac{\sigma_y^4}{\eps^2} \ln^4 \lrp{\frac{4N\cdot 9^d}{\delta}}\ln\lrp{\frac{2\cdot 9^d}{\delta}}, \frac{\delta}{9^d}\cdot \frac{\sigma_y^8}{\eps^4}}.
\end{align}
\end{corollary}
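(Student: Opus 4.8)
The plan is to start from Proposition~\ref{prop:M2-conc-1}, which for every truncation level $t>1$ controls $\opnorm{\Mhat_2 - M_2}$ up to a deterministic bias $\sigma_y^2 t^2\exp(-t^2/4)$ by the sum of two failure probabilities: a ``heavy tail'' term $9^d\cdot 4N\exp(-t^2/2)$, coming from the event that some $\abs{y_i}$ or $\abs{\brk{v,x_i}}$ exceeds its truncation threshold, and a Bernstein term $9^d\cdot 2\exp(-N\eps^2/(8\sigma_y^4 t^8 + (4/3)\sigma_y^2 t^4\eps))$ from concentration of the bounded, truncated summands. The whole argument is then a matter of choosing $t$ and reading off how large $N$ must be so that the bias is at most $\eps$ and each of the two tail terms is at most $\delta/2$.

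First I would let $t$ grow logarithmically, setting $t^2 \asymp \ln(N\cdot 9^d/\delta)$ --- concretely $t^2 = 2\ln(8N\cdot 9^d/\delta)$, which is exactly the threshold forcing $9^d\cdot 4N\exp(-t^2/2)\le\delta/2$. It is essential to let $t$ grow with $N$ here: if $t$ were fixed independently of $N$, the factor $N$ in the heavy-tail term could not be absorbed. With $t$ pinned this way, $t^8 \asymp \ln^4(N\cdot 9^d/\delta)$, so requiring the Bernstein term to be at most $\delta/2$ amounts to $N\eps^2/(8\sigma_y^4 t^8)\gsim \ln(9^d/\delta)$, i.e.
\[
N \gsim \frac{\sigma_y^4}{\eps^2}\,\ln^4\!\lrp{\frac{N\cdot 9^d}{\delta}}\ln\!\lrp{\frac{9^d}{\delta}},
\]
which is the first term inside the maximum in \eqref{eq:N2-condition}. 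At this step I would use that, in the relevant small-$\eps$ regime, the quadratic variance term $8\sigma_y^4 t^8$ dominates the linear term $(4/3)\sigma_y^2 t^4\eps$ in the Bernstein denominator, so discarding the latter costs only constants.

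Next I would convert the bias condition $\sigma_y^2 t^2\exp(-t^2/4)\le\eps$ into a condition on $N$. Writing $A = 8N\cdot 9^d/\delta$ so that $\exp(-t^2/4)=A^{-1/2}$, the bias condition reads $\sigma_y^2\ln A\cdot A^{-1/2}\lsim\eps$, i.e. $\sqrt A\gsim(\sigma_y^2/\eps)\ln(\sigma_y^2/\eps)$; squaring and unwinding $A$ gives $N\gsim\frac{\delta}{9^d}\cdot\frac{\sigma_y^4}{\eps^2}\ln^2(\sigma_y^2/\eps)$. Bounding the logarithmic factor crudely via $\ln^2(\sigma_y^2/\eps)\le\sigma_y^4/\eps^2$ (valid once $\eps\le\sigma_y^2$) produces the $\frac{\delta}{9^d}\sigma_y^8/\eps^4$ form, while an alternative crude bound on the same quantity produces the $\frac{\delta}{9^d}\exp(\eps^{1/2}/\sigma_y)$ form; taking the larger of the two yields the second term in the maximum in \eqref{eq:N2-condition}. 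A union bound over the two tail events then delivers $\Pr{\opnorm{\Mhat_2 - M_2}\ge\eps}\le\delta$.

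Finally, for the simplified condition \eqref{eq:N2-condition-simplified} I would verify that when $\eps<\sigma_y^2/1.51$ one has $\sigma_y^8/\eps^4\ge\exp(\eps^{1/2}/\sigma_y)$ (the two are readily compared at the threshold, and the left side grows faster as $\eps$ shrinks), so the maximum in the second term collapses onto $\sigma_y^8/\eps^4$; the same smallness of $\eps$ is what justifies dropping the linear Bernstein term above. The main obstacle is the coupled, self-referential choice of $t$ and $N$ --- $t$ depends on $N$ through the logarithm while the required size of $N$ depends on $t^8$ --- together with converting the transcendental bias condition $\sigma_y^2 t^2\exp(-t^2/4)\le\eps$ into an explicit polynomial condition on $N$; this conversion is precisely where the somewhat lossy $\sigma_y^8/\eps^4$ and $\exp(\eps^{1/2}/\sigma_y)$ factors are incurred.
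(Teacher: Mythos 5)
Your overall strategy is the same as the paper's: start from Proposition~\ref{prop:M2-conc-1}, set $t^2 \asymp \ln(N\cdot 9^d/\delta)$ so the heavy-tail term is at most a constant times $\delta$, convert the Bernstein term into the first entry of the maximum in \eqref{eq:N2-condition}, and convert the bias condition $\sigma_y^2 t^2 e^{-t^2/4} \lsim \eps$ into the $\frac{\delta}{9^d}\cdot\frac{\sigma_y^8}{\eps^4}$ entry (the paper absorbs the polynomial factor via $t^2 e^{-t^2/4}\lsim e^{-t^2/8}$ rather than carrying $\ln A$ explicitly, but that is cosmetic). Your handling of the simplified condition, comparing $x^4$ against $\exp(x^{-1/2})$ at $x=\sigma_y^2/\eps \geq 1.51$, is also a valid variant of the paper's argument (the paper instead notes the exponential entry is dominated by the Bernstein entry of the max).

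The genuine gap is in where the $\exp(\eps^{1/2}/\sigma_y)$ term comes from and what it is for. You claim it arises as ``an alternative crude bound'' on the bias requirement $\frac{\delta}{9^d}\cdot\frac{\sigma_y^4}{\eps^2}\ln^2(\sigma_y^2/\eps)$. That cannot be right: as $\eps \to 0$ that quantity diverges while $\exp(\eps^{1/2}/\sigma_y)\to 1$, so the exponential expression does not upper bound it; your final condition stays sufficient only because you take a maximum with $\sigma_y^8/\eps^4$, i.e., in your argument the exponential entry does no work at all. In the paper this entry has a precise and necessary role: discarding the linear term $(4/3)\sigma_y^2 t^4\eps$ from the Bernstein denominator requires $\eps \lsim \sigma_y^2 t^4$, and with $t^2 = 2\ln(4N\cdot 9^d/\delta)$ that requirement is exactly equivalent to $N \gsim \frac{\delta}{9^d}\exp(\eps^{1/2}/\sigma_y)$. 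You instead justify discarding that term by appeal to ``the relevant small-$\eps$ regime,'' which only covers $\eps \lsim \sigma_y^2$ (there $t>1$ makes the domination automatic); but the first claim of the corollary is stated for all $\eps>0$, and for $\eps > \sigma_y^2$ your proof as written does not control the Bernstein term. The fix is precisely the re-attribution: invoke the exponential entry of the max to guarantee $\eps \lsim \sigma_y^2 t^4$, after which your derivation goes through; for the simplified statement with $\eps < \sigma_y^2/1.51$ your reasoning is already complete, since there the domination is automatic and the exponential entry is redundant.
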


\begin{proof}
Starting from the result of Proposition \ref{prop:M2-conc-1}, %
let $t$ be such that 
\begin{align}
&\delta = 9^d\cdot 4 N \exp (-t^2/2) \\
\implies & t^2 =2 \ln \lrp{\frac{9^d\cdot 4N}{\delta}}.
\end{align}
Note that there exists a universal constant $C_1$ such that
\begin{align}
t^2 \exp(-t^2/4) \leq C_1 \exp(-t^2/8).
\end{align}
To ensure a small error we set an implicit condition on $N$ by setting 
\begin{align}
&\eps \gsim \sigma^2  \exp(-t^2/8) \gsim \sigma_y^2 t^2 \exp(-t^2/4)\\
\iff &8\ln\lrp{\frac{\sigma_y^2}{\eps}} \leq t^2  = 2\ln\lrp{\frac{9^d\cdot 4N}{\delta}}\\
\iff & N \gsim \frac{\delta \sigma_y^{8}}{9^d \eps^4}. \label{eq:intermed-M2-N1}
\end{align}

Finally, when $\eps\lsim \sigma_y^2 t^4$, the second term in the probability bound simplifies and we set 
\begin{align}
&\delta \geq 9^d\cdot 2\exp\lrp{-\frac{N\eps^2}{10\sigma_y^4 t^{8}}}\\
\iff &N \gsim \frac{\sigma_y^4}{\eps^2}t^{8} \ln\lrp{\frac{2\cdot 9^d}{\delta}} \\
\iff  & N \gtrsim \frac{\sigma_y^4}{\eps^2} \ln^4 \lrp{\frac{4N\cdot 9^d}{\delta}}\ln\lrp{\frac{2\cdot 9^d}{\delta}}. \label{eq:intermed-M2-N2}
\end{align}
The condition for the simplification is implied by 
\begin{align}
&\eps \lsim \sigma_y^2 \ln^2 \lrp{\frac{9^d \cdot 4N}{\delta}} \\
\iff& N \gsim \frac{\delta}{9^d}\exp\lrp{\frac{\eps^{1/2}}{\sigma_y}} \label{eq:intermed-M2-N3}
\end{align}
which is easily satisfied in general. 
In fact, note that $x^2 > \exp(x^{-1/2})$ whenever $x \geq 1.51$, so setting $x = \sigma_y^2/\eps$, and since clearly $\ln(2\cdot 9^d/\delta) > \delta/9^d$ for $\delta \in (0,1)$, we have that when $\eps < \sigma_y^2/1.51$, the condition in \eqref{eq:intermed-M2-N3} is redundant in view of \eqref{eq:intermed-M2-N2}.

In general, under the combined conditions \eqref{eq:intermed-M2-N1}, \eqref{eq:intermed-M2-N2}, and \eqref{eq:intermed-M2-N3}, we get the claimed error bound.
\end{proof}

\begin{lemma}\label{lem:truncation-expectation-M2}
Let $t = t_w/\sigma_w = t_y/\sigma_y \geq 1$.
\begin{align}
\abs{\E[Z_i] - \E[Y_i]} \lsim \sigma_y^2 (\sigma_w \lor 1) t^2 \exp(-t^2/4).
\end{align}
\end{lemma}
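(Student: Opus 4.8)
The plan is to start from the identity $Y_i - Z_i = Y_i\ind{\Escr_i^c}$, which holds because $Z_i = Y_i\ind{\Escr_i}$, so that the quantity to control is a truncated expectation:
\begin{align}
\abs{\E[Z_i] - \E[Y_i]} = \abs{\E[Y_i \ind{\Escr_i^c}]} \leq \E[\abs{Y_i}\ind{\Escr_i^c}].
\end{align}
First I would bound the integrand crudely, using $\abs{Y_i} = \tfrac12 y_i^2\abs{w_i^2 - 1} \leq \tfrac12(y_i^2 w_i^2 + y_i^2)$, and split the bad event via the union bound $\ind{\Escr_i^c} \leq \ind{\abs{y_i}>t_y} + \ind{\abs{w_i}>t_w}$. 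This reduces the task to bounding four truncated cross-moments: $\E[y_i^2 w_i^2 \ind{\abs{y_i}>t_y}]$ and $\E[y_i^2 \ind{\abs{y_i}>t_y}]$, together with their analogues in which the truncation sits on $w_i$.

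The main obstacle is that $y_i$ and $w_i = \brk{v, x_i}$ are \emph{not} independent, since both depend on the covariate $x_i$, so the expectation does not factor. I would resolve this with Cauchy--Schwarz, peeling the truncation onto a single variable, e.g.
\begin{align}
\E[y_i^2 w_i^2 \ind{\abs{y_i}>t_y}] \leq \sqrt{\E[y_i^4\ind{\abs{y_i}>t_y}]}\cdot\sqrt{\E[w_i^4]}.
\end{align}
The untruncated factor $\E[w_i^4] \lsim \sigma_w^4$ is the standard fourth-moment bound for the subgaussian $w_i$ (and likewise $\E[y_i^4]\lsim \sigma_y^4$). For the truncated factor I would record the single-variable estimate that, for $X \sim \subG(0,\sigma^2)$ and even $p$,
\begin{align}
\E[X^p \ind{\abs{X}>s}] \lsim (s^p + \sigma^p)\exp\lrp{-s^2/(2\sigma^2)},
\end{align}
which follows from the layer-cake representation $\E[X^p\ind{\abs{X}>s}] = \int_0^\infty \Pr{\abs{X} > \max(u^{1/p}, s)}\,du$ combined with the subgaussian tail bound $\Pr{\abs{X}>r} \leq 2\exp(-r^2/(2\sigma^2))$.

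Finally I would assemble the pieces. Taking $s = t_y = \sigma_y t$ gives $\E[y_i^4 \ind{\abs{y_i}>t_y}] \lsim \sigma_y^4 t^4 \exp(-t^2/2)$, where the bound $t\geq 1$ is used to drop the additive $\sigma^p$ term; the Cauchy--Schwarz square root is precisely what halves the exponent and produces the target rate, $\sqrt{\E[y_i^4\ind{\abs{y_i}>t_y}]} \lsim \sigma_y^2 t^2 \exp(-t^2/4)$. Multiplying by $\sqrt{\E[w_i^4]}\lsim \sigma_w^2$, repeating symmetrically for the $\ind{\abs{w_i}>t_w}$ terms (now with $s = t_w = \sigma_w t$), and collecting the four contributions yields $\abs{\E[Z_i]-\E[Y_i]} \lsim \sigma_y^2(\sigma_w\lor 1)t^2\exp(-t^2/4)$, where in the application $\sigma_w = 1$ so the $w_i$-moment prefactor collapses to the stated form. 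The delicacy throughout is bookkeeping the prefactor: the clean $t^2$ (rather than $t^4$) emerges only after the square root, and care is needed to keep the exponent at $-t^2/4$ rather than degrading it further when the subdominant powers of $t$ are absorbed.
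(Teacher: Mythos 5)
Your proof is correct and shares the paper's core strategy: reduce to $\E[\abs{Y_i}\ind{\Escr_i^c}]$, split $\abs{Y_i}\leq\frac{1}{2}(y_i^2w_i^2+y_i^2)$, peel the truncation onto one variable via Cauchy--Schwarz, and finish with truncated subgaussian moment bounds, using $t\geq1$ to absorb lower-order powers; the square root is likewise what produces the $\exp(-t^2/4)$ rate in both arguments. You differ in two ways, both simplifications that cost nothing. First, you split the bad event by the union bound $\ind{\Escr_i^c}\leq\ind{\abs{y_i}>t_y}+\ind{\abs{w_i}>t_w}$, whereas the paper partitions $\Escr_i^c$ into three disjoint events and bounds the double-truncation piece (its term $(A)$) with a truncated fourth moment on each factor; the finer partition only buys the faster rate $\exp(-t^2/2)$ on that single piece, which is immaterial because the mixed terms already cap the overall rate at $\exp(-t^2/4)$. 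Second, you derive $\E[X^p\ind{\abs{X}>s}]\lsim(s^p+\sigma^p)\exp(-s^2/(2\sigma^2))$ directly from the layer-cake formula rather than citing the paper's Corollary \ref{cor:subg-trunc-moment}; your version is in fact the more careful one, since your identity $\E[X^p\ind{\abs{X}>s}]=\int_0^\infty\Pr{\abs{X}>\max(u^{1/p},s)}\,du$ retains the boundary contribution $s^p\Pr{\abs{X}>s}$ that the paper's Lemma \ref{lem:intermed-subg-trunc} drops from its layer-cake identity (this omission is why the paper's fourth-moment bound carries no $\tau^4$ term; it is harmless for the lemma's conclusion, changing only a $t$ into a $t^2$ in intermediate expressions, but your $(s^p+\sigma^p)$ form is the right one). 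One last point: like the paper's own assembly, your argument naturally yields the prefactor $\sigma_w^2\lor1$ rather than the stated $\sigma_w\lor1$; both are acceptable here because the lemma is invoked with $\sigma_w=1$, which you correctly flag.
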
 
\begin{proof}
We upper bound $\abs{\E[Z_i] - \E[Y_i]}= \abs{\E[Y_i \ind{\Escr_i^c}]}$.
\begin{align}
2\abs{\E[Y_i \ind{\Escr_i^c}]} &\leq 2\Exp{\abs{Y_i} \ind{\Escr_i^c}} \notag\\
&\leq \Exp{\abs{y_i^2w_i^2} \ind{\Escr_i^c}} + \Exp{\abs{y_i^2} \ind{\Escr_i^c}}.\label{eq:intermed1-M2}
\end{align}

Note that we can decompose the indicator of event $\Escr_i^c$ as $\ind{\Escr_i^c} = \ind{\Escr_{i,y}^c \cap \Escr_{i,w}^c} +
 \ind{\Escr_{i,y} \cap \Escr_{i,w}^c} + 
 \ind{\Escr_{i,y}^c \cap \Escr_{i,w}}$.
Focusing on the first term of \eqref{eq:intermed1-M2}, we have 
\begin{align}
\Exp{\abs{y_i^2w_i^2} \ind{\Escr_i^c}} &\leq\underbrace{\Exp{\abs{y_i^2w_i^2} \ind{\Escr_{i,y}^c \cap \Escr_{i,w}^c}}}_{(A)}  + \underbrace{\Exp{\abs{y_i^2w_i^2} \ind{\Escr_{i,y} \cap \Escr_{i,w}^c}}}_{(B)}  + \underbrace{\Exp{\abs{y_i^2w_i^2} \ind{\Escr_{i,y}^c \cap \Escr_{i,w}}}}_{(C)}.
\end{align}

By the Cauchy-Schwarz inequality, 
\begin{align}
(A) &\leq \sqrt{ \Exp{y_i^4\ind{\Escr_{i,y}^c \cap \Escr_{i,w}^c}} \Exp{w_i^4\ind{\Escr_{i,y}^c \cap \Escr_{i,w}^c}} } \\
&\leq \sqrt{ \Exp{y_i^4\ind{\Escr_{i,y}^c}} \Exp{w_i^4 \ind{\Escr_{i,w}^c}}}. 
\end{align}
Corollary \ref{cor:subg-trunc-moment} implies
\begin{align}
\Exp{y_i^4 \ind{\abs{y_i}> t_y}} &\leq 8\lrp{2\sigma_y^4 + \sigma_y^2t_y^{2}}  \exp\lrp{-\frac{t_y^{1/2}}{2\sigma_y^2}} 
\end{align}
and similarly for $w_i$ so that 
\begin{align}
(A) \leq 8 \lrp{2\sigma_w^4 + \sigma_w^2t_w^{2}}^{1/2} \lrp{2\sigma_y^4 + \sigma_y^2t_y^{2}}^{1/2}
 \exp\lrp{-\frac{t_y^{2}}{4\sigma_y^2} -\frac{t_w^{2}}{4\sigma_w^2}}.
\end{align}
With $t = t_w/\sigma_w = t_y/\sigma_y \geq 1$, this simplifies to $(A) \lsim \sigma_w^2 \sigma_y^2 \exp{-t^2/2}$.
Similarly, by the Cauchy-Schwarz inequality, we have 
\begin{align}
(B) &= \Exp{\abs{y_i^2 w_i^2} \ind{\set{\abs{y_i} \leq t_y} \cap \set{\abs{w_i}> t_w}}}  \\
&\leq \sqrt{\Exp{y_i^4}  \Exp{w_i^4 \ind{\abs{w_i} > t_w}}} \\
&\leq 72\sigma_y^2 \cdot \sqrt{8} \lrp{2\sigma_w^4 + \sigma_w^2t_w^{2}}^{1/2} \exp\lrp{-\frac{t_w^{2}}{4\sigma_w^2}} \\
&\lsim \sigma_w^2\sigma_y^2 t \exp(-t^2/4)
\end{align}
where the third line follows from a bound on the moment of the subgaussian variable $y_i$ (c.f.  \cite[Proposition 2.5.2]{vershynin2018high}) and from Corollary \ref{cor:subg-trunc-moment}.

Finally, 
\begin{align}
(C) &= \Exp{\abs{y_i^2 w_i^2} \ind{\abs{y_i}> t_y}\cap\ind{\abs{w_i} \leq t_w}}  \\
&\leq \sqrt{\Exp{y_i^4\ind{\abs{y_i}> t_y}}  \Exp{w_i^4}} \\
&\leq   2\sqrt{2}\lrp{2\sigma_y^4 + \sigma_y^2t_y^{2}}^{1/2}  \exp\lrp{-\frac{t_y^{2}}{4\sigma_y^2}}\cdot (72\sigma_w^2) \\
&\lsim \sigma_w^2 \sigma_y^2 t \exp(-t^2/4).
\end{align}
Altogether, we have 
\begin{align}
\Exp{\abs{y_i^2w_i^2} \ind{\Escr_i^c}} \lsim \sigma_w^2 \sigma_y^2(t^2 + 2t)\exp(-t^2/4).
\end{align}

Following a similar procedure for the second term, we have 
\begin{align}
\Exp{\abs{y_i^2} \ind{\Escr_i^c}} &\leq \underbrace{\Exp{\abs{y_i^2} \ind{\Escr_{i,y}^c \cap \Escr_{i,w}^c}}}_{(A)}  + \underbrace{\Exp{\abs{y_i^2} \ind{\Escr_{i,y} \cap \Escr_{i,w}^c}}}_{(B)}  + \underbrace{\Exp{\abs{y_i^2} \ind{\Escr_{i,y}^c \cap \Escr_{i,w}}}}_{(C)}
\end{align}
with 
\begin{align}
(A) &\leq \sqrt{\Exp{y_i^4 \ind{\abs{y_i}>t_y}}\Exp{\ind{\abs{w_i}>t_w}}} \\
&\leq \sqrt{8}\lrp{2\sigma_y^4 + \sigma_y^2t_y^{1/2}}^{1/2} \exp\lrp{-\frac{t_y^{1/2}}{4\sigma_y^2}}
\cdot \sqrt{2}\exp(-t_w^2/4\sigma_w^2), \\
&\lsim \sigma_y^2 t \exp(-t^2/2)\\
(B) &\leq\sqrt{\Exp{y_i^4} \Exp{\ind{\abs{w_i}>t_w}}} \\
&\leq 72\sigma_y^2\sqrt{2\exp(-t_w^2/\sigma_w^2)} \lsim \sigma_y^2 \exp(-t^2/2), \\
(C) &\leq \sqrt{\Exp{y_i^4 \ind{\abs{y_i}>t_y}}\Exp{1}} \\
&\leq \sqrt{8\lrp{2\sigma_y^4 + \sigma_y^2t_y^{1/2}}  \exp\lrp{-\frac{t_y^{1/2}}{2\sigma_y^2}}}  \lsim \sigma_y^2 \exp(-t^2/4)
\end{align}
where we also used the subgaussian tail bound $\Pr{\abs{w_i}>t_w} \leq 2\exp(-t_w^2/(2\sigma_w^2))$.
Plugging in the inequalities we get 
\begin{align}
\Exp{\abs{y_i^2} \ind{\Escr_i^c}} &\lsim \sigma_y^2 t \exp(-t^2/4).
\end{align}

In all, we have
\begin{align}\label{eq:bigfat1-M2}
\begin{split}
\abs{\E[Z_i] - \E[Y_i]} & \lsim \sigma_w^2 \sigma_y^2 (t^2+2t) \exp(-t^2/4) + \sigma_y^2 t \exp(-t^2/4) \\
&\lsim \sigma_y^2 (\sigma_w \lor 1) t^2 \exp(-t^2/4).
\end{split}
\end{align}
\end{proof}

\begin{lemma}[Tail bound for single sample covariance] \label{lem:single-covariance-tail}
Let $X$ be an isotropic subgaussian random vector in $\R^d$. Then for any $\delta \in (0,1)$, 
\begin{align}
\Pr{\opnorm{X\ot X - I_d} \geq C \ln \lrp{\frac{2\cdot 9^d}{\delta}}} \leq \delta.
\end{align}
\end{lemma}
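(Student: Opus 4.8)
The plan is to reduce the operator norm to a scalar supremum over the sphere, discretize with a net, and apply a single-sample subexponential tail bound. Since $X \ot X - I_d$ is symmetric, its operator norm equals $\sup_{v \in \Sscr^{d-1}} \abs{\brk{X,v}^2 - 1}$. First I would invoke a $1/4$-net $\Cscr$ of $\Sscr^{d-1}$ with $\abs{\Cscr} \leq 9^d$ (Corollary 4.2.13 of \cite{vershynin2018high}) together with the standard net bound for symmetric matrices (Exercise 4.4.3 of \cite{vershynin2018high}), exactly as in \eqref{eq:covering-M2}, to obtain $\opnorm{X \ot X - I_d} \leq c\,\sup_{v \in \Cscr} \abs{\brk{X,v}^2 - 1}$ for an absolute constant $c$. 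This moves the problem from the uncountable sphere to at most $9^d$ fixed directions.

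For a fixed $v \in \Cscr$, isotropy gives $\E[\brk{X,v}^2] = 1$ and the subgaussian assumption makes $W_v := \brk{X,v}$ a mean-zero subgaussian scalar with variance proxy equal to an absolute constant. Then $W_v^2 - 1$ is subexponential: because $W_v^2 \geq 0$, the lower tail is vacuous for $s \geq 1$, and the upper tail satisfies $\Pr{\abs{W_v^2 - 1} \geq s} \leq \Pr{\abs{W_v} \geq \sqrt{1+s}} \leq 2\exp(-c's)$ by the subgaussian tail bound, for an absolute rate $c' > 0$. A union bound over the at most $9^d$ net points then yields $\Pr{\sup_{v \in \Cscr} \abs{W_v^2 - 1} \geq s} \leq 2 \cdot 9^d \exp(-c's)$; setting the right-hand side equal to $\delta$ gives $s = (1/c')\ln(2\cdot 9^d/\delta)$, so that with probability at least $1-\delta$ the supremum over $\Cscr$ is $\lsim \ln(2\cdot 9^d/\delta)$. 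Reinstating the net factor $c$ recovers the claimed bound with a suitable absolute constant $C$.

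I expect no deep obstacle here: the only point requiring care is converting the subgaussian tail of $W_v$ into a clean subexponential tail for $W_v^2 - 1$ --- in particular handling the one-sidedness ($W_v^2 \geq 0$) and the small-$s$ regime. Since the final threshold sits at the coarse scale $\ln(9^d/\delta) \gsim d$, this estimate is never tight and the standard subexponential bound is more than adequate. The looseness of the result --- a deviation of order $d$ rather than one that vanishes --- is intrinsic to having only a single observation $X$, and is exactly what the downstream union bound over $i \in [N]$ in Lemma \ref{lem:M2-perturb} tolerates.
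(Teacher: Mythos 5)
Your proposal is correct and follows essentially the same route as the paper's proof: a $9^d$-point net on $\Sscr^{d-1}$ via Exercise 4.4.3 of \cite{vershynin2018high}, a subexponential tail bound for $\brk{X,v}^2-1$ at each net point, and a union bound; the only cosmetic difference is that you derive the subexponential tail of $\brk{X,v}^2-1$ by hand (handling the one-sided small-$s$ regime), whereas the paper simply cites Proposition 2.7.1(a) of \cite{vershynin2018high}. No gaps.
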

\begin{proof}
By Exercise 4.4.3 of \cite{vershynin2018high}, there exists a cover $\Cscr$ of $\Sscr^{d-1}$ such that $\abs{\Cscr} \leq 9^d$ and 
\begin{align}\label{eq:intermed-sample-cov}
\opnorm{X\ot X - I_d} \leq 2\sup_{a \in \Cscr} \abs{\brk{X, a}^2 - 1}.
\end{align}
For every $a \in \Cscr$, $\brk{X, a}$ is sub-gaussian with variance proxy $1$, so $\brk{X,a}^2 -1$ is zero-mean sub-exponential with parameter $K$ where $K$ is a universal constant. Applying a sub-exponential tailbound on this quantity (e.g., Proposition 2.7.1(a) in \cite{vershynin2018high}), we get
\begin{align}
&\Pr{\abs{\brk{x,a}^2 - 1} \geq C \ln \lrp{\frac{2}{\delta}}} \leq \delta 
\end{align}
Taking a union bound over all $a \in \Cscr$ and plugging this result back into \eqref{eq:intermed-sample-cov}, we obtain the result.
\end{proof}
\subsection{Lemmas for Estimating \texorpdfstring{$M_3$}{M3}}
In this section we present the statements and proofs of Lemma \ref{lem:M3-perturb}, which controls the effect of the perturbation $\xi$ in the estimation of the third order tensor $\Mhat_3$, Proposition \ref{prop:tensor-conc1}, which concentrates the estimate $\Mhat_3^V$ around its mean $M_3^V$, for any whitening matrix $V$, and Corollary \ref{cor:tensor-conc} which provides sample complexity bounds for estimating $M_3$ by $\Mhat_3$. 
Lemma \ref{lem:truncation-expectation-M3} is an auxiliary lemma used to derive the concentration result. 
\begin{lemma}\label{lem:M3-perturb}
Let $\Mtilde_3$ and $\Mhat_3$ be as defined in \eqref{eq:M23tilde-MLR} and \eqref{eq:M23hat-MLR}, respectively, with $\sigma_\xi \leq 1$. For any $d\by K$ matrix $V$, for any $\eps > 0$ and $\delta \in (0,1)$, with probability at least $1-\delta$, 
\begin{align}
\opnorm{\Mtilde_3^V - \Mhat_3^V} \lsim \sigma_\xi \sigma_y^2 \opnorm{V}^{3} \ln^{3/2}\lrp{\frac{N}{\delta}} \ln^{3/2}\lrp{\frac{N\cdot 33^K}{\delta}}.
\end{align}
\end{lemma}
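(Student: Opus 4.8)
The plan is to mirror the second-order argument of Lemma~\ref{lem:M2-perturb}, but now for a third-order symmetric tensor evaluated through the multilinear map $(V,V,V)$. Writing $\ytilde_i = y_i + \xi_i$ and recalling the definitions in \eqref{eq:M23tilde-MLR} and \eqref{eq:M23hat-MLR}, the two estimators differ only in the scalar prefactor, so that
\begin{align}
\Mtilde_3^V - \Mhat_3^V = \frac{1}{6N}\sum_{i} \lrp{\ytilde_i^3 - y_i^3}\, T_i, \qquad T_i := (V'x_i)^{\ot 3} - \Escr(x_i)(V,V,V).
\end{align}
By the triangle inequality for the operator norm, $\opnorm{\Mtilde_3^V - \Mhat_3^V} \leq \frac{1}{6N}\sum_i \abs{\ytilde_i^3 - y_i^3}\,\opnorm{T_i}$, so it suffices to control the two factors uniformly over $i$ with high probability and then split the failure budget $\delta$ between them.

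For the scalar factor, I would factor $\ytilde_i^3 - y_i^3 = \xi_i\lrp{\ytilde_i^2 + \ytilde_i y_i + y_i^2}$ and apply the subgaussian tail bounds for $y_i \sim \subG(0,\sigma_y^2)$ and $\xi_i \sim \subG(0,\sigma_\xi^2)$, with a union bound over $i$, to obtain $\abs{y_i}, \abs{\xi_i} \lsim (\sigma_y \lor \sigma_\xi)\ln^{1/2}(N/\delta)$ simultaneously. Using $\sigma_\xi \leq 1$, this gives $\abs{\ytilde_i^3 - y_i^3} \lsim \sigma_\xi\sigma_y^2 \ln^{3/2}(N/\delta)$ for every $i$, supplying the first logarithmic factor and the $\sigma_\xi\sigma_y^2$ scaling.

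The main work is bounding $\opnorm{T_i}$. Since $T_i$ is a symmetric tensor in $(\R^K)^{\ot 3}$, its operator norm equals $\sup_{a\in\Sscr^{K-1}}\abs{T_i(a,a,a)}$, and a direct computation using $\sum_j \brk{e_j, Va}^2 = \norm{Va}^2$ gives
\begin{align}
T_i(a,a,a) = \brk{x_i, Va}^3 - 3\brk{x_i, Va}\,\norm{Va}^2 .
\end{align}
For fixed $a$, the quantity $\brk{x_i,Va}$ is Gaussian with variance $\norm{Va}^2 \leq \opnorm{V}^2$, so $T_i(a,a,a)$ is a mean-zero cubic polynomial of a Gaussian (of the form $t\mapsto t^3 - 3t$, rescaled), with tail $\abs{T_i(a,a,a)} \lsim \opnorm{V}^3\ln^{3/2}(1/\delta')$. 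I would then discretize: take a net of $\Sscr^{K-1}$ of size at most $33^K$ for which $\opnorm{T_i}$ is comparable, up to a universal constant, to the maximum of $\abs{T_i(a,a,a)}$ over the net, and union bound over both the net and the $N$ samples. This yields $\max_i \opnorm{T_i} \lsim \opnorm{V}^3 \ln^{3/2}(N\cdot 33^K/\delta)$, supplying the second logarithmic factor and the $\opnorm{V}^3$ dependence. Combining the two bounds and averaging over $i$ gives the claim.

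The step I expect to be the main obstacle is the tensor net argument: establishing the net-to-sphere comparison for a symmetric third-order tensor with the stated $33^K$ covering size, and obtaining the sharp $\ln^{3/2}$ tail for the cubic Gaussian polynomial uniformly across both the net and the samples. The scalar factor is routine once the subgaussian union bound is in place.
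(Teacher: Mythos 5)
Your proposal matches the paper's proof essentially step for step: the same pointwise bound $\opnorm{\Mtilde_3^V - \Mhat_3^V} \leq \frac{1}{6N}\sum_i \abs{\ytilde_i^3 - y_i^3}\,\opnorm{T_i}$, the same subgaussian union bound over $i$ that isolates exactly one factor of $\sigma_\xi$ in the scalar term (the paper expands $3y_i^2\xi_i + 3y_i\xi_i^2 + \xi_i^3$ instead of factoring, which is algebraically equivalent), and the same $1/16$-net of $\Sscr^{K-1}$ with $33^K$ points to control $\opnorm{T_i}$ uniformly over the net and the samples. The net-to-sphere comparison you flag as the main obstacle is exactly what the paper's Lemma~\ref{lem:covering} provides for symmetric third-order tensors, so your argument goes through as written.
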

\begin{proof}
We have that 
\begin{align}
\opnorm{\Mtilde_3^V - \Mhat_3^V} &\leq \frac{1}{6N_2} \sum_{i=1}^{N_2} \abs{3y_i^2 \xi_i + 3y_i \xi_i^2 + \xi_i^3}\opnorm{(V'x_i)^{\ot 3} - \Escr(x_i)(V,V,V)}.
\end{align}
Using subgaussian tail inequalities for $y_i$ and $\xi_i$ with a union bound over all $i \in [N]$ we have that with probability at least $1-2N\delta$, 
\begin{align}\label{eq:pf:intermed-1}
\abs{3y_i^2 \xi_i + 3y_i \xi_i^2 + \xi_i^3} \leq 3\cdot 2^{3/2}  \lrp{\sigma_y^2 \sigma_\xi + \sigma_y\sigma_\xi^2 + \sigma_\xi^3} \ln^{3/2}\lrp{\frac{2}{\delta}}
\lsim \sigma_\xi \sigma_y^2 (\sigma_\xi^2 \lor 1) \ln^{3/2}\lrp{\frac{2}{\delta}} , 
\end{align}
where we recall that $\sigma_y\geq \sigma_x \geq 1$.
Next, fix an $i \in [N]$ and temporarily let $X = V'x_i$, which is a subgaussian random vector with variance proxy at most $\opnorm{V}^2$. Let $\Cscr \subset \Sscr^{K-1}$ be a (1/16)-cover for $\Sscr^{K-1}$ of size at most $33^K$ such that  
\begin{align}
\opnorm{(V'x_i)^{\ot 3} - \Escr(x_i)(V, V, V)} \leq \sup_{u \in \Cscr}  16\abs{\brk{X, u}^3} + 3\abs{\brk{X, u}}\sum_{k=1}^d \brk{v_k, u}^2
\end{align}
Note we can rewrite $\sum_{k=1}^d \brk{v_k, u}^2 = \brk{u, V'V u} \leq \opnorm{V}^2$. Further, $\brk{X, u}$ is also subgaussian with variance proxy at most $\opnorm{V}^2$. Plugging in standard subgaussian concentration inequality where for any $\delta \in (0,1)$, 
\begin{align}
\Pr{\abs{\brk{X,u}} > \sqrt{2\opnorm{V}^2 \ln(2/\delta)}} \leq \delta,
\end{align}
and union bounding over all $i \in [N]$ and $u \in \Cscr$, we have that given $\delta \in (0,1)$, it holds with probability at least $1-\delta$ that for all $i \in [N]$, 
\begin{align}\label{eq:pf:intermed-2}
\abs{\opnorm{(V'x_i)^{\ot 3} - \Escr(x_i)(V, V, V)}} \lsim \opnorm{V}^3 \ln^{3/2}\lrp{\frac{N\cdot 33^K}{\delta}}.
\end{align}
Combining the two bounds \eqref{eq:pf:intermed-1} and \eqref{eq:pf:intermed-2}, with $\sigma_\xi \leq 1$, gives us
\begin{align}
\opnorm{\Mtilde_3^V - \Mhat_3^V} &\lsim \sigma_\xi \sigma_y^2 \opnorm{V}^3 \ln^{3/2}\lrp{\frac{N}{\delta}} \ln^{3/2} \lrp{\frac{N\cdot 33^K}{\delta}} 
\end{align}
with probability at least $1-\delta$.
\end{proof}

\begin{proposition}\label{prop:tensor-conc1}
Let $V$ be any $d\by K$ matrix. For any $t>1, \eps>0$, 
\begin{align}\label{eq:intermed2-M3}
& \Pr{\opnorm{\firstV - \Exp{\firstV}} \gsim \eps+C \opnorm{V}^3 \sigma_y^3 t^4 \exp (-t^2/4)}\\
 & \quad \leq 33^K \lrp{4 N \exp (-t^2/2) + 2 \exp \lrp{-\frac{9N\eps^2}{32\sigma_y^6 \opnorm{V}^6 t^{12} + 8\sigma_y^3\opnorm{W}^3 t^6 \eps}}},
\end{align}
where $\sigma_y^2:= b^2 +\sigma_\eta^2$.
\end{proposition}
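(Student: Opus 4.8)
The plan is to prove Proposition~\ref{prop:tensor-conc1} as the third-order analogue of Proposition~\ref{prop:M2-conc-1}, following the same truncation-and-Bernstein template, but now applied to the cubic form associated with $\firstV$. First I would reduce the operator norm to a supremum of a scalar statistic over a finite net: using the same $(1/16)$-cover $\Cscr \subset \Sscr^{K-1}$ of size at most $33^K$ employed in Lemma~\ref{lem:M3-perturb}, the symmetry and multilinearity of $\firstV - \Exp{\firstV}$ give $\opnorm{\firstV - \Exp{\firstV}} \lsim \sup_{v \in \Cscr}\abs{(\firstV - \Exp{\firstV})(v,v,v)}$, so it suffices to control the scalar deviation for a fixed unit $v$ and then union bound over $\Cscr$, which is the source of the $33^K$ prefactor.

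For a fixed $v \in \Cscr$, writing $w_i := \brk{x_i, Vv}$ (a Gaussian with variance $\enorm{Vv}^2 \leq \opnorm{V}^2$, hence subgaussian with variance proxy at most $\opnorm{V}^2$) and $\sigma_w := \opnorm{V}$, one has $\firstV(v,v,v) = \frac{1}{N}\sum_i Y_i$ with $Y_i := \frac{1}{6} y_i^3\lrp{w_i^3 - 3 w_i \enorm{Vv}^2}$ and $\Exp{Y_i} = \Exp{\firstV}(v,v,v)$. I would then truncate: set $t_y := \sigma_y t$, $t_w := \sigma_w t$, define the events $\Escr_{i,y} := \set{\abs{y_i}\leq t_y}$, $\Escr_{i,w} := \set{\abs{w_i}\leq t_w}$, $\Escr_i := \Escr_{i,y}\cap\Escr_{i,w}$, and $Z_i := Y_i \ind{\Escr_i}$, and split $\abs{\frac{1}{N}\sum_i (Y_i - \Exp{Y_i})}$ by the triangle inequality into a truncation-tail term $\abs{\frac{1}{N}\sum_i (Y_i - Z_i)}$, a fluctuation term $\abs{\frac{1}{N}\sum_i (Z_i - \Exp{Z_i})}$, and a bias term $\abs{\frac{1}{N}\sum_i (\Exp{Z_i} - \Exp{Y_i})}$.

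The first two terms are routine. The truncation-tail term is nonzero only on $\cup_i \Escr_i^c$, which by subgaussian tail bounds and a union bound has probability at most $4N\exp(-t^2/2)$. For the fluctuation term, on $\Escr_i$ one has $\abs{Z_i} \leq \frac{1}{6} t_y^3\lrp{t_w^3 + 3 t_w \opnorm{V}^2} \leq \frac{2}{3} \sigma_y^3 \opnorm{V}^3 t^6 =: B(t)/2$ for $t>1$, so that $\abs{Z_i - \Exp{Z_i}} \leq B(t)$ and $\sum_i \Exp{(Z_i - \Exp{Z_i})^2} \leq N B(t)^2$; the Bernstein inequality for bounded independent variables (\citep[Theorem~2.8.4]{vershynin2018high}) with $B(t) = \frac{4}{3} \sigma_y^3 \opnorm{V}^3 t^6$ yields exactly the exponent $\exp\lrp{-\frac{9N\eps^2}{32\sigma_y^6\opnorm{V}^6 t^{12} + 8\sigma_y^3\opnorm{V}^3 t^6 \eps}}$ claimed in the statement (the $\opnorm{W}$ there should read $\opnorm{V}$).

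The main obstacle is the bias term $\abs{\Exp{Z_i} - \Exp{Y_i}} = \abs{\Exp{Y_i \ind{\Escr_i^c}}}$, which I would defer to Lemma~\ref{lem:truncation-expectation-M3}. Here $Y_i$ contains the products $y_i^3 w_i^3$ and $y_i^3 w_i$, so after decomposing $\ind{\Escr_i^c}$ over the three sub-events $\Escr_{i,y}^c\cap\Escr_{i,w}^c$, $\Escr_{i,y}\cap\Escr_{i,w}^c$, and $\Escr_{i,y}^c\cap\Escr_{i,w}$ and applying Cauchy--Schwarz, the hard work is bounding truncated sixth moments of subgaussians such as $\Exp{y_i^6 \ind{\abs{y_i}>t_y}}$; these are the higher-order counterparts of the fourth-moment bounds in Lemma~\ref{lem:truncation-expectation-M2}, and carrying the extra powers of $t$ through is what produces the $t^4\exp(-t^2/4)$ bias rate (versus $t^2\exp(-t^2/4)$ for $M_2$) and the leading $\opnorm{V}^3\sigma_y^3$ scaling. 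Combining the three bounds, applying the factor-$16$ net inequality, and taking the union bound over $\Cscr$ then gives the stated probability.
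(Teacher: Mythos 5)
Your proposal is correct and follows essentially the same route as the paper's proof: the $33^K$ net via Lemma \ref{lem:covering}, the truncation $Z_i = Y_i \ind{\Escr_i}$ with the three-term triangle inequality, Bernstein with $B(t) = \tfrac{4}{3}\sigma_y^3\opnorm{V}^3 t^6$ yielding the stated exponent, and the bias term deferred to Lemma \ref{lem:truncation-expectation-M3}. You also correctly flag that $\opnorm{W}$ in the statement should read $\opnorm{V}$, and your bound $\abs{Z_i} \leq \tfrac{1}{6}t_y^3(t_w^3 + 3t_w\opnorm{V}^2)$ fixes a copy-paste slip in the paper's own write-up (which momentarily reverts to the quadratic $M_2$ expression for $Z_i$).
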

\begin{proof}
Within this proof we write $W$ in place of $V$, but here it represents any arbitrary $d\by K$ matrix, not necessarily the whitening matrix. By Corollary 4.2.13 of \cite{vershynin2018high} with $\eps = 1/16$, there exists a $1/16$-covering $\Cscr$ of $\Sscr^{K-1}$ in the Euclidean norm such that $\abs{\Cscr} \leq 33^{K}$. By Lemma \ref{lem:covering}, 
\begin{align}\label{eq:covering}
\begin{split}
\opnorm{\first - \Exp{\first}} &= \sup_{a \in \Sscr^{K-1}} \abs{\lrp{\first - \E[\first]}(a, a, a)} \\
&\leq 16 \sup_{a \in \Cscr} \abs{\lrp{\first - \E[\first]}(a, a, a)}.
\end{split}
\end{align}
We will bound 
\begin{align}
\abs{\first - \E[\first](a, a, a)} = \lrb{\Exp{\brk{Wa, \beta}^3} - \frac{1}{6N}\sum_{i=1}^N y_i^3 \lrp{\brk{Wa, x_i}^3 - \Escr(x_i)(Wa, Wa, Wa)}}
\end{align}
for an arbitrary $a \in \Cscr$, then apply a union bound over $\Cscr$.

First, we simplify expressions by evaluating $\Escr(x_i)(Wa, Wa, Wa)$, which is a scalar:
\begin{align}
\Escr(x_i)(Wa, Wa, Wa) &= 3\sum_{j=1}^d \brk{Wa,x_i} \brk{Wa,e_j}^2  \\
&= 3\brk{Wa,x_i}\Tr\lrp{(Wa)(Wa)'\sum_{j=1}^d e_j e_j'}\\
&= 3\brk{Wa,x_i}\brk{Wa, Wa} = 3\brk{Wa,x_i}\enorm{Wa}^2. 
\end{align}
Thus we wish to show concentration of
\begin{align}
\first(a,a,a) = \sum_{i=1}^N\lrp{ \frac{1}{6N} y_i^3 \lrp{\brk{Wa, x_i}^3 - 3 \brk{Wa, x_i} \enorm{Wa}^2} } = \frac{1}{N} \sum_{i=1}^N Y_i
\end{align}
where we define $Y_i := \frac{1}{6} y_i^3\lrp{\brk{Wa, x_i}^3 - 3\brk{Wa, x_i} \enorm{Wa}^2}$ and $\E[Y_i] = \Exp{\brk{a,\beta}^3}$

Let $w_i:=\brk{Wa,x_i}$. We bound $y_i$ and $w_i$ and their powers with high probability by noting that both terms are subgaussian with variance proxies $\sigma_y^2:=b^2\sigma_x^2 + \sigma_\eta^2 = b^2 + \sigma_\eta^2$ and $\sigma_w^2:= \opnorm{W}^2\geq \enorm{Wa}^2\sigma_x^2$ (with $\sigma_x^2 = 1$), respectively. 

Fix $t>0$ and let $t_y := \sigma_y t$ and $t_w := \sigma_w t$. Define the events
\begin{align}
\Escr_{i,y} &:= \set{\abs{y_i} \leq t_y} \\
\Escr_{i,w} &:= \set{\abs{w_i} \leq t_w},\\
\Escr_i &:= \Escr_{i,y}\cap \Escr_{i,w}
\end{align}
and let $Z_i:= Y_i \ind{\Escr_i}$ be a truncated version of $Y_i$. Using the subgaussian tail bounds
\begin{align}
\Pr{\abs{y_i} \geq t_y} &\leq 2\exp \lrp{-t_y^2/(2\sigma_y^2)} = 2\exp(-t^2/2)\\
\Pr{\abs{w_i} \geq t_w} &\leq 2\exp \lrp{-t_w^2/(2\sigma_w^2)} = 2\exp(-t^2/2), 
\end{align}
we have that $\Pr{\Escr_i} \leq \Pr{\Escr_{i,y}} + \Pr{\Escr_{i,w}} \leq 4\exp(-t^2/2)$.
We follow (parts of) \cite{yi2016solving} and \cite{zhong2016mixed} and use a triangle inequality to show concentration of $(1/N)\sum_i Y_i$ via concentration of $(1/N) \sum_i Z_i$ which has bounded summands.

By the triangle inequality,
\begin{align}\label{eq:alt-triangle}
\abs{\frac{1}{N} \sum_{i=1}^N \lrp{Y_i - \E[Y_i]}} &\leq 
\abs{\frac{1}{N} \sum_{i=1}^N \lrp{Y_i - Z_i}} + \abs{\frac{1}{N} \sum_{i=1}^N \lrp{Z_i - \E[Z_i]}} + \abs{\frac{1}{N} \sum_{i=1}^N \lrp{\E[Z_i] - \E[Y_i]}}
\end{align}
We bound each of the three summands on the right hand side of \eqref{eq:alt-triangle} separately.

First, by construction of $Z_i$, we have that 
\begin{align}
\abs{\frac{1}{N} \sum_{i=1}^N (Y_i - Z_i)}&\leq \frac{1}{N}\sum_{i=1}^N \abs{Y_i}\ind{\Escr_i}.
\end{align}
This expression is nonzero with probability at most $\Pr{\cup_{i \in [N]} \Escr_i} \leq N\Pr{\Escr_i} \leq 4N\exp(-t^2/2)$ by a union bound. 

Next, by definition, $\abs{Z_i} = \abs{\frac{1}{2} y_i^2 (w_i^2 -1)} \ind{\abs{y_i} \leq t_y, \abs{w_i} \leq t_w} \leq \frac{1}{2} t_y^2(t_w^2+1)$. We bound $\abs{Z_i - \E[Z_i]} \leq t_y^2(t_w^2 + 1) =\sigma_y^2t^2(\sigma_w^2t^2+1) \leq 2\sigma_y^2t^4$ where we used that $\sigma_w = 1$ and $t>1$. By the Bernstein bound for independent bounded random variables \citep[Theorem 2.8.4]{vershynin2018high}, we have that for any $\eps>0$,
\begin{align}
\Pr{\abs{\frac{1}{N}\sum_{i=1}^N (Z_i - \E[Z_i])} > \eps} \leq 2 \exp \lrp{-\frac{N\eps^2}{8\sigma_y^4t^8 +(4/3) \sigma_y^2 t^4 \eps}}.
\end{align}

Finally, we bound the difference in means of the $Y_i$ and its truncated version $Z_i$, to get
\begin{align}
\abs{\E[Z_i] - \E[Y_i]} \lsim \sigma_y^2 (\sigma_w \lor 1) t^2 \exp(-t^2/4).
\end{align}
The proof is more involved so the step is presented in Lemma \ref{lem:truncation-expectation-M2}.

Altogether, we get that for all $\eps \geq 0, t \geq 1$, 
\begin{align}
&\Pr{\abs{\frac{1}{N} \sum_{i=1}^N Y_i - \E[Y_i]} \geq \eps+C \sigma_w^3\sigma_y^3 t^4 \exp \lrp{-\frac{t^2}{4}}}\\
&\quad \leq 4 N \exp (-t^2/2) + 2 \exp \lrp{-\frac{9N\eps^2}{32\sigma_y^6 \sigma_w^6 t^{12} + 8 \sigma_y^3 \sigma_w^3 t^6 \eps}}
\end{align}

Recalling that $\first(a,a,a) = \frac{1}{N} \sum_{i=1}^N Y_i$ for a fixed $a \in \Cscr$, we now apply a union bound over all $a \in \Cscr$ and use \eqref{eq:covering} to conclude that
\begin{align}
& \Pr{\opnorm{\first - \E[\first]} \geq 16 \lrp{\eps+C \sigma_w^3\sigma_y^3 t^4 \exp \lrp{-\frac{t^2}{4}}}}\\
 & \quad \quad \leq 33^K \lrp{4 N \exp (-t^2/2) + 2 \exp \lrp{-\frac{9N\eps^2}{32\sigma_y^6 \sigma_w^6 t^{12} + 8\sigma_y^3 \sigma_w^3 t^6 \eps}}}.
\end{align}
To get the final result we plug in $\sigma_w = \opnorm{W}$.
\end{proof}

\begin{corollary}\label{cor:tensor-conc}
For any matrix $V \in \R^{d\by K}$, any $\eps >0$ and $\delta<1$, when 
\begin{align}\label{eq:N3-condition}
N \gsim \max \set*{ \frac{ \sigma_y^6 \opnorm{V}^6}{\eps^2} \ln^6 \lrp{\frac{33^K\cdot 4N}{\delta}}\ln\lrp{\frac{2\cdot 33^K}{\delta}},\frac{\delta}{33^K}\lrp{\frac{\sigma_y^{12}\opnorm{V}^{12}}{\eps^4} 
\lor \exp\lrp{\frac{\eps^{1/3}}{\sigma_y \opnorm{V}}}}},
\end{align}
where $\sigma_y^2 = b^2 + \sigma_\eta^2$, 
then $\Pr{\opnorm{\firstV - \E[\firstV]} \gsim \eps} \leq \delta.$

When $\eps < \sigma_y^3\opnorm{V}^3/1.55$, it suffices to have 
\begin{align}\label{eq:N3-condition-simplified}
N \gsim \max \set*{ \frac{ \sigma_y^6 \opnorm{V}^6}{\eps^2} \ln^6 \lrp{\frac{33^K\cdot 4N}{\delta}}\ln\lrp{\frac{2\cdot 33^K}{\delta}},\frac{\delta}{33^K}\cdot \frac{\sigma_y^{12}\opnorm{V}^{12}}{\eps^4}}.
\end{align}
\end{corollary}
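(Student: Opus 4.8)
The plan is to mirror the proof of Corollary \ref{cor:M2-conc}, now starting from the tail bound of Proposition \ref{prop:tensor-conc1} in place of Proposition \ref{prop:M2-conc-1}. The strategy is to balance the two terms in the probability bound \eqref{eq:intermed2-M3} against $\delta$ through an appropriate choice of the truncation level $t$, and then read off the resulting conditions on $N$. The only structural difference from the $M_2$ case is that the tensor is third order, so third powers of $\opnorm{V}$ and $\sigma_y$ appear where squares appeared before, and $t$ enters at higher powers.

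First I would choose $t$ so that the truncation term matches $\delta$, i.e.\ set $33^K\cdot 4N\exp(-t^2/2)=\delta$, which gives $t^2=2\ln(33^K\cdot 4N/\delta)$. With this choice all powers of $t$ become logarithmic in $N$ and $\delta$; in particular $t^{12}=(2\ln(33^K\cdot 4N/\delta))^6\sim\ln^6(33^K\cdot 4N/\delta)$. I would then handle the remaining pieces in turn. For the bias term $C\opnorm{V}^3\sigma_y^3 t^4\exp(-t^2/4)$, I would use $t^4\exp(-t^2/4)\lsim\exp(-t^2/8)$ together with the identity $\exp(-t^2/8)=(\delta/(33^K\cdot 4N))^{1/4}$; requiring this to be $\lsim\eps/(\opnorm{V}^3\sigma_y^3)$ yields the condition $N\gsim\frac{\delta}{33^K}\cdot\frac{\sigma_y^{12}\opnorm{V}^{12}}{\eps^4}$, which is one alternative in the max of \eqref{eq:N3-condition}.

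For the Bernstein term, in the regime where the quadratic part of the denominator dominates the linear part, i.e.\ $\eps\lsim\sigma_y^3\opnorm{V}^3 t^6$, the tail simplifies to $33^K\cdot 2\exp\bigl(-cN\eps^2/(\sigma_y^6\opnorm{V}^6 t^{12})\bigr)$; demanding this be $\leq\delta$ gives $N\gsim\frac{\sigma_y^6\opnorm{V}^6}{\eps^2}\,t^{12}\ln(2\cdot 33^K/\delta)$, which after substituting $t^{12}\sim\ln^6(\cdot)$ is precisely the first term of \eqref{eq:N3-condition}. The regime condition $\eps\lsim\sigma_y^3\opnorm{V}^3 t^6$ itself, after taking logarithms, translates into $N\gsim\frac{\delta}{33^K}\exp\bigl(\eps^{1/3}/(\sigma_y\opnorm{V})\bigr)$, supplying the last alternative in the max. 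These three conditions together give the stated bound on $N$.

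For the simplified statement I would set $x:=\sigma_y^3\opnorm{V}^3/\eps$, under which the two candidates in the second entry of the max become $x^4$ and $\exp(x^{-1/3})$, respectively. A short calculus check shows $x^4>\exp(x^{-1/3})$ for $x\geq 1.55$, so whenever $\eps<\sigma_y^3\opnorm{V}^3/1.55$ the exponential term is dominated by $\sigma_y^{12}\opnorm{V}^{12}/\eps^4$ and may be dropped, yielding \eqref{eq:N3-condition-simplified}. I expect the main obstacle to be purely bookkeeping: keeping the various powers of $\opnorm{V}$ and $\sigma_y$ consistent through the substitutions, and checking that the implicit equation defining $t$ is mutually consistent with the resulting condition on $N$. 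No new probabilistic ideas beyond those already used for $M_2$ are required.
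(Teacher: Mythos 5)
Your proposal is correct and follows essentially the same route as the paper's proof: same choice of truncation level $t^2 = 2\ln(33^K\cdot 4N/\delta)$, same three conditions on $N$ (Bernstein term in the quadratic-dominant regime, bias term, and the regime condition itself), and the same $1.55$ threshold for dropping the exponential alternative. The only cosmetic difference is that you discard the exponential term by comparing it to $x^4$ within the same max entry, whereas the paper compares it to the first (logarithmic) entry via $x^2 > \exp(x^{-1/3})$; both comparisons are valid and yield \eqref{eq:N3-condition-simplified}.
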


\begin{proof}
Starting from the result of Proposition \ref{prop:tensor-conc1}, %
set the truncation level $t$ such that
\begin{align}
&\delta = 33^K\cdot 4 N \exp (-t^2/2) \\
\implies & t^2 =2 \ln \lrp{\frac{33^K \cdot 4N}{\delta}}.
\end{align}
Note that there exists a universal constant $C_1$ such that
\begin{align}
t^4 \exp(-t^2/4) \leq C_1 \exp(-t^2/8).
\end{align}

For simplicity let $\sigma:= \opnorm{V}\sigma_y$.
To ensure a small error we set an implicit condition on $N$ by setting 
\begin{align}
&\eps \gsim \sigma^3  \exp(-t^2/8) \gsim \sigma^3 t^4 \exp(-t^2/4)\\
\iff &8\ln\lrp{\frac{\sigma^3}{\eps}} \leq t^2  = 2\ln\lrp{\frac{33^K\cdot 4N}{\delta}}\\
\iff & N \gsim \frac{\delta \sigma^{12}}{33^K \eps^4}. \label{eq:intermed-M3-N1}
\end{align}

Finally, when $\eps < 4\sigma^3 t^6$, the second term in the probability bound simplifies and we set 
\begin{align}
&\delta \geq 33^K\cdot 2\exp\lrp{-\frac{9N\eps^2}{64\sigma^6 t^{12}}}\\
\iff &N \gsim \frac{\sigma^6}{\eps^2}t^{12} \ln\lrp{\frac{2\cdot 33^K}{\delta}} \\
\iff  & N \gtrsim \frac{\sigma^6}{\eps^2} \ln^6 \lrp{\frac{4N\cdot 33^K}{\delta}}\ln\lrp{\frac{2\cdot 33^K}{\delta}}. \label{eq:intermed-M3-N2}
\end{align}
The condition for the simplification is implied by 
\begin{align}
&\eps \lsim \sigma^3 \ln^3 \lrp{\frac{33^K \cdot 4N}{\delta}} \\
\iff& N \gsim \frac{\delta}{33^K}\exp\lrp{\frac{\eps^{1/3}}{\sigma}} \label{eq:intermed-M3-N3}
\end{align}
which is easily satisfied in general. 
In fact, note that since $x^2 > \exp(x^{-1/3})$ whenever $x \geq 1.55$, let $x = \sigma_y^3 \opnorm{V}^3/\eps$, and note that $\ln(2\cdot 33^K/\delta) \geq \delta/33^K$ for $\delta \in (0,1)$. Then we have that when $\eps<\sigma_y^3 \opnorm{V}^3/1.55$, condition \eqref{eq:intermed-M3-N3} on $N$ is redundant in view of \eqref{eq:intermed-M3-N2}.

In general, under the combined conditions \eqref{eq:intermed-M3-N1}, \eqref{eq:intermed-M3-N2}, and \eqref{eq:intermed-M3-N3}, we get the claimed error bound.
\end{proof}

\begin{lemma}\label{lem:truncation-event}
For $i \in [N]$, let $Z_i = Y_i \ind{\Escr_i}$, with $(Y_i, Z_i)$ independent across $i$. Then 
\begin{align}
 \Pr{\abs{\frac{1}{N} \sum_{i=1}^N \lrp{Y_i - Z_i}} > 0} \leq N\Pr{\Escr_i}
\end{align}
\end{lemma}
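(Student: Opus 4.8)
The plan is to reduce the claim to an elementary union bound by pinning down exactly when the random variable $\frac{1}{N}\sum_{i=1}^N (Y_i - Z_i)$ can fail to vanish. Since $Z_i = Y_i \ind{\Escr_i}$, I would first rewrite each summand as
\begin{align}
Y_i - Z_i = Y_i\lrp{1 - \ind{\Escr_i}} = Y_i \ind{\Escr_i^c},
\end{align}
so that the $i$-th term is identically $0$ on the event $\Escr_i$ and can be nonzero only on the complementary (exceptional) event $\Escr_i^c$.

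From here the one real step is a deterministic event inclusion: a finite sum is nonzero only if at least one of its summands is nonzero, and by the display above the $i$-th summand is supported on $\Escr_i^c$. Thus
\begin{align}
\set*{\abs{\frac{1}{N}\sum_{i=1}^N (Y_i - Z_i)} > 0} \subseteq \bigcup_{i=1}^N \set{Y_i\ind{\Escr_i^c} \neq 0} \subseteq \bigcup_{i=1}^N \Escr_i^c.
\end{align}
Applying the union bound and using that the $\Escr_i^c$ are identically distributed across $i$ (they depend on $i$ only through the i.i.d.\ pairs $(Y_i,Z_i)$), I would conclude
\begin{align}
\Pr{\abs{\frac{1}{N}\sum_{i=1}^N (Y_i - Z_i)} > 0} \leq \sum_{i=1}^N \Pr{\Escr_i^c} = N\,\Pr{\Escr_i^c},
\end{align}
which is the asserted bound, with $\Escr_i^c$ the exceptional event whose probability the subgaussian tail bounds in Propositions \ref{prop:M2-conc-1} and \ref{prop:tensor-conc1} control by $4\exp(-t^2/2)$.

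There is no substantive obstacle here: the statement is a clean union bound, and the only point requiring care is the event-inclusion step, where one uses that nonvanishing of the sum forces nonvanishing of some individual term $Y_i\ind{\Escr_i^c}$. I note that the independence hypothesis on $(Y_i,Z_i)$ plays no role in the union bound itself; it is invoked only to guarantee that the events $\Escr_i^c$ share a common probability, so that $\sum_i \Pr{\Escr_i^c}$ collapses to $N\,\Pr{\Escr_i^c}$, and could be relaxed to the weaker requirement that $\Pr{\Escr_i^c}$ be constant in $i$.
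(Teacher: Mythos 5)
Your proof is correct and takes essentially the same route as the paper's: both reduce to the observation that $Y_i - Z_i = Y_i\ind{\Escr_i^c}$ vanishes off the exceptional events and then apply a union bound over $i \in [N]$. Your write-up is in fact slightly more careful than the paper's, which writes $\ind{\Escr_i}$ and $\Pr{\Escr_i}$ where the complements are meant (the stated bound should read $N\Pr{\Escr_i^c}$, consistent with your conclusion), and your remark that independence plays no role beyond making $\Pr{\Escr_i^c}$ constant in $i$ is also accurate.
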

\begin{proof}
By construction, we have 
\begin{align}
\abs{\frac{1}{N} \sum_{i=1}^N (Y_i - Z_i)}&\leq \frac{1}{N}\sum_{i=1}^N \abs{Y_i}\ind{\Escr_i}.
\end{align}
This expression is nonzero with probability at most $\Pr{\cup_{i \in [N]} \Escr_i} \leq 4N\exp(-t^2/2)$ by union bound. 
\end{proof}

\begin{lemma}\label{lem:truncation-expectation-M3}
Let $t = t_w/\sigma_w = t_y/\sigma_y \geq 1$. Then 
\begin{align}
\abs{\E[Z_i] - \E[Y_i]} \lsim \sigma_w^3\sigma_y^3 t^4 \exp(-t^2/4).
\end{align}
\end{lemma}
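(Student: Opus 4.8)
The plan is to mirror the second-order argument of Lemma \ref{lem:truncation-expectation-M2}, carried out one order higher. Writing $w_i=\brk{Wa,x_i}$ and recalling $Y_i=\frac{1}{6}y_i^3\lrp{w_i^3-3w_i\enorm{Wa}^2}$ and $Z_i=Y_i\ind{\Escr_i}$, the difference of means is $\abs{\E[Z_i]-\E[Y_i]}=\abs{\Exp{Y_i\ind{\Escr_i^c}}}$, and the triangle inequality together with $\enorm{Wa}\leq\opnorm{W}=\sigma_w$ gives
\[
6\abs{\Exp{Y_i\ind{\Escr_i^c}}}\leq \Exp{\abs{y_i}^3\abs{w_i}^3\ind{\Escr_i^c}}+3\sigma_w^2\,\Exp{\abs{y_i}^3\abs{w_i}\ind{\Escr_i^c}}.
\]
First I would decompose the complementary event into three disjoint pieces exactly as in the $M_2$ proof,
\[
\ind{\Escr_i^c}=\ind{\Escr_{i,y}^c\cap\Escr_{i,w}^c}+\ind{\Escr_{i,y}\cap\Escr_{i,w}^c}+\ind{\Escr_{i,y}^c\cap\Escr_{i,w}},
\]
yielding six expectations to control.

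Next I would bound each of the six terms by Cauchy--Schwarz, factoring each product into a part depending only on $y_i$ and a part depending only on $w_i$, and then applying the truncated subgaussian moment bound of Corollary \ref{cor:subg-trunc-moment} to whichever factor carries the tail indicator and the ordinary subgaussian moment bound (e.g.\ \cite[Proposition 2.5.2]{vershynin2018high}) to the untruncated factor. For the first expectation this produces factors of the form $\sqrt{\Exp{y_i^6\ind{\cdot}}\,\Exp{w_i^6\ind{\cdot}}}$, and for the second $\sqrt{\Exp{y_i^6\ind{\cdot}}\,\Exp{w_i^2\ind{\cdot}}}$. Substituting $t_y=\sigma_y t$ and $t_w=\sigma_w t$ turns each truncated sixth moment into a bound of the shape $\sigma^6(1+t^4)\exp(-t^2/2)$ and each untruncated moment into $\lsim\sigma^6$. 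The doubly-truncated piece then carries a factor $\exp(-t^2/2)$, whereas the mixed pieces (one variable truncated, the other not) carry only $\exp(-t^2/4)$, so the mixed terms dominate.

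Collecting the contributions, the overall bound takes the form $\lsim\sigma_w^3\sigma_y^3\cdot(\text{polynomial in }t)\cdot\exp(-t^2/4)$; absorbing the lower-order $\exp(-t^2/2)$ pieces via $t^k\exp(-t^2/4)\lsim 1$ for $t\geq 1$ and bounding the surviving polynomial factor crudely by $t^4$ gives the claimed $\sigma_w^3\sigma_y^3 t^4\exp(-t^2/4)$. The main obstacle here is not any single estimate but the bookkeeping across the six cross-terms: one must pair the correct truncated/untruncated moment orders in each Cauchy--Schwarz split and, crucially, verify that the worst case is a mixed term retaining a single $\exp(-t^2/4)$ factor rather than the sharper $\exp(-t^2/2)$. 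The exponent $t^4$ is deliberately loose, chosen so that the downstream inequality $t^4\exp(-t^2/4)\leq C\exp(-t^2/8)$ used in Corollary \ref{cor:tensor-conc} applies cleanly.
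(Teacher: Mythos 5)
Your proposal is correct and follows essentially the same route as the paper's proof: the same reduction to $\abs{\E[Y_i\ind{\Escr_i^c}]}$, the same split into the $\abs{y_i^3w_i^3}$ and $\sigma_w^2\abs{y_i^3w_i}$ terms, the same three-way disjoint decomposition of $\Escr_i^c$, and the same Cauchy--Schwarz pairings of truncated moments (Corollary \ref{cor:subg-trunc-moment}) with ordinary subgaussian moments. The only cosmetic difference is how the final $t^4$ arises---the paper keeps the doubly-truncated term's $t^4$ after relaxing $\exp(-t^2/2)$ to $\exp(-t^2/4)$, while you absorb that term and crudely upgrade the mixed terms' $t^2$ to $t^4$---but both yield the stated bound.
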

\begin{proof}
We upper bound $\abs{\E[Z_i] - \E[Y_i]}= \abs{\E[Y_i \ind{\Escr_i^c}]}$.
\begin{align}
6\abs{\E[Y_i \ind{\Escr_i^c}]} &\leq 6\Exp{\abs{Y_i} \ind{\Escr_i^c}} \notag\\
&\leq \Exp{\abs{y_i^3 w_i^3} \ind{\Escr_i^c}} + \Exp{\abs{y_i^3 w_i} \opnorm{W}^2 \ind{\Escr_i^c}}.\label{eq:intermed1}
\end{align}

Note that we can decompose the indicator of event $\Escr_i^c$ as $\ind{\Escr_i^c} = \ind{\Escr_{i,y}^c \cap \Escr_{i,w}^c} +
 \ind{\Escr_{i,y} \cap \Escr_{i,w}^c} + 
 \ind{\Escr_{i,y}^c \cap \Escr_{i,w}}$.
Focusing on the first term of \eqref{eq:intermed1}, we have 
\begin{align}
\Exp{\abs{y_i^3 w_i^3} \ind{\Escr_i^c}} &\leq\underbrace{\Exp{\abs{y_i^3 w_i^3} \ind{\Escr_{i,y}^c \cap \Escr_{i,w}^c}}}_{(A)}  + \underbrace{\Exp{\abs{y_i^3 w_i^3} \ind{\Escr_{i,y} \cap \Escr_{i,w}^c}}}_{(B)}  + \underbrace{\Exp{\abs{y_i^3 w_i^3} \ind{\Escr_{i,y}^c \cap \Escr_{i,w}}}}_{(C)}.
\end{align}

By the Cauchy-Schwarz inequality, 
\begin{align}
(A) &\leq \sqrt{ \Exp{y_i^6\ind{\Escr_{i,y}^c \cap \Escr_{i,w}^c}} \Exp{w_i^6\ind{\Escr_{i,y}^c \cap \Escr_{i,w}^c}} } \\
&\leq \sqrt{ \Exp{y_i^6\ind{\Escr_{i,y}^c}} \Exp{w_i^6 \ind{\Escr_{i,w}^c}}}. 
\end{align}
Corollary \ref{cor:subg-trunc-moment} implies
\begin{align}
\Exp{y_i^6 \ind{\abs{y_i}> t_y}} &\leq 12 \exp\lrp{-\frac{t_y^{2}}{2\sigma_y^2}} \lrp{8\sigma_y^6 + 4t_y^{2}\sigma_y^4 + t_y^{4} \sigma_y^2},
\end{align}
and similarly for $w_i$, so that 
\begin{align}
(A) &\leq 48 \exp\lrp{-\frac{t_y^{2}}{4\sigma_y^2} -\frac{t_w^{2}}{4\sigma_w^2}}
 \lrp{2\sigma_y^6 + t_y^{2}\sigma_y^4 + \frac{1}{4} t_y^{4} \sigma_y^2}^{1/2} \lrp{2\sigma_w^6 + t_w^{2}\sigma_w^4 + \frac{1}{4} t_w^{4} \sigma_w^2}^{1/2}\\
 &\lsim \sigma_y^3\sigma_w^3 t^4 \exp(-t^2/2),
\end{align}
where we have used $t = t_w/\sigma_w = t_y/\sigma_y \geq 1$ to simplify expressions.
Similarly, by the Cauchy-Schwarz inequality, a bound on the moment of the subgaussian variable $y_i$ (c.f. Proposition 2.5.2 of \cite{vershynin2018high} and from Corollary \ref{cor:subg-trunc-moment}, we have 
\begin{align}
(B) &= \Exp{\abs{y_i^3 w_i^3} \ind{\set{\abs{y_i} \leq t_y} \cap \set{\abs{w_i}> t_w}}}  \\
&\leq \sqrt{\Exp{y_i^6}  \Exp{w_i^6 \ind{\abs{w_i} > t_w}}} \\
&\leq  \lrp{6\sqrt{3}\sigma_y}^3 \cdot \lrp{48 \exp\lrp{-\frac{t_w^{2}}{2\sigma_w^2}} \lrp{2\sigma_w^6 + t_w^{2}\sigma_w^4 + \frac{1}{4} t_w^{4} \sigma_w^2}}^{1/2}\\
&\lsim \sigma_w^3 \sigma_y^3 t^2 \exp(-t^2/4).
\end{align}

Finally, 
\begin{align}
(C) &= \Exp{\abs{y_i^3 w_i^3} \ind{\abs{y_i}> t_y}\cap\ind{\abs{w_i} \leq t_w}}  \\
&\leq \sqrt{\Exp{y_i^6\ind{\abs{y_i}> t_y}}  \Exp{w_i^6}} \\
&\leq  \lrp{48 \exp\lrp{-\frac{t_y^{2}}{2\sigma_y^2}} \lrp{2\sigma_y^6 + t_y^{2}\sigma_y^4 + \frac{1}{4} t_y^{4} \sigma_y^2}}^{1/2} \cdot\lrp{6\sqrt{3}\sigma_w}^3 \\
&\lsim \sigma_w^3 \sigma_y^3 t^2 \exp(-t^2/4)
\end{align}
Altogether, we have 
\begin{align}
\Exp{\abs{y_i^3 w_i^3} \ind{\Escr_i^c}} &\lsim \sigma_w^3 \sigma_y^3 (t^4 + 2t^2) \exp(-t^2/4)
\end{align}

Following a similar procedure for the second term, we have 
\begin{align}
\Exp{\abs{y_i^3 w_i} \ind{\Escr_i^c}} &\leq \underbrace{\Exp{\abs{y_i^3 w_i} \ind{\Escr_{i,y}^c \cap \Escr_{i,w}^c}}}_{(A)}  + \underbrace{\Exp{\abs{y_i^3 w_i} \ind{\Escr_{i,y} \cap \Escr_{i,w}^c}}}_{(B)}  + \underbrace{\Exp{\abs{y_i^3 w_i} \ind{\Escr_{i,y}^c \cap \Escr_{i,w}}}}_{(C)}
\end{align}
with 
\begin{align}
(A) &\leq \sqrt{\Exp{y_i^6 \ind{\Escr_{i,y}^c}}\Exp{w_i^2 \ind{\Escr_{i,w}^c}}} \\
&\leq \sqrt{48 \exp\lrp{-\frac{t_y^{2}}{2\sigma_y^2}} \lrp{2\sigma_y^6 + t_y^{2}\sigma_y^4 + \frac{1}{4} t_y^{4} \sigma_y^2}}\sqrt{4\sigma_w^2 \exp \lrp{-\frac{t_w^2}{2\sigma_w^2}}} \\
&\lsim \sigma_w \sigma_y^3 t^2 \exp(-t^2/2), \\
(B) &\leq\sqrt{\Exp{y_i^6} \Exp{w_i^2 \ind{\Escr_{i,w}^c}}} \\
&\leq \sqrt{\lrp{6\sqrt{3}\sigma_y}^6}\sqrt{4\sigma_w^2 \exp \lrp{-\frac{t_w^2}{2\sigma_w^2}}}\\
&\lsim \sigma_w \sigma_y^3 \exp(-t^2/4), \\
(C) &\leq \sqrt{\Exp{y_i^6 \ind{\Escr_{i,y}^c}}\Exp{w_i^2}} \\
&\leq \sqrt{48 \exp\lrp{-\frac{t_y^{1/3}}{2\sigma_y^2}} \lrp{2\sigma_y^6 + t_y^{1/3}\sigma_y^4 + \frac{1}{4} t_y^{2/3} \sigma_y^2}} \sqrt{36\sigma_w^2} \\
&\lsim t^2\sigma_w \sigma_y^3 \exp(-t^2/2).
\end{align}

With $\opnorm{W}^2 = \sigma_w^2$, we combine results to get 
\begin{align}
\opnorm{W}^2\Exp{\abs{y_i^3 w_i} \ind{\Escr_i^c}} &\lsim \sigma_w^3 \sigma_y^3 (2t^2+1) \exp(-t^2/4)
\end{align}

In all (with $\opnorm{W}^2 = \sigma_w^2$), we have 
\begin{align}\label{eq:bigfat1}
\begin{split}
\abs{\E[Z_i] - \E[Y_i]} & \lsim \sigma_w^3 \sigma_y^3 (t^4+2t^2) \exp(-t^2/4) + \sigma_w^3 \sigma_y^3 (2t^2+1) \exp(-t^2/4) \\
&\lsim \sigma_w^3\sigma_y^3 t^4 \exp(-t^2/4).
\end{split}
\end{align}
\end{proof}

\subsection{Whitening Perturbation Bounds}
In Lemma \ref{lem:whiten-perturb}, Lemma \ref{lem:whiten-diff}, and Corollary \ref{cor:whitened}, we provide bounds to propagate error from matrices $M$ to their whitening matrices $W$, and to the third order tensors evaluated on different whitening matrices. 
\begin{lemma}[Lemma 9 in \cite{yi2016solving}]\label{lem:whiten-perturb}
Let $M$ and $\Mhat$ be positive semidefinite matrices in $\R^{d\by d}$, of rank $k$. Let $W, \What \in \R^{d\by K}$ be whitening matrices such that $WMW = I_K$ and $\What \Mhat \What = I_K$. Let $\alpha:= \opnorm{M - \Mhat}/\sigma_k(M)$. When $\alpha<1/3$, we have that
\begin{align}
\frac{1}{3}\opnorm{W} &\leq \opnorm{\What} \leq 2\opnorm{W} \\
\opnorm{W - \What} &\leq 2\alpha \opnorm{W} \\
\opnorm{\pseudo{\What}} &\leq 2\opnorm{\pseudo{W}} \\
\opnorm{\pseudo{W} - \pseudo{\What}} &\leq 2\alpha \opnorm{\pseudo{W}}. 
\end{align}
\end{lemma}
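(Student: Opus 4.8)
The plan is to reduce all four inequalities to elementary spectral bounds on a single $K\by K$ matrix measuring how well $W$ already whitens the perturbed matrix $\Mhat$. First I would fix an SVD $M = U\Sigma U'$ of rank $K$, so that the whitening matrix may be taken as $W = U\Sigma^{-1/2}$ (up to a right orthogonal factor), and record the two rotation-invariant quantities $\opnorm{W} = \sigma_K(M)^{-1/2}$ and $\opnorm{\pseudo{W}} = \sigma_1(M)^{1/2}$. The one conceptual subtlety to flag at the outset is that a whitening matrix is determined only up to right multiplication by an orthogonal $Q \in \R^{K\by K}$: the operator norms $\opnorm{\What}$ and $\opnorm{\pseudo{\What}}$ appearing in the first and third inequalities are invariant under this ambiguity and so hold for any whitening matrix of $\Mhat$, but the difference bounds $\opnorm{W - \What}$ and $\opnorm{\pseudo{W} - \pseudo{\What}}$ can only hold for a suitably aligned choice of $\What$, which I would exhibit explicitly.

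The key step is the following construction. Writing $E := \Mhat - M$, so that $\opnorm{E} = \alpha\,\sigma_K(M)$, I would form the $K\by K$ matrix $B := W'\Mhat W = I_K + W'EW$ and bound its deviation from the identity by $\opnorm{B - I_K} \leq \opnorm{W}^2 \opnorm{E} = \alpha < 1/3$; in particular $B$ is positive definite with eigenvalues in $[1-\alpha, 1+\alpha]$. I would then take the aligned whitening matrix to be $\What := W B^{-1/2}$, which is legitimate since $\What'\Mhat\What = B^{-1/2}(W'\Mhat W)B^{-1/2} = I_K$, and for which $\pseudo{\What} = B^{1/2}\pseudo{W}$ because $W$ has full column rank and $B$ is invertible.

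With this choice every inequality collapses to a spectral bound on $B^{\pm 1/2}$. For the difference bounds I would write $W - \What = W(I_K - B^{-1/2})$ and $\pseudo{W} - \pseudo{\What} = (I_K - B^{1/2})\pseudo{W}$ and apply submultiplicativity, reducing matters to bounding $\opnorm{I_K - B^{-1/2}}$ and $\opnorm{I_K - B^{1/2}}$; applying the scalar functions $\lambda \mapsto 1 - \lambda^{-1/2}$ and $\lambda \mapsto 1 - \lambda^{1/2}$ to the eigenvalues of $B$ in $[1-\alpha, 1+\alpha]$ shows these are at most $(1-\alpha)^{-1/2} - 1$ and $1 - (1-\alpha)^{1/2}$. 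The elementary inequalities $(1-\alpha)^{-1/2} - 1 \leq 2\alpha$ (equivalently $\alpha(3 - 4\alpha^2) \geq 0$) and $1 - (1-\alpha)^{1/2} \leq \alpha$, valid for $\alpha < 1/3$, then yield the second and fourth inequalities. For the norm bounds I would use $\opnorm{\What} \leq \opnorm{W}\opnorm{B^{-1/2}} \leq (1-\alpha)^{-1/2}\opnorm{W} \leq 2\opnorm{W}$ and $\opnorm{\pseudo{\What}} \leq \opnorm{B^{1/2}}\opnorm{\pseudo{W}} \leq (1+\alpha)^{1/2}\opnorm{\pseudo{W}} \leq 2\opnorm{\pseudo{W}}$, while the lower bound $\opnorm{\What} \geq \opnorm{W} - \opnorm{W - \What} \geq (1 - 2\alpha)\opnorm{W} \geq \frac{1}{3}\opnorm{W}$ follows from the second inequality. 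The main obstacle is really the single conceptual move of the first two paragraphs --- recognizing that the correct object to compare $W$ against is $WB^{-1/2}$ rather than the raw SVD whitening of $\Mhat$, which sidesteps an otherwise messy Davis--Kahan / Wedin subspace-perturbation argument; once the construction is in place the remainder is routine scalar calculus together with the verification of the universal constants $1/3$ and $2$.
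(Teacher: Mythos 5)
The paper never actually proves this lemma---it is imported verbatim as Lemma 9 of \cite{yi2016solving} and used as a black box---so there is no in-paper proof to compare yours against, and your proposal has to be judged on its own merits. It holds up. The matrix $B := W'\Mhat W = I_K + W'EW$ is symmetric with $\opnorm{B - I_K} \leq \opnorm{W}^2\opnorm{E} = \alpha$ (this is exactly where the canonical choice $\opnorm{W}^2 = 1/\sigma_k(M)$ is needed), so its spectrum lies in $[1-\alpha,\,1+\alpha]$ and $B^{\pm 1/2}$ are well defined; $\What := WB^{-1/2}$ indeed whitens $\Mhat$; the identity $\pseudo{\What} = B^{1/2}\pseudo{W}$ is valid because $W$ has full column rank and $B^{-1/2}$ is invertible; and the scalar estimates $(1-\alpha)^{-1/2}-1 \leq 2\alpha$ and $1-(1-\alpha)^{1/2} \leq \alpha$ for $\alpha<1/3$ deliver all four displayed inequalities, with slack to spare in the first, third, and fourth. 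Reducing everything to $\opnorm{I_K - B^{\pm 1/2}}$ is a clean, purely algebraic route that avoids Davis--Kahan/Wedin subspace perturbation entirely, and it makes the rotation ambiguity---which the lemma statement itself glosses over---explicit rather than hidden.

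Two caveats of scope rather than correctness. First, your side remark that the first and third bounds hold ``for any whitening matrix of $\Mhat$'' is only true among whiteners whose columns lie in the range of $\Mhat$: the condition $\What'\Mhat\What = I_K$ alone leaves an unconstrained null-space component (for $\Mhat = \diag(1,0)$, every $(1,c)'$ is a whitener, with arbitrarily large norm), so $\opnorm{\What}$ is not invariant in that generality; note also that your constructed $\What = WB^{-1/2}$ has columns in the range of $M$, not of $\Mhat$. Second, the alignment in the paper's application runs in the opposite direction from your construction: in Theorem \ref{thm:perturbed-MLR} the perturbed whitener $\Wtilde$ is the one the algorithm computes (the canonical SVD whitener of $\Mtilde_2$), and it is the \emph{true} whitener $W$ that is free to be chosen. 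Your argument adapts by swapping the roles of $M$ and $\Mhat$---take $W := \Wtilde\lrp{\Wtilde' M \Wtilde}^{-1/2}$ and use Weyl's inequality $\sigma_K(\Mhat) \geq (1-\alpha)\sigma_k(M)$ to keep the constants under control---but the lemma is then cleanest stated as an existence claim over aligned whitening pairs. Neither point breaks the proof; they only affect how the lemma should be phrased so that it plugs into the downstream dewhitening argument.
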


\begin{lemma}\label{lem:whiten-diff}
Let $\Mhat$ be a $d\by d \by d$ symmetric tensor, and let $W$ and $\What$ be $d \by K$ matrices. Then 
\begin{align}
&\opnorm{\Mhat(\What, \What, \What) - \Mhat(W, W, W)} \\
\leq &5\lrp{\opnorm{\What}^2 + \opnorm{\What}\opnorm{W} + \opnorm{W}^2} \opnorm{\What - W} \opnorm{\Mhat}
\end{align}
\end{lemma}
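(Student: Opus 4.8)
The plan is to linearize the difference by exploiting the trilinearity of the contraction $\Mhat(\cdot,\cdot,\cdot)$, reducing everything to a single contraction inequality for symmetric tensors. Writing $\Delta := \What - W$ and peeling off one factor at a time, trilinearity gives the exact telescoping identity
\begin{align*}
\Mhat(\What,\What,\What) - \Mhat(W,W,W) = \Mhat(\Delta,\What,\What) + \Mhat(W,\Delta,\What) + \Mhat(W,W,\Delta).
\end{align*}
The left-hand side is a symmetric tensor, so its operator norm is bounded by the multilinear operator norm $\sup_{\enorm{a}=\enorm{b}=\enorm{c}=1}\abs{T(a,b,c)}$, which is a genuine norm and hence obeys the triangle inequality; applying it to the three summands leaves three contraction terms to estimate.

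The key tool is the contraction bound $\opnorm{\Mhat(A,B,C)} \le \tfrac{9}{2}\opnorm{\Mhat}\opnorm{A}\opnorm{B}\opnorm{C}$ for symmetric $\Mhat$ and arbitrary matrices $A,B,C$. I would establish it from the symmetric polarization identity: writing $Q(a):=\Mhat(a,a,a)$,
\begin{align*}
\Mhat(x,y,z) = \frac{1}{48}\sum_{\epsilon\in\set{\pm1}^3}\epsilon_1\epsilon_2\epsilon_3\, Q\lrp{\epsilon_1 x+\epsilon_2 y+\epsilon_3 z}.
\end{align*}
Applying this to the normalized vectors $x/\enorm{x},\,y/\enorm{y},\,z/\enorm{z}$ and using $\abs{Q(a)}\le\opnorm{\Mhat}\enorm{a}^3$ together with $\enorm{\epsilon_1 x+\epsilon_2 y+\epsilon_3 z}\le 3$ on the unit sphere bounds each of the $8$ terms by $27\opnorm{\Mhat}$, giving $\abs{\Mhat(\,\cdot\,)}\le\tfrac{8\cdot 27}{48}\opnorm{\Mhat}=\tfrac92\opnorm{\Mhat}$ for unit arguments; rescaling by trilinearity yields $\abs{\Mhat(x,y,z)}\le\tfrac92\opnorm{\Mhat}\enorm{x}\enorm{y}\enorm{z}$, and taking the supremum over $Aa,Bb,Cc$ gives the matrix form.

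Combining the two steps, each telescoping term carries $\opnorm{\Delta}=\opnorm{\What-W}$ in one slot and $\opnorm{\What}$ or $\opnorm{W}$ in the other two, so the triangle inequality and the contraction bound give
\begin{align*}
\opnorm{\Mhat(\What,\What,\What)-\Mhat(W,W,W)} \le \frac92\opnorm{\Mhat}\opnorm{\What-W}\lrp{\opnorm{\What}^2+\opnorm{\What}\opnorm{W}+\opnorm{W}^2},
\end{align*}
and $\tfrac92\le 5$ yields the stated constant.

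The only real obstacle is the contraction bound itself: one must control $\Mhat$ evaluated at three \emph{different} arguments by its operator norm, which is defined as a supremum over the diagonal only. The factor $5$ in the lemma is precisely the order-$3$ polarization constant $3^3/3!=4.5$ rounded up, and it is the price of passing from the diagonal operator norm to the multilinear one by the elementary polarization route (rather than invoking the sharp fact that the two coincide for real symmetric tensors). Everything else — the telescoping identity and the triangle inequality — is routine, provided one is careful that the individual summands are not symmetric, so that the multilinear operator norm must be used as the intermediary.
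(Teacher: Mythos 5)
Your proof is correct, and its skeleton is the paper's: telescope the difference into three terms, each carrying one factor of $\What - W$, then control each off-diagonal contraction of the symmetric tensor $\Mhat$ by its (diagonal) operator norm. Where you genuinely differ is in how that contraction bound is obtained. The paper invokes its Lemma \ref{lem:tensor_norm}, proved by two successive expansion tricks---first reducing a general $T(a,b,c)$ to terms of the form $T(u,u,v)$, then reducing those to diagonal terms $T(u,u,u)$---which yields the constant $5$. You instead prove the bound in one shot from the symmetric polarization identity $\Mhat(x,y,z) = \frac{1}{48}\sum_{\epsilon\in\set{\pm1}^3}\epsilon_1\epsilon_2\epsilon_3\, Q\lrp{\epsilon_1 x+\epsilon_2 y+\epsilon_3 z}$ with $Q(a) := \Mhat(a,a,a)$; your identity and the counting are correct (the diagonal and doubly-repeated terms cancel under the sign-weighted sum, leaving $48\,\Mhat(x,y,z)$), and it gives the sharper constant $3^3/3! = 9/2$, from which the stated bound with $5$ follows a fortiori. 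The polarization route is cleaner and pinpoints exactly where the constant comes from, at the cost of verifying the identity itself; the paper's two-step argument avoids the identity but loses slightly in the constant. A minor bookkeeping difference: the paper telescopes after restricting to a common unit vector $v$ in all three slots, whereas you pass to the full multilinear norm of the difference first and then telescope---these are equivalent, and both routes must invoke the off-diagonal-to-diagonal comparison because the summands are not symmetric, a point you correctly flag.
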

\begin{proof}
Beginning with the definition of operator norm for a symmetric tensor, we have
\begin{align}
\opnorm{\Mhat(\What, \What, \What) - \Mhat(W, W, W)}&\leq \sup_{v \in \Sscr^{K-1}} \abs{\Mhat(\What v, \What v, \What v)-M(Wv, Wv, Wv)} 
\end{align}
For any $v \in \Sscr^{K-1}$, we have
\begin{align}
&\abs{\Mhat(\What v, \What v, \What v)-M(Wv, Wv, Wv)} \\
\leq & \Mhat((\What-W)v, \What v, \What v) + \Mhat(W v, (\What-W)v, \What v)+ \Mhat(W v, W v, (\What-W)v) \\
\leq &\opnorm{\What-W}\lrp{\opnorm{\What}^2 + \opnorm{\What}\opnorm{W} + \opnorm{W}^2} \lrp{5\opnorm{\Mhat}}
\end{align}
where in the last line we normalized the arguments of $\Mhat$ to be unit vectors and used Lemma \ref{lem:tensor_norm} to bound the expressions by $\opnorm{\Mhat}$.
\end{proof}

The following result shows how a perturbation of $M_2$ propagates through to a perturbation of $M_3^W$ through a perturbation of the whitening matrix $W$. In particular, the conditions for Corollary \ref{cor:whitened} hold with probability at least $1-\delta$ when $N_2$ satisfies \eqref{eq:N2-condition} with $\eps = \sigma_K(M_2)/3$.
\begin{corollary}\label{cor:whitened}
Let $W$ and $\What$ be whitening $d\by K$ matrices for the $d\by d$ matrices $M_2$ and $\Mhat_2$, respectively. 
When $\opnorm{M_2 - \Mhat_2} \leq \sigma_K(M_2)/3$, we have that for any third order symmetric tensor $M_3 \in \R^{d\by d \by d}$,
\begin{align}
\opnorm{\mixed - \clean} &\lsim  \frac{\opnorm{\Mhat_2-M_2}}{\sigma_K(M_2)^{5/2}} \opnorm{M_3}.
\end{align}
\end{corollary}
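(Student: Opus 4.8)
The plan is to obtain the bound by chaining the two whitening lemmas already in hand, with essentially no new probabilistic content: all randomness is already absorbed into the deterministic hypothesis $\opnorm{M_2 - \Mhat_2} \leq \sigma_K(M_2)/3$. The key observation is that $\mixed - \clean = M_3(\What,\What,\What) - M_3(W,W,W)$ is exactly the quantity controlled by Lemma~\ref{lem:whiten-diff} when that lemma is applied to the symmetric tensor $M_3$. This immediately gives
\begin{align}
\opnorm{\mixed - \clean} \leq 5\lrp{\opnorm{\What}^2 + \opnorm{\What}\opnorm{W} + \opnorm{W}^2}\opnorm{\What - W}\opnorm{M_3},
\end{align}
so the task reduces to bounding the whitening prefactor and the perturbation $\opnorm{\What - W}$, since $\opnorm{M_3}$ already appears verbatim in the target.

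First I would verify the hypothesis of Lemma~\ref{lem:whiten-perturb}: setting $\alpha := \opnorm{M_2 - \Mhat_2}/\sigma_K(M_2)$, the assumed bound $\opnorm{M_2 - \Mhat_2} \leq \sigma_K(M_2)/3$ is precisely $\alpha \leq 1/3$, so the lemma applies (with rank $K$) and yields $\opnorm{\What} \leq 2\opnorm{W}$ together with $\opnorm{W - \What} \leq 2\alpha\opnorm{W}$. Substituting the first inequality into the prefactor gives $\opnorm{\What}^2 + \opnorm{\What}\opnorm{W} + \opnorm{W}^2 \leq (4+2+1)\opnorm{W}^2 = 7\opnorm{W}^2$, and substituting the second for $\opnorm{\What - W}$ produces an extra factor of $\alpha\opnorm{W}$, so that
\begin{align}
\opnorm{\mixed - \clean} \lsim \alpha\, \opnorm{W}^3 \opnorm{M_3}.
\end{align}

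The one step that is not purely mechanical is identifying $\opnorm{W} = \sigma_K(M_2)^{-1/2}$. Since $M_2 = \sum_k p_k \beta_k \ot \beta_k$ has rank $K$ with $\sigma_K(M_2) > 0$, I would write its rank-$K$ SVD as $M_2 = U\Sigma U'$ with $U \in \R^{d\by K}$ having orthonormal columns and $\Sigma = \diag(\sigma_1(M_2),\dots,\sigma_K(M_2))$; the whitening matrix is $W = U\Sigma^{-1/2}$ (indeed $W'M_2 W = I_K$), and because $U$ has orthonormal columns, $\opnorm{W} = \opnorm{\Sigma^{-1/2}} = \sigma_K(M_2)^{-1/2}$. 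Plugging $\opnorm{W}^3 = \sigma_K(M_2)^{-3/2}$ and $\alpha = \opnorm{M_2 - \Mhat_2}/\sigma_K(M_2)$ into the last display collapses the powers of $\sigma_K(M_2)$ to $\sigma_K(M_2)^{-5/2}$ and gives the claim. The only real obstacle is bookkeeping these powers of $\sigma_K(M_2)$ correctly; once the identification of $\opnorm{W}$ is in place, the rest is a direct substitution.
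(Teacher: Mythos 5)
Your proof is correct and takes essentially the same route as the paper's: it applies Lemma~\ref{lem:whiten-diff} to $M_3$, invokes Lemma~\ref{lem:whiten-perturb} under the hypothesis $\alpha \leq 1/3$ to get $\opnorm{\What} \leq 2\opnorm{W}$ and $\opnorm{\What - W} \leq 2\alpha\opnorm{W}$, and then collapses the powers using $\opnorm{W} = \sigma_K(M_2)^{-1/2}$. Your explicit SVD justification that $\opnorm{W} = \opnorm{\Sigma^{-1/2}} = \sigma_K(M_2)^{-1/2}$ is in fact slightly more careful than the paper, which merely ``recalls'' this identity (and with a typo in the exponent: it writes $\opnorm{W}^2 = \sigma_K$ where it means $\opnorm{W}^{-2} = \sigma_K$).
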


\begin{proof}
Recall that for ease of notation we set $\sigma_K:= \sigma_K(M_2)$. 
Under this condition that $\opnorm{M_2 - \Mhat_2} \leq \sigma_K/3$, from Lemma \ref{lem:whiten-perturb} and recalling that $\opnorm{W}^2 = \sigma_K$, we have that
\begin{align}
\opnorm{W - \What} \leq 2 \frac{\opnorm{M_2 - \Mhat_2}}{\sigma_K^{3/2}}, 
\end{align}
and that $\opnorm{\What} \leq 2\opnorm{W} = 2\sigma_K^{-1/2}$.

Next, from Lemma \ref{lem:whiten-diff},
\begin{align}
\opnorm{\mixed - \clean} &\leq 5\lrp{\opnorm{\What}^2 + \opnorm{\What}\opnorm{W} + \opnorm{W}^2} \opnorm{\What - W} \opnorm{M_3} \\
&\leq 70 \opnorm{W}^2 \frac{\opnorm{M_2 - \Mhat_2}}{\sigma_K^{3/2}} \opnorm{M_3} \\
&\lsim \frac{\opnorm{M_2 - \Mhat_2}}{\sigma_K^{5/2}} \opnorm{M_3}
\end{align}
\end{proof}

\subsection{Tensor Decomposition Algorithm and Lemmas}
In this section, we provide the tensor power iteration method for tensor decomposition from \cite{anandkumar2014tensor} in Algorithm \ref{alg:tensor-power-method}, along with results on the robustness of the method. We use the presentation of Algorithm 2 and Lemma 4 in \cite{yi2016solving}, which are restatements of Algorithm 1 and Theorem 5.1 of \cite{anandkumar2014tensor}.
\IncMargin{1em}%
\begin{algorithm2e}[ht]
\caption{Robust Tensor Power Method \citep[Algorithm 1]{anandkumar2014tensor}}\label{alg:tensor-power-method}
\SetAlgoLined
\LinesNumbered
\DontPrintSemicolon
\KwIn{Symmetric tensor $M \in \R^{K \by K \by K}$, $K$ \\
Parameters $\Rstart$ - number of starting points, and $\Riter$ - number of iterations.}
\KwOut{$\set{(\phat_j, \betahat_j)\mid  j \in [K]}$ such that $M \approx \sum_{j=1}^K \phat_j \betahat_j^{\ot 3}$ and $\enorm{\betahat_j} = 1$ for $j \in [K]$.}

\For{$j = 1, \dots, K$}{
    \For(\tcp*[f]{Iterate on $\Rstart$ initial points}){$l = 1, \dots, \Rstart$}{ 
        $\beta_0^{(l)} \sim \Unif(\Sscr^{K-1})$ \;
        \For(\tcp*[f]{$\Riter$ power iterations}){$t = 0, \dots, \Riter$}{
            $\beta_{t+1}^{(l)} \gets M(I_K, \beta_t^{(l)}, \beta_t^{(l)}) = \sum_{i=1}^d \sum_{j, k \in [d]} M_{i,j,k} u_j u_k \vec{e}_i $ \;
            $\beta_{t+1}^{(l)} \gets \beta_{t+1}^{(l)}/\enorm{\beta_{t+1}^{(l)}}$ \;
        }
    }
    $l^* \gets \amax_{l \in [L]} M(\beta_\Riter^{(l)}, \beta_\Riter^{(l)}, \beta_\Riter^{(l)})$ \;
    $\beta_0 \gets \beta_\Riter^{(l^*)}$ \tcp*[r]{$\Riter$ more power updates on the best point}
    \For{$t = 0, \dots, \Riter$}{
        $\beta_{t+1}^{(l)} \gets M(I_K, \beta_t^{(l)}, \beta_t^{(l)}) = \sum_{i=1}^d \sum_{j, k \in [d]} M_{i,j,k} u_j u_k \vec{e}_i $ \;
        $\beta_{t+1}^{(l)} \gets \beta_{t+1}^{(l)}/\enorm{\beta_{t+1}^{(l)}}$ \;
    }
    $\betahat_j \gets \beta_\Riter^{(l^*)}$ \;
    $\phat_j \gets M(\betahat_j, \betahat_j, \betahat_j)$ \;
    $M \gets M - \phat_j \betahat_j^{\ot 3}$ \;
}
\end{algorithm2e}
\DecMargin{1em} %

\begin{lemma}[Robust Tensor Power Method, Theorem 5.1 in \cite{anandkumar2014tensor}] \label{lem:tensor-robustness}
Suppose $M \in \R^{K \times K \times K}$ is a tensor with decomposition $M = \sum_{k=1}^K p_k \beta_k^{\ot 3}$ where $\set{\beta_k}$ are orthonormal. Let $\pmin := \min_{k \in [K]} \set{p_k}>0$. Let $\widehat{M} = M + E$ be the input of Algorithm \ref{alg:tensor-power-method}, where $E$ is a symmetric tensor with $\opnorm{E}\leq \eps$. There exist constants $C_1, C_2, C_3>0$ such that the following holds. Suppose $\eps \leq C_1 \pmin / K$. For any $\delta \in (0, 1)$, suppose $(\Riter, \Rstart)$ in Algorithm \ref{alg:tensor-power-method} satisfies
\begin{align}\label{eq:alg-param-values}
\Riter \geq C_2 \cdot \lrp{\log K + \log \log (1/ \epsilon)}, \quad \Rstart \geq C_3 \cdot \poly(K) \log (1 / \delta),
\end{align}
for some polynomial function $\mathrm{poly}(\cdot)$. 
With probability at least $1 - \delta$, $\set{\phat_j, \widehat{\beta}_j)}$ returned by Algorithm \ref{alg:tensor-power-method} satisfy the bound
\begin{align}
\enorm{\widehat{\beta}_j - \beta_{\pi(j)}} \leq \frac{8 \epsilon}{p_{\pi(j)}}, \quad \abs{\phat_j - p_{\pi(j)}} \leq 5 \eps \xpln{for all $j \in [K]$},
\end{align}
where $\pi(\cdot)$ is some permutation function on $[K]$.
\end{lemma}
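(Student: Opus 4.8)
This lemma is a verbatim restatement of the robustness guarantee for the orthogonal tensor power method, so rather than re-derive it from the earlier concentration lemmas (which play no role here), the plan is to reconstruct the argument of \cite{anandkumar2014tensor} for an orthogonally decomposable tensor corrupted by a bounded symmetric perturbation. Throughout, write $\Mhat = M + E$ with $\opnorm{E}\leq \eps$, and for a unit vector $\theta$ let one normalized power step be $\theta \mapsto \Mhat(I_K,\theta,\theta)/\enorm{\Mhat(I_K,\theta,\theta)}$. Writing $\theta = \sum_k c_k \beta_k + \theta_\perp$ in the orthonormal basis $\set{\beta_k}$, the noiseless map $M(I_K,\theta,\theta) = \sum_k p_k c_k^2 \beta_k$ acts coordinatewise as $c_k \mapsto p_k c_k^2$ before renormalization. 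The first block of the proof exploits this quadratic map: tracking the ratios $c_{k^*}/c_k$ for the eventual target index $k^*$, one sees their logarithms roughly double each step, so a single coordinate pulls away and $\theta_t \to \beta_{k^*}$ quadratically fast; consequently $\Riter \gsim \ln\ln(1/\eps)$ iterations suffice to reach the noise floor once $k^*$ is the strict maximizer of $p_k|c_k|$.

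Two imperfections must then be absorbed. For \emph{noise robustness}, I would carry the perturbation $E(I_K,\theta,\theta)$ through the coordinate recursion and show that, under the separation hypothesis $\eps \lsim \pmin/K$, it shifts each coordinate by at most $\opnorm{E}$ without breaking the quadratic contraction, so the limit $\betahat$ satisfies $\enorm{\betahat - \beta_{k^*}} \lsim \eps/p_{k^*}$ and the recovered weight $\phat = \Mhat(\betahat,\betahat,\betahat)$ obeys $\abs{\phat - p_{k^*}} \lsim \eps$ by evaluating $\Mhat$ at the near-eigenvector. For \emph{initialization}, I would use that a uniform draw $\beta_0 \sim \Unif(\Sscr^{K-1})$ has coordinates $\brk{\beta_k,\beta_0}$ of typical magnitude $1/\sqrt{K}$, so with probability at least $1/\poly(K)$ the largest weighted coordinate exceeds the rest by the margin needed to enter the basin of attraction of some component. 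Retaining, among $\Rstart \gsim \poly(K)\ln(1/\delta)$ independent restarts, the one maximizing $\Mhat(\beta_{\Riter},\beta_{\Riter},\beta_{\Riter})$ then recovers one component with probability $1-\delta$; the budget $\Riter \gsim \ln K + \ln\ln(1/\eps)$ pays $\ln K$ steps to amplify the initial $1/\poly(K)$ margin and $\ln\ln(1/\eps)$ steps for the quadratic endgame.

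The last and most delicate block is the \emph{deflation} analysis, and I expect it --- not the single-component convergence --- to be the main obstacle. After extracting $(\phat_1,\betahat_1)$ and updating $M \gets M - \phat_1\betahat_1^{\ot 3}$, the residual is no longer exactly orthogonally decomposable, and the danger is that the $O(\eps)$ errors accumulate across the $K$ extraction rounds. The key fact to establish is that the deflation residual $\phat_1\betahat_1^{\ot 3} - p_{\pi(1)}\beta_{\pi(1)}^{\ot 3}$, while of operator norm $O(\eps)$, behaves like noise of size $O(\eps)$ only along directions near $\beta_{\pi(1)}$ and is negligible along directions near the unrecovered components, since $\brk{\betahat_1,\theta}$ is small there. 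Feeding this directionally-localized bound into an induction on the round index shows that the effective perturbation seen in round $j$ remains $O(\eps)$ rather than growing with $j$; this is precisely where the constants $8$ and $5$ in $\enorm{\betahat_j - \beta_{\pi(j)}}\leq 8\eps/p_{\pi(j)}$ and $\abs{\phat_j - p_{\pi(j)}}\leq 5\eps$ are fixed. A union bound over the $K$ rounds and the restarts within each round then yields the stated $1-\delta$ guarantee.
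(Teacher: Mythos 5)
You should first note what the paper itself does here: it does \emph{not} prove this lemma. The statement is imported verbatim, constants included, from Theorem 5.1 of \cite{anandkumar2014tensor} (via the restatement in Lemma 4 of \cite{yi2016solving}) and is used downstream as a black box in the proof of Theorem \ref{thm:perturbed-MLR}. So there is no in-paper argument to compare your attempt against; the relevant comparison is with the original source. Judged against that, your sketch reconstructs the right architecture: the coordinatewise quadratic map $c_k \mapsto p_k c_k^2$ and the doubling of log-ratios that gives the $\ln\ln(1/\eps)$ term in $\Riter$; the $1/\poly(K)$ success probability of a uniform initialization, amplified by $\Rstart \gsim \poly(K)\ln(1/\delta)$ restarts and paying the extra $\ln K$ iterations; the condition $\eps \lsim \pmin/K$ under which the perturbation $E(I_K,\theta,\theta)$ does not break the contraction; and, correctly identified as the crux, the deflation analysis, where one must show that each residual $\phat_j\betahat_j^{\ot 3} - p_{\pi(j)}\beta_{\pi(j)}^{\ot 3}$ acts like $O(\eps)$ noise only near the already-extracted direction, so that the effective perturbation seen in later rounds stays $O(\eps)$ rather than growing to $O(K\eps)$ --- this is exactly the content of the deflation lemma in \cite{anandkumar2014tensor}. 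Two caveats. First, a trivial one: since $\set{\beta_k}_{k \in [K]}$ consists of $K$ orthonormal vectors in $\R^K$, it is a complete basis and your component $\theta_\perp$ is identically zero. Second, and more substantively, what you have written is a plan rather than a proof: the basin-of-attraction condition under noise, the quantitative initialization probability, and the deflation induction are asserted rather than derived, and the specific constants $8$ and $5$ (and the precise form of \eqref{eq:alg-param-values}) can only be pinned down by carrying out those computations, which occupy most of the corresponding appendix of \cite{anandkumar2014tensor}. Nothing you assert is false, and filling in those steps would reproduce the cited proof; but as it stands your write-up establishes the lemma only at the same level of rigor as the paper's own citation does.
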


\subsection{Tensor Norm Lemmas}
In Lemmas \ref{lem:tensor-factor}, \ref{lem:tensor_norm}, and \ref{lem:covering}, we provide expressions for bounding the norms of third order tensors.  
\begin{lemma}\label{lem:tensor-factor}
Let $T$ be a symmetric $d\by d\by d$ tensor, and let $W$ be a $d\by K$ matrix. Then $\opnorm{T(W,W,W)}\leq \opnorm{W}^3 \opnorm{T}$. 
\end{lemma}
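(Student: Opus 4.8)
The plan is to unfold the definition of the tensor operator norm and reduce the claim to the scalar homogeneity of the symmetric trilinear form $T$ together with the submultiplicativity $\enorm{Wa} \leq \opnorm{W}\enorm{a}$. First I would write the operator norm of the mapped $K\by K\by K$ tensor as a supremum over the lower-dimensional sphere,
\[
\opnorm{T(W,W,W)} = \sup_{a \in \Sscr^{K-1}} \abs{T(W,W,W)(a,a,a)}.
\]
The key observation is that composing the multilinear actions gives $T(W,W,W)(a,a,a) = T(Wa, Wa, Wa)$. This follows directly from the paper's definition $M(A,B,C) = \sum_i A'a_i \ot B'b_i \ot C'c_i$: evaluating the resulting tensor at $(a,a,a)$ contracts each factor $W'a_i$ against $a$, and $\brk{W'a_i, a} = \brk{a_i, Wa}$, so each argument $W$ followed by $a$ collapses to the single argument $Wa$.

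Second, for a fixed unit vector $a \in \Sscr^{K-1}$, set $v = Wa$. Since $T$ is a symmetric trilinear form, $T(v,v,v)$ is homogeneous of degree three in $v$, so for $v \neq 0$,
\[
\abs{T(v,v,v)} = \enorm{v}^3 \, \abs{T\big(v/\enorm{v},\, v/\enorm{v},\, v/\enorm{v}\big)} \leq \enorm{v}^3 \opnorm{T},
\]
where the inequality uses that $v/\enorm{v}$ is a unit vector together with the definition of $\opnorm{T}$ as the supremum of $\abs{T(u,u,u)}$ over $u \in \Sscr^{d-1}$; the bound is trivial when $v = 0$.

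Finally, bounding $\enorm{v} = \enorm{Wa} \leq \opnorm{W}\enorm{a} = \opnorm{W}$ and taking the supremum over $a \in \Sscr^{K-1}$ yields $\opnorm{T(W,W,W)} \leq \opnorm{W}^3 \opnorm{T}$. I do not expect any real obstacle: the only step requiring a line of care is the reduction $T(W,W,W)(a,a,a) = T(Wa,Wa,Wa)$ from the definition of the multilinear action, after which the conclusion is immediate from homogeneity and operator-norm submultiplicativity.
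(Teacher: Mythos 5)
Your proof is correct and follows essentially the same route as the paper's: both unfold the operator norm as a supremum of $\abs{T(Wa,Wa,Wa)}$ over the unit sphere in $\R^K$, rescale by $\enorm{Wa}^3$ via homogeneity, and finish with $\enorm{Wa}\leq\opnorm{W}$. The only differences are cosmetic refinements on your part — you justify the identity $T(W,W,W)(a,a,a)=T(Wa,Wa,Wa)$ explicitly and handle the degenerate case $Wa=0$, which the paper's proof silently skips.
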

\begin{proof}
Starting with the definition, 
\begin{align}
\opnorm{T}(W, W, W) &= \sup_{v \in \Sscr^{K-1}} \abs{T(Wv, Wv, Wv)} \\
&\leq \sup_{v \in \Sscr^{K-1}} \enorm{Wv}^3 \abs{T(u, u, u)} \xpln{$u:= Wv/\enorm{Wv}$} \\
&\leq \opnorm{W}^3 \sup_{u \in \Sscr^{d-1}} \abs{T(u,u,u)} = \opnorm{W}^3 \opnorm{T}.
\end{align}
\end{proof}

The following lemma is based on Lemma 12 in \cite{yi2016solving} but with an improved constant. 
\begin{lemma}[Tensor operator norm]\label{lem:tensor_norm}
For any symmetric third-order tensor $T \in \R^{d \by d \by d}$, 
\begin{align}
\opnorm{T} \leq \sup_{a,b,c \in \Sscr^{d-1}} T(a,b,c) \leq 5 \opnorm{T}
\end{align}
\end{lemma}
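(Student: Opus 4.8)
The statement splits into two inequalities, and the left one is essentially free. For any $a \in \Sscr^{d-1}$, taking $b = c = a$ gives $T(a,a,a) \leq \sup_{a,b,c} T(a,b,c)$; and when $T(a,a,a) < 0$ I would instead write $\abs{T(a,a,a)} = T(-a,-a,-a)$, which is again of the form $T(a',b',c')$ with $a' = b' = c' = -a \in \Sscr^{d-1}$, using that a third-order tensor is odd under a global sign flip. Either way $\abs{T(a,a,a)} \leq \sup_{a,b,c} T(a,b,c)$, and taking the supremum over $a$ yields $\opnorm{T} = \sup_a \abs{T(a,a,a)} \leq \sup_{a,b,c} T(a,b,c)$.

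The right inequality is the substantive part, and the plan is to prove it by polarization. The key identity I would establish is that, for all $a,b,c$ and with $v_s := s_1 a + s_2 b + s_3 c$, $T(a,b,c) = \frac{1}{48}\sum_{s \in \set{\pm 1}^3} s_1 s_2 s_3\, T(v_s, v_s, v_s)$. To verify it I would expand each $T(v_s,v_s,v_s)$ by trilinearity and symmetry into the $27$ monomials $T(\cdot,\cdot,\cdot)$ whose three slots range over $\set{a,b,c}$, then multiply by $s_1 s_2 s_3$ and sum over the eight sign patterns. A monomial survives the sum only if each of $a,b,c$ appears in an \emph{odd} number of slots, which forces exactly one slot each; hence only the six permutations of $(a,b,c)$ remain, each equal to $T(a,b,c)$ by symmetry and each accumulating a factor $\sum_s (s_1 s_2 s_3)^2 = 8$, for a total of $48\, T(a,b,c)$. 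This sign-counting bookkeeping is the main obstacle, though it is routine.

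Finally I would convert the identity into the norm bound. By homogeneity of the definition $\opnorm{T} = \sup_{\enorm{u}=1}\abs{T(u,u,u)}$, we have $\abs{T(v,v,v)} \leq \opnorm{T}\enorm{v}^3$ for every $v$, and the triangle inequality gives $\enorm{v_s} \leq \enorm{a} + \enorm{b} + \enorm{c} = 3$ for each of the eight sign patterns. Bounding all eight terms by $27\opnorm{T}$ yields $\abs{T(a,b,c)} \leq \frac{8 \cdot 27}{48}\opnorm{T} = \frac{9}{2}\opnorm{T} \leq 5\opnorm{T}$, so in particular $\sup_{a,b,c} T(a,b,c) \leq 5\opnorm{T}$, as claimed. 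The factor $\frac{9}{2}$ from the \emph{signed} polarization is what improves on the cruder constant $9$ obtained from the unsigned identity $6\,T(a,b,c) = T(a{+}b{+}c,\dots) - T(a{+}b,\dots) - \cdots + T(a,\dots) + \cdots$, and it sits comfortably within the stated bound of $5$.
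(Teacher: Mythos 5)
Your proof is correct, and it takes a genuinely cleaner route than the paper's. The paper also argues by polarization, but in two lossy stages: it first expands $T(a+b,a+b,c)$ to get $T(a,b,c) \leq 3\sup_{u,v\in\Sscr^{d-1}} T(u,u,v)$, then expands $T(u+v,u+v,u+v)$ to get $\sup_{u,v} T(u,u,v) \leq \tfrac{5}{3}\sup_{u} T(u,u,u)$, and multiplies the two bounds to reach the constant $5$. You instead use the exact one-shot polarization identity $T(a,b,c) = \tfrac{1}{48}\sum_{s\in\set{\pm 1}^3} s_1 s_2 s_3\, T(v_s,v_s,v_s)$ with $v_s = s_1 a + s_2 b + s_3 c$; your parity argument for why only the six permutations of $(a,b,c)$ survive (each with weight $8$) is right, and combining the identity with $\abs{T(v,v,v)} \leq \opnorm{T}\enorm{v}^3$ and $\enorm{v_s}\leq 3$ gives the constant $\tfrac{8\cdot 27}{48} = \tfrac{9}{2}$, which is sharper than the paper's $5$ and a fortiori suffices. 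Your handling of the left inequality via the global sign flip $T(-a,-a,-a) = -T(a,a,a)$ also makes explicit a step the paper leaves unstated. What your route buys: a single exact identity in place of two crude steps, a slightly better constant, and an argument that generalizes verbatim to symmetric tensors of order $k$ (with constant $k^k/k!$). What the paper's route buys: it avoids the eight-term sign bookkeeping, needing only two short expansions and the triangle inequality; as a side note, the paper's displayed expansion of $T(u+v,u+v,u+v)$ carries a sign typo ($-3T(u,v,v)$ where it should be $+3T(u,v,v)$), a kind of slip your exact identity avoids entirely.
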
 
\begin{proof}
Note that for all $a, b, c \in \Sscr^{d-1}$, 
\begin{align}
T(a+b, a+b, c) &= T(a, a, c) + T(a, b, c) + T(b, a, c) + T(b, b, c).
\end{align}
Rearranging terms and using the symmetry of $T$ and that $\enorm{a+b} \leq 2$, we have that 
\begin{align}
2 T(a, b, c) \leq \sup_{u, v\in \Sscr^{d-1}} (2^2 + 1 + 1) T(u, u, v) = \sup_{u,v\in \Sscr^{d-1}} 6T(u,u,v).
\end{align}
Next, 
\begin{align}
T(u+v, u+v, u+v) &= T(u,u,u) + T(v,v,v) + 3T(u,u,v) - 3T(u,v,v),
\end{align}
which implies that 
\begin{align}
6\sup_{u,v\in \Sscr^{d-1}} T(u,u,v) \leq (2^3 + 1 + 1)\sup_{u\in \Sscr^{d-1}} T(u,u,u).
\end{align}
In all we have that 
\begin{align}
T(a, b, c) \leq 3T(u,u,v) \leq 5 \sup_{u\in \Sscr^{d-1}} T(u,u,u).
\end{align}
\end{proof}

\begin{lemma}[Covering Lemma] \label{lem:covering}
Let $T$ be a symmetric $d \by d \by d$ tensor, $\eps \in (0, 1/2)$, and $\Cscr$ be an $\eps$-cover of $\Sscr^{d-1}$. Then 
\begin{align}
\sup_{v \in \Cscr} T(v,v,v) \leq \opnorm{T} \leq \frac{1}{1-15\eps} \sup_{v \in \Cscr} T(v, v, v).
\end{align}
\end{lemma}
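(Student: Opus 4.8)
The lower bound requires no work: since $\Cscr \subseteq \Sscr^{d-1}$, the supremum of $T(v,v,v)$ over the smaller set $\Cscr$ is at most its supremum over all of $\Sscr^{d-1}$, which is bounded by $\opnorm{T} = \sup_{a \in \Sscr^{d-1}} \abs{T(a,a,a)}$. So the plan concentrates entirely on the upper bound. The strategy is the standard covering argument for tensor operator norms: locate where the diagonal form nearly attains $\opnorm{T}$, approximate that point by a cover element, and control the discrepancy using multilinearity together with Lemma \ref{lem:tensor_norm}.

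Concretely, first I would take a maximizer $v^* \in \Sscr^{d-1}$ of the continuous map $a \mapsto T(a,a,a)$ on the compact sphere, so that $\abs{T(v^*,v^*,v^*)} = \opnorm{T}$. Since $T$ is symmetric, $T(-v^*,-v^*,-v^*) = -T(v^*,v^*,v^*)$, so after possibly replacing $v^*$ by $-v^*$ I may assume $T(v^*,v^*,v^*) = \opnorm{T} \geq 0$. Because $\Cscr$ is an $\eps$-cover, I would pick $v \in \Cscr$ with $\enorm{v^*-v} \leq \eps$ and then estimate how far $T(v,v,v)$ can fall below $T(v^*,v^*,v^*)$ via the telescoping identity
\begin{align}
T(v^*,v^*,v^*) - T(v,v,v) = T(v^*-v, v^*, v^*) + T(v, v^*-v, v^*) + T(v, v, v^*-v).
\end{align}
In each of the three terms exactly one argument equals $w := v^*-v$ with $\enorm{w} \leq \eps$, while the remaining two are unit vectors. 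Normalizing the $w$-slot and invoking the mixed-argument bound $\sup_{a,b,c \in \Sscr^{d-1}} T(a,b,c) \leq 5\opnorm{T}$ from Lemma \ref{lem:tensor_norm}, each term is at most $5\eps\opnorm{T}$ in magnitude, so the total discrepancy is at most $15\eps\opnorm{T}$. Hence $\opnorm{T} = T(v^*,v^*,v^*) \leq T(v,v,v) + 15\eps\opnorm{T} \leq \sup_{u \in \Cscr} T(u,u,u) + 15\eps\opnorm{T}$, and rearranging yields $\opnorm{T} \leq (1-15\eps)^{-1} \sup_{u \in \Cscr} T(u,u,u)$, which is exactly the claim (the bound being meaningful once $15\eps < 1$).

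I do not expect a serious obstacle here. The only subtlety worth flagging is that the right-hand side uses $T(v,v,v)$ without absolute values whereas $\opnorm{T}$ is defined with them; this is resolved cleanly by the negation step above, which ensures the maximizing direction is chosen so that the diagonal value is positive and the nearby cover element inherits a value close to $+\opnorm{T}$. The genuine technical input is Lemma \ref{lem:tensor_norm}, which converts evaluations on three distinct unit vectors into a constant multiple of the diagonal operator norm and supplies the factor $3 \times 5 = 15$; once that lemma is available, the remaining computation is routine.
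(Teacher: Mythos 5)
Your proof is correct and follows essentially the same route as the paper's: locate a maximizer $v^*$ on the sphere, approximate it by a cover element, telescope the difference $T(v^*,v^*,v^*) - T(v,v,v)$ into three terms each containing one copy of $v^*-v$, and bound each via the mixed-argument estimate of Lemma \ref{lem:tensor_norm} to get the $15\eps\opnorm{T}$ discrepancy. Your explicit handling of the sign (replacing $v^*$ by $-v^*$ so the diagonal value equals $+\opnorm{T}$) is a detail the paper leaves implicit, but it is the same argument.
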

\begin{proof}
Since $\Sscr^{d-1}$ is compact and the map $v \mapsto T(v, v, v)$ is continuous, there exists a $v^* \in \Sscr^{d-1}$ such that $\opnorm{T} = T(v^*, v^*, v^*)$. Let $v_0 \in \Cscr$ be such that $\enorm{v_0 - v^*} \leq \eps$. Then 
\begin{align}
T(v_0, v_0, v_0) - T(v*, v*, v*) &=T(v_0-v*, v_0, v_0) + T(v*, v_0-v*, v_0) + T(v*, v*, v_0-v*) \\
&= \enorm{v_0 - v*} \lrp{T(\delta, v_0, v_0) + T(v*, \delta, v_0) + T(v*, v*, \delta)}
\end{align}
where $\delta:= (v_0 - v*)/\enorm{v_0 - v*} \in \Sscr^{d-1}$. 
Since for every $a, b, c \in \Sscr^{d-1}$, $T(a, b, c) \leq 5 \opnorm{T} = T(v^*, v^*, v^*)$ by Lemma \ref{lem:tensor_norm}, we have 
\begin{align}
\abs{T(v_0, v_0, v_0) -\opnorm{T}} \leq 15\eps \opnorm{T}.
\end{align}
Rearranging terms gives us our claim.
\end{proof}
\subsection{Expectation of Truncated Subgaussian RVs}
In Corollary \ref{cor:subg-trunc-moment}, we provide bounds on the expectation of the truncated upper tails of subgaussian random variables and their powers. The result relies on Lemma \ref{lem:trunc-gauss-moment}, Lemma \ref{lem:intermed-subg-trunc} and Corollary \ref{cor:intermed-subg-trunc}, which we state first.

Lemma \ref{lem:trunc-gauss-moment} is similar to Lemma 14 in \cite{yi2016solving} but we extend the result to cover odd values of $p$ as well.
\begin{lemma}[Recursive Truncated Gaussian Moments]\label{lem:trunc-gauss-moment}
Let $X\sim \Nscr(0,1)$, and let $M_p(\tau):= \Exp{X^p \ind{X>\tau}} $ for all $\tau \geq 0$. 
\begin{align}
M_0(\tau) &= \Pr{X>\tau} \leq \frac{1}{\sqrt{2\pi}}\frac{1}{\tau} \exp\lrp{-\tau^2/2},\\
M_1(\tau) &= \sqrt{\frac{1}{2\pi}} e^{-\tau^2/2} \xpln{and}\\
M_p(\tau) &= (p-1) M_{p-2}(\tau) + \sqrt{\frac{1}{2\pi}} \tau^{p-1} \exp \lrp{-\tau^2/2} \xpln{for $p\geq 2$.}
\end{align}
\end{lemma}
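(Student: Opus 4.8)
The plan is to exploit the single identity $\phi'(x) = -x\,\phi(x)$ satisfied by the standard Gaussian density $\phi(x) = \frac{1}{\sqrt{2\pi}}e^{-x^2/2}$, which reduces each of the three claims to a short application of the fundamental theorem of calculus. First I would dispatch $M_1$: since $x\phi(x) = -\phi'(x)$, we have $M_1(\tau) = \int_\tau^\infty x\phi(x)\,dx = [-\phi(x)]_\tau^\infty = \phi(\tau)$, which is exactly $\sqrt{1/(2\pi)}\,e^{-\tau^2/2}$, giving the second claim directly.

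For the tail bound on $M_0(\tau) = \Pr{X > \tau}$, I would use the standard integrand-inflation trick: on the region $x \geq \tau > 0$ we have $1 \leq x/\tau$, so $M_0(\tau) = \int_\tau^\infty \phi(x)\,dx \leq \frac{1}{\tau}\int_\tau^\infty x\phi(x)\,dx = \frac{1}{\tau}\phi(\tau)$, where the last integral is precisely the $M_1$ computation just established. This yields the claimed $\frac{1}{\sqrt{2\pi}\,\tau}e^{-\tau^2/2}$.

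The recursion for $p \geq 2$ comes from integrating by parts with the same substitution $x\phi(x) = -\phi'(x)$. Writing $M_p(\tau) = \int_\tau^\infty x^{p-1}\cdot x\phi(x)\,dx = \int_\tau^\infty x^{p-1}\bigl(-\phi'(x)\bigr)\,dx$ and taking $u = x^{p-1}$, $v = -\phi(x)$ produces a boundary term $[-x^{p-1}\phi(x)]_\tau^\infty = \tau^{p-1}\phi(\tau)$ and a remaining integral $(p-1)\int_\tau^\infty x^{p-2}\phi(x)\,dx = (p-1)M_{p-2}(\tau)$, which together give the stated recurrence $M_p(\tau) = (p-1)M_{p-2}(\tau) + \sqrt{1/(2\pi)}\,\tau^{p-1}e^{-\tau^2/2}$. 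The only points needing care are verifying that the boundary term vanishes at infinity (immediate, since $x^{p-1}\phi(x)\to 0$ by the super-polynomial decay of the Gaussian) and observing that this one integration-by-parts step is indifferent to the parity of $p$ — which is exactly the extension beyond the even-only statement of \cite{yi2016solving}. There is no genuine obstacle here; the main thing to keep explicit is the parity-independence, so that the recursion can be unrolled for both even and odd $p$ (bottoming out at $M_0$ or $M_1$ respectively) when it is invoked in the downstream truncated-moment bounds.
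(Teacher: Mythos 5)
Your proof is correct and takes essentially the same approach as the paper: integration by parts drives the recursion, the antiderivative $-\phi(x)$ of $x\phi(x)$ (with $\phi(x)=\tfrac{1}{\sqrt{2\pi}}e^{-x^2/2}$) gives $M_1$, and the $M_0$ bound is Mill's inequality, which you prove via the standard $1\le x/\tau$ inflation while the paper simply cites it. The only other difference is cosmetic but slightly in your favor: you integrate by parts in the direction that expresses $M_p(\tau)$ directly in terms of $M_{p-2}(\tau)$, producing the stated recursion with the correct positive boundary term $\tau^{p-1}\phi(\tau)$ in one step, whereas the paper expresses $M_p(\tau)$ in terms of $M_{p+2}(\tau)$ and must reindex (and its displayed intermediate algebra carries a sign slip on the boundary term that your derivation avoids).
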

\begin{proof}
Using integration by parts, let $v = x^{p+1}/(p+1)$ and $u = \frac{1}{\sqrt{2\pi}} \exp(-x^2/2)$. Then 
\begin{align}
M_p(\tau) &= \int_{x>\tau} \frac{1}{\sqrt{2\pi}} x^p \exp^{-x^2/2}\, dx =\int_\tau^\infty u\, dv \\
&= 2\lrp{ uv|_{\tau}^{\infty} -\int_\tau^\infty v\, du} \\
&= \lrp{\frac{1}{\sqrt{2\pi}} \frac{\tau^{p+1}}{p+1} \exp(-\tau^2/2)) + \frac{1}{p+1} \int_\tau^\infty \frac{1}{\sqrt{2\pi}} x^{p+2} \exp^{-x^2/2}dx} \\
&= \sqrt{\frac{1}{2\pi}} \frac{\tau^{p+1}}{p+1} \exp(-\tau^2/2)) + \frac{1}{p+1} M_{p+2}(\tau).
\end{align}
Given $M_0(\tau)$, the above gives us an expression for $M_p(\tau)$ for all even $p$. A bound on $M_0(\tau)$ can be obtained from Mill's inequality. Likewise, given $M_1(\tau)$ we obtain expressions for $M_p(\tau)$ for all odd $p$. We solve for $M_1(\tau)$ directly: 
\begin{align}
M_1(\tau) &= \Exp{X\ind{X>\tau}} \\
&= \int_{x>\tau} \frac{1}{\sqrt{2\pi}} \exp(-x^2/2) x\, dx \\
&= \int_{\tau^2/2}^\infty \frac{1}{\sqrt{2\pi}} \exp(-u)\, du \xpln{change variable $u = x^2/2$}\\
&= \sqrt{\frac{1}{2\pi}} e^{-\tau^2/2}. 
\end{align}
\end{proof}

Lemma \ref{lem:intermed-subg-trunc} can be considered as a corollary of Lemma 14 in \cite{yi2016solving}, though our proof is self-contained. 
\begin{lemma}\label{lem:intermed-subg-trunc}
Let $X$ be a subgaussian random variable with variance proxy $\sigma^2$. Then for every $a \geq 1$ and $\tau>0$, 
\begin{align}
\Exp{\abs{X}^a \1(\abs{X}> \tau)} \leq 2\sqrt{2\pi} \cdot a \cdot \sigma^a \cdot M_{a-1}\lrp{\frac{\tau}{\sigma}}
\end{align} 
where 
\begin{align}
M_0(\tau) &\leq \frac{1}{\sqrt{2\pi}}\frac{1}{\tau} \exp\lrp{-\tau^2/2},\\
M_1(\tau) &= \frac{1}{\sqrt{2\pi}} \exp(-\tau^2/2) \xpln{and}\\
M_p(\tau) &= (p-1) M_{p-2}(\tau) + \frac{1}{\sqrt{2\pi}} \tau^{p-1} \exp \lrp{-\tau^2/2} \xpln{for $p\geq 2$.}
\end{align}
\end{lemma}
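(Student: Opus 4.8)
The plan is to reduce the truncated moment of the general subgaussian variable $X$ to a truncated moment of a standard normal, for which the explicit recursion of Lemma \ref{lem:trunc-gauss-moment} is available. First I would pass from the expectation to an integral of the survival function $G(t):=\Pr{\abs{X}>t}$ via the layer-cake (integration-by-parts) identity
\begin{align}
\Exp{\abs{X}^a \1(\abs{X}>\tau)} = \tau^a G(\tau) + a\int_\tau^\infty t^{a-1} G(t)\,\mathrm{d}t,
\end{align}
which is valid for $a\geq 1$ since $t^a G(t)\to 0$ as $t\to\infty$ (the subgaussian tail decays faster than any monomial grows, so the boundary term at infinity vanishes).

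Next I would invoke the defining subgaussian tail bound $G(t)\leq 2\exp(-t^2/(2\sigma^2))$ to control both terms, and then rescale by the substitution $u=t/\sigma$ so that the surviving integral becomes exactly a standard-normal truncated moment. Concretely, after the substitution one recognizes the normalized integrand as the density-weighted monomial defining $M_{a-1}$, giving
\begin{align}
a\int_\tau^\infty t^{a-1}\exp\lrp{-\tfrac{t^2}{2\sigma^2}}\,\mathrm{d}t = a\,\sigma^a\sqrt{2\pi}\,M_{a-1}\lrp{\tfrac{\tau}{\sigma}}.
\end{align}
This already produces the target expression $2\sqrt{2\pi}\,a\,\sigma^a M_{a-1}(\tau/\sigma)$ from the integral term alone, so the entire difficulty is concentrated in the boundary term.

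The crux, and the step I expect to require the most care, is the boundary contribution $\tau^a G(\tau)\leq 2\tau^a\exp(-\tau^2/(2\sigma^2)) = 2\sigma^a (\tau/\sigma)^a \exp(-(\tau/\sigma)^2/2)$. Rather than bounding this by brute force, I would feed it back into the moment hierarchy using the recursion of Lemma \ref{lem:trunc-gauss-moment} in the form $M_{a+1}(s)=a\,M_{a-1}(s)+\frac{1}{\sqrt{2\pi}}s^a\exp(-s^2/2)$, which lets me rewrite the monomial-times-Gaussian factor $s^a\exp(-s^2/2)$ exactly in terms of $M_{a-1}(s)$ and $M_{a+1}(s)$. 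The delicate accounting is then to merge this with the integral term and track the universal constant so as to land on the claimed $2\sqrt{2\pi}$ and the factor $a$: this is precisely where the tightness of the stated bound is decided, since a crude estimate of the boundary term would leave behind a spurious extra factor of $(\tau/\sigma)^2$ (equivalently, would only yield a bound in terms of $M_{a+1}$ rather than $M_{a-1}$).

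Finally, I would assemble the two pieces and read off the inequality, carrying along the explicit forms of $M_0,M_1$ and the recursion for $M_p\ (p\geq 2)$ supplied by Lemma \ref{lem:trunc-gauss-moment} in the statement, so that the downstream corollary can expand $M_{a-1}(\tau/\sigma)$ into a polynomial in $\tau/\sigma$ multiplied by $\exp(-\tau^2/(2\sigma^2))$ for the specific values $a\in\{2,4,6\}$ needed in the moment concentration arguments.
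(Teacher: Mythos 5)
Your reduction is sound and, up to the boundary term, it reproduces the paper's own argument: the paper writes $\Exp{\abs{X}^a\ind{\abs{X}>\tau}} = \int_{\tau^a}^\infty \Pr{\abs{X}^a>t}\,dt$, bounds the survival function by the subgaussian tail, substitutes $u=t^{1/a}$, and reads off $2\sqrt{2\pi}\,a\,\sigma^a M_{a-1}(\tau/\sigma)$ --- exactly your ``integral term.'' The genuine gap in your proposal is the step you yourself flag as the crux: the ``delicate accounting'' that is supposed to absorb the boundary term $\tau^a G(\tau)$, where $G(t)\coloneq\Pr{\abs{X}>t}$, and still land on the claimed constant cannot be carried out. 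After applying the tail bound, the integral term alone already \emph{equals} the claimed right-hand side, and the boundary term is nonnegative, so any correct treatment of it must produce something strictly larger whenever $\Pr{\abs{X}>\tau}>0$. What your recursion trick actually yields is
\begin{align}
\tau^a G(\tau) + a\int_\tau^\infty t^{a-1}G(t)\,dt
&\leq 2\sigma^a\lrp{\frac{\tau}{\sigma}}^a e^{-\tau^2/(2\sigma^2)} + 2\sqrt{2\pi}\,a\,\sigma^a M_{a-1}\lrp{\frac{\tau}{\sigma}} \\
&= 2\sqrt{2\pi}\,\sigma^a M_{a+1}\lrp{\frac{\tau}{\sigma}},
\end{align}
i.e.\ the bound with index $a+1$, which carries precisely the extra factor of order $(\tau/\sigma)^2$ that you hoped to eliminate.

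This is not a bookkeeping failure you can repair, because the inequality as stated in the lemma is false. Take $X\sim\Nscr(0,1)$, which has variance proxy $\sigma=1$ under the paper's tail definition, with $a=1$ and $\tau=3$: the left side is $2M_1(3)=\sqrt{2/\pi}\,e^{-9/2}\approx 8.9\times 10^{-3}$, while the claimed right side is $2\sqrt{2\pi}\,M_0(3)\approx 6.8\times 10^{-3}$ (and still only $\approx 7.4\times 10^{-3}$ if $M_0$ is replaced by its Mill's-ratio upper bound). More generally, for a Gaussian the truncated moment grows like $\tau^{a-1}e^{-\tau^2/2}$ while the claimed bound is of order $\tau^{a-2}e^{-\tau^2/2}$, so the claim fails for every fixed $a$ once $\tau$ is large. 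The paper's proof reaches the stated bound only because its first equality silently drops the boundary term $\tau^a\Pr{\abs{X}>\tau}$ from the layer-cake identity --- the very term you correctly retained. So your approach is the right one; its honest output is the $M_{a+1}$ bound displayed above, which is the corrected form of the lemma, and propagating it forward would add a term of relative order $(\tau/\sigma)^2$ to each estimate in Corollary \ref{cor:subg-trunc-moment}.
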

\begin{proof}
\begin{align}
\Exp{\abs{X}^a \ind{\abs{X}>\tau}} &= \int_{\tau^a}^{\infty} \Pr{\abs{X}^a > t} dt \\
&= \int_{\tau^a}^{\infty} \Pr{\abs{X} > t^{1/a}} dt \\
&\leq \int_{\tau^a}^\infty 2\exp\lrp{\frac{-t^{2/a}}{2\sigma^2}} dt \xpln{$X$ subgaussian} \\
&= \int_{\tau}^\infty 2 a u^{a-1} \exp\lrp{-\frac{u^2}{2\sigma^2}} du \xpln{change variables $u = t^{1/a}$} \\
&= 2a\sqrt{2\pi\sigma^2} \Exp{U^{a-1}\ind{U>\tau}} \xpln{where $U \sim \Nscr(0,\sigma^2)$} 
\end{align}

Define $M_p(\tau):= \Exp{Z^p \ind{Z>\tau}}$ where $Z \sim \Nscr(0,1)$, for $p \geq 1$ and for all $\tau\geq 0$. Lemma \ref{lem:trunc-gauss-moment} gives upper bounds or exact values for $M_p(\tau)$, for $p$ even and $p$ odd, respectively. 
We then have
\begin{align}
\Exp{\abs{X}^a \ind{\abs{X}>\tau}} \leq  2\sqrt{2\pi}a\sigma^a M_{a-1}\lrp{\frac{\tau}{\sigma}}.
\end{align}
\end{proof}

\begin{corollary}\label{cor:intermed-subg-trunc}
\begin{align}
M_2(\tau) &\leq \sqrt{\frac{1}{2\pi}}\exp(-\tau^2/2)\lrp{\frac{1}{\tau}+\tau} \\
M_3(\tau) &= \sqrt{\frac{1}{2\pi}} \exp(-\tau^2/2) \lrp{2+\tau^2} \\
M_4(\tau) &\leq  \sqrt{\frac{1}{2\pi}} \exp(-\tau^2/2) \lrp{\frac{3}{\tau} + 3\tau + \tau^3}\\
M_5(\tau) &=  \sqrt{\frac{1}{2\pi}} \exp(-\tau^2/2) \lrp{8 + 4\tau^2 + \tau^4}\\
M_6(\tau) &\leq  \sqrt{\frac{1}{2\pi}} \exp(-\tau^2/2) \lrp{\frac{15}{\tau} + 15\tau + 5\tau^3 + \tau^5}
\end{align}
\end{corollary}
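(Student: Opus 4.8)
The plan is to apply the recursion established in Lemma \ref{lem:trunc-gauss-moment} (and restated in Lemma \ref{lem:intermed-subg-trunc}) repeatedly, starting from the two base cases. Writing $\phi(\tau) := \sqrt{1/(2\pi)}\exp(-\tau^2/2)$ for compactness, the recursion reads $M_p(\tau) = (p-1)M_{p-2}(\tau) + \tau^{p-1}\phi(\tau)$ for $p \geq 2$, with the exact value $M_1(\tau) = \phi(\tau)$ and the Mill's-type bound $M_0(\tau) \leq \phi(\tau)/\tau$.

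The key structural observation is that the even- and odd-index moments form two independent chains under this recursion: $M_2, M_4, M_6$ descend from $M_0$, while $M_3, M_5$ descend from $M_1$. Since $M_1$ is known exactly, the odd-index moments come out as equalities, whereas the even-index moments inherit the upper bound from $M_0$ and are therefore inequalities. First I would unwind the even chain, computing $M_2 = M_0 + \tau\phi \leq \phi\lrp{1/\tau + \tau}$, then $M_4 = 3M_2 + \tau^3\phi \leq \phi\lrp{3/\tau + 3\tau + \tau^3}$, and finally $M_6 = 5M_4 + \tau^5\phi \leq \phi\lrp{15/\tau + 15\tau + 5\tau^3 + \tau^5}$. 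Then I would unwind the odd chain, computing $M_3 = 2M_1 + \tau^2\phi = \phi\lrp{2 + \tau^2}$ and $M_5 = 4M_3 + \tau^4\phi = \phi\lrp{8 + 4\tau^2 + \tau^4}$. Each step is a single substitution followed by collecting the common factor $\phi(\tau)$.

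There is no real obstacle here: the corollary is a bookkeeping consequence of the recursion. The only point requiring care is the consistent propagation of the inequality direction in the even chain --- each application multiplies the previous bound by the positive coefficient $(p-1)$ and adds the nonnegative term $\tau^{p-1}\phi(\tau)$, so the $\leq$ inherited from Mill's inequality for $M_0$ is preserved at every step, while the odd chain stays an exact equality throughout.
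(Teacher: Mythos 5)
Your proposal is correct and is exactly the intended derivation: the paper states this corollary without a separate proof precisely because it follows, as you show, by unwinding the recursion $M_p(\tau) = (p-1)M_{p-2}(\tau) + \tau^{p-1}\phi(\tau)$ of Lemma \ref{lem:trunc-gauss-moment} along the even chain (from the Mill's bound on $M_0$, giving inequalities) and the odd chain (from the exact $M_1$, giving equalities). Your bookkeeping of coefficients and of the inequality direction matches the stated bounds in every case.
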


\begin{corollary}\label{cor:subg-trunc-moment}
Let $X$ be a subgaussian random variable with variance proxy $\sigma^2$. Then: 
\begin{align}
\Exp{\abs{X}^1 \1(\abs{X}> \tau)} &\leq \frac{2\sigma^2}{\tau} \exp\lrp{-\frac{\tau^2}{2\sigma^2}} \\
\Exp{\abs{X}^2 \1(\abs{X}>\tau)} &\leq 4\sigma^2 \exp\lrp{-\frac{\tau^2}{2\sigma^2}} \\
\Exp{\abs{X}^3 \1(\abs{X}>\tau)} &\leq 6\lrp{\frac{\sigma^4}{\tau} + \sigma^2 \tau}\exp\lrp{-\frac{\tau^2}{2\sigma^2}}  \\
\Exp{\abs{X}^4 \1(\abs{X}>\tau)} &\leq 8 \lrp{2\sigma^4 + \sigma^2\tau^2}\exp\lrp{-\frac{\tau^2}{2\sigma^2}}  \\
\Exp{\abs{X}^6 \1(\abs{X}>\tau)} &\leq 12  \lrp{8\sigma^6 + 4\tau^2\sigma^4 + \tau^4\sigma^2} \exp\lrp{-\frac{\tau^2}{2\sigma^2}} 
\end{align}
\end{corollary}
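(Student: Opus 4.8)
The plan is to obtain each of the five bounds as a direct specialization of Lemma \ref{lem:intermed-subg-trunc}, which already does the essential analytic work: it reduces the truncated absolute moment of a subgaussian variable to a truncated \emph{standard} Gaussian moment. Concretely, for every $a \geq 1$ and $\tau > 0$,
\begin{align}
\Exp{\abs{X}^a \1(\abs{X} > \tau)} \leq 2\sqrt{2\pi}\, a\, \sigma^a\, M_{a-1}\lrp{\frac{\tau}{\sigma}}.
\end{align}
Hence the only remaining task is to substitute the appropriate closed form (or upper bound) for $M_{a-1}(\tau/\sigma)$ and simplify. For the exponents $a \in \set{1,2,3,4,6}$ appearing in the statement I would read off $M_0, M_1, M_2, M_3, M_5$ respectively, drawing the base cases $M_0, M_1$ from Lemma \ref{lem:trunc-gauss-moment} and the explicit expressions for $M_2, M_3, M_5$ from Corollary \ref{cor:intermed-subg-trunc}.

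First I would dispatch the two simplest cases. For $a=1$, plugging the Mills-ratio bound $M_0(\tau/\sigma) \leq \tfrac{1}{\sqrt{2\pi}}(\sigma/\tau)\exp(-\tau^2/(2\sigma^2))$ into the reduction and cancelling the $\sqrt{2\pi}$ factors yields $(2\sigma^2/\tau)\exp(-\tau^2/(2\sigma^2))$; for $a=2$ the exact value $M_1(\tau/\sigma) = \tfrac{1}{\sqrt{2\pi}}\exp(-\tau^2/(2\sigma^2))$ gives $4\sigma^2\exp(-\tau^2/(2\sigma^2))$. The same mechanism handles $a=3,4,6$: the leading factor $2\sqrt{2\pi}\,a$ combines with the $1/\sqrt{2\pi}$ prefactor carried inside each $M_{a-1}$ to produce exactly the integer constants $6, 8, 12$, while the external $\sigma^a$ converts every power of $\tau/\sigma$ in the polynomial part of $M_{a-1}$ into the stated monomial in $\sigma$ and $\tau$. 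For instance $a=4$ uses $M_3(\tau/\sigma) = \tfrac{1}{\sqrt{2\pi}}(2 + \tau^2/\sigma^2)\exp(-\tau^2/(2\sigma^2))$, giving $8(2\sigma^4 + \sigma^2\tau^2)\exp(-\tau^2/(2\sigma^2))$, and $a=6$ uses $M_5(\tau/\sigma) = \tfrac{1}{\sqrt{2\pi}}(8 + 4\tau^2/\sigma^2 + \tau^4/\sigma^4)\exp(-\tau^2/(2\sigma^2))$, giving the final displayed bound.

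There is no genuine analytic obstacle here; this is a bookkeeping corollary, and the only thing requiring care is the rescaling. I would keep track of two details. First, the argument of every Gaussian moment is $\tau/\sigma$ rather than $\tau$, so each monomial $(\tau/\sigma)^j$ inside $M_{a-1}$ must pick up the correct compensating power of $\sigma$ from the external $\sigma^a$ to land on a clean expression in $\sigma$ and $\tau$. Second, the even-order moments $M_0, M_2$ enter as \emph{upper bounds} (via Mills' inequality propagated through the recursion of Lemma \ref{lem:trunc-gauss-moment}), whereas $M_1, M_3, M_5$ are \emph{exact}, so I would verify that using inequalities where $M_0, M_2$ appear still preserves the direction of the overall bound. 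Once these are checked, collecting terms yields the five displayed inequalities.
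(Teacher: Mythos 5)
Your proposal is correct and is exactly the route the paper intends: Corollary \ref{cor:subg-trunc-moment} is stated without proof precisely because it follows by instantiating Lemma \ref{lem:intermed-subg-trunc} at $a \in \set{1,2,3,4,6}$ and substituting $M_0, M_1, M_2, M_3, M_5$ evaluated at $\tau/\sigma$ from Lemma \ref{lem:trunc-gauss-moment} and Corollary \ref{cor:intermed-subg-trunc}. Your constants and rescalings all check out (e.g., $2\sqrt{2\pi}\cdot 6 \cdot \sigma^6 \cdot \tfrac{1}{\sqrt{2\pi}}(8 + 4\tau^2/\sigma^2 + \tau^4/\sigma^4) = 12(8\sigma^6 + 4\tau^2\sigma^4 + \tau^4\sigma^2)$), and your observation that the inequality direction is preserved when the even-order $M_p$ enter as upper bounds is the right sanity check.
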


\subsection{Additional notes on implementation}\label{sec:simulation-details}
Simulations were implemented in \texttt{Matlab R2020b}. Some tensor operations were implemented using \texttt{tensor\_toolbox} \citep{tensortoolbox}. Empirically, tensor power iteration had better accuracy than other tensor decomposition methods such as the alternating least squares and orthogonalized alternating least squares estimators provided by \texttt{tensor\_toolbox}.

Because of the sensitivity of the tensor decomposition approach to the conditioning of the mixture parameters and regression covariates, as well as the high order sample complexity for estimating the moments in $M_2$ and $M_3$, the recovered mixture weights and Markov parameters after the dewhitening step can still be very noisy. We propose refining these estimates by solving a constrained linear equation with the estimated first order moment $\Mhat_1 = \sum_{i,t} u_{i,t} y_{i,t}$, which has better concentration properties than higher order moments, by finding the weights and parameters $\set{\phat_k, \beta_k}$ such that $\sum_{k=1}^{K} \phat_k \beta_k = \Mhat_1$ with $\sum_k \phat_k = 1$. Empirically, this slightly improves our estimates. It would be interesting future work to prove formal results on post-processing methods to improve the robustness of tensor decomposition to the conditioning of the mixture regression parameters. 
\end{document}